\keywords{descriptive complexity, canonization, modular decomposition, logarithmic space, polynomial time, fixed-point logic, permutation graphs}
\tikzstyle{every picture}=[>=stealth']
\tikzstyle{vertex}=[draw,circle,minimum size=4ex,inner sep=1pt]
\tikzstyle{vertex2}=[draw,rounded corners,inner sep=1pt,minimum size=3ex]
\tikzstyle{vertex3}=[draw,circle,minimum size=3ex,inner sep=1pt]
\tikzstyle{vertex4}=[draw,circle,line width=0.2mm,inner sep=0mm,minimum size=4mm]
\tikzstyle{dotvertex}=[fill,circle,inner sep=3pt]
\newcommand{\alias}[2]{%
  \expandafter\let\csname #1\expandafter\endcsname\csname #2\endcsname
  \expandafter\let\csname end#1\expandafter\endcsname\csname end#2\endcsname
}
\theoremstyle{defC}
\newtheorem{observationC}[thm]{Observation} 
\newenvironment{proofofclaim}
  {\proof}
{\endproof}
\newcommand{\CA}{\mathcal{A}}
\newcommand{\CC}{\mathcal{C}}
\newcommand{\CD}{\mathcal{D}}
\newcommand{\CG}{\mathcal{G}}
\newcommand{\CI}{\mathcal{I}}
\newcommand{\CK}{\mathcal{K}}
\newcommand{\CL}{\mathcal{L}}
\newcommand{\CP}{\mathcal{P}}
\newcommand{\N}{\mathbb{N}}
\newcommand{\tup}[1]{\bar{#1}}
\newcommand{\num}[2][]{\left\langle#2\right\rangle_{\! #1}}
\newcommand{\KKI}{\ensuremath{\CG_{\CK\CI}^*}}
\newcommand{\isdef}{\mathrel{\mathop:}=}
\newcommand{\set}[1]{\{#1\}}
\newcommand{\card}[1]{\lvert{#1}\rvert}
\newcommand{\dcup}{\mathop{ \dot{\cup}}}
\newcommand{\edgewedge}{\ensuremath{\scalebox{0.7}[1.25]{$\,\wedge\,$}}}
\newcommand{\nb}{\ensuremath{\#}}
\newcommand{\modout}{\ensuremath{/\!}}
\newcommand{\Dom}{\textup{Dom}}
\newcommand{\dom}{\textup{dom}}
\newcommand{\suit}{\textup{Suit}}
\newcommand{\numb}{\textup{nb}}
\newcommand{\raute}{{\textup{\begin{tiny}\#\end{tiny}}}}
\newcommand{\ar}{\operatorname{ar}}
\newcommand{\free}{\operatorname{free}}
\newcommand{\Domain}[2]{#1^{#2}}
\newcommand{\rep}{g_{\text{rep}}}
\newcommand{\graph}{g_{\text{graph}}}
\newcommand{\logic}[1]{\textup{\small\textsf{#1}}}
\newcommand{\Logic}{\logic{L}}
\newcommand{\FO}{\logic{FO}}
\newcommand{\FOC}{\logic{FO{+}C}}
\newcommand{\DTCC}{\logic{DTC{+}C}}
\newcommand{\STC}{\logic{STC}}
\newcommand{\STCC}{\logic{STC{+}C}}
\newcommand{\plusC}{\text{(}\logic{{+}C}\text{)}}
\newcommand{\FP}{\logic{FP}}
\newcommand{\FPC}{\logic{FP{+}C}}
\newcommand{\logicf}[1]{\textup{\scriptsize\textsf{#1}}}
\newcommand{\IFPf}{\logicf{IFP}}
\newcommand{\IFPCf}{\logicf{IFP{$+$}C}}
\newcommand{\LFPf}{\logicf{LFP}}
\newcommand{\FPf}{\logicf{FP}}
\newcommand{\FPCf}{\logicf{FP{$+$}C}}
\newcommand{\stc}{\operatorname{stc}}
\newcommand{\ifp}{\operatorname{ifp}}
\newcommand{\stcx}[3]{[\stc_{\,{#1},{#2}}{#3}]}
\newcommand{\ifpx}[3]{[\ifp_{\,{#1},{#2}}{#3}]}
\newcommand{\zero}{\operatorname{zero}}
\newcommand{\one}{\operatorname{one}}
\newcommand{\ord}{\operatorname{largest}}
\newcommand{\plus}{\operatorname{plus}}
\newcommand{\limplies}{\rightarrow}
\newcommand{\true}{\top}
\newcommand{\false}{\bot}
\newcommand{\cclassname}[1]{\textup{\ensuremath{\mathsf{#1}}}}
\newcommand{\PTIME}{\cclassname{PTIME}}
\newcommand{\NP}{\cclassname{NP}}
\newcommand{\LOGSPACE}{\cclassname{LOGSPACE}}
\newcommand{\LO}{\text{LO}}
\newcommand{\superimpose}[2]{%
  {\ooalign{$#1\@firstoftwo#2$\cr\hfil$#1\@secondoftwo#2$\hfil\cr}}}
\newcommand{\mycdot}{%
  \mathchoice{\raisebox{1pt}{$\displaystyle\cdot$}}
             {\raisebox{1pt}{$\cdot$}}
             {\raisebox{0.6pt}{$\scriptstyle\cdot$}}
             {\raisebox{0.2pt}{$\scriptscriptstyle\cdot$}}}
\newcommand{\lin}{\mathpalette\superimpose{{\trianglelefteq}{\mkern4mu\mycdot}}}
\begin{document}

\title[Capturing Polynomial Time using Modular Decomposition]{Capturing Polynomial Time using Modular Decomposition}
\titlecomment{{\lsuper*}This article is an extended version of \cite{Grussien17}}

\author[B.~Gru{\ss}ien]{Berit Gru{\ss}ien}	
\address{Humboldt-Universität zu Berlin, Unter den Linden 6, 10099 Berlin, Germany}	
\email{grussien@informatik.hu-berlin.de}

\begin{abstract}
\noindent 
The question of whether there is a logic that captures polynomial time 
is one of the main open problems in descriptive complexity theory and database theory. 
In 2010 Grohe showed that fixed-point logic with counting captures polynomial time on all classes of graphs with excluded minors.
We now consider classes of graphs with excluded induced subgraphs. 
For such graph classes, an effective graph decomposition, called modular decomposition, was introduced by Gallai in 1976.
The graphs that are non-decomposable with respect to modular decomposition are called \emph{prime}.
We present a tool, the Modular Decomposition Theorem, that 
reduces (definable) canonization of a graph class~$\CC$ to
(definable) canonization of the class of prime graphs of~$\CC$ 
that are colored with binary relations on a linearly ordered set.
By an application of the Modular Decomposition Theorem,
we show that fixed-point logic with counting captures polynomial time on the class of permutation graphs.
Within the proof of the Modular Decomposition Theorem, 
we show that the modular decomposition of a graph is definable in symmetric transitive closure logic with counting.
We obtain that the modular decomposition tree is computable in logarithmic space.
It follows that cograph recognition and cograph canonization is computable in logarithmic space.
\end{abstract}

\maketitle

\section{Introduction}\label{sec:intro}
    
The aim of descriptive complexity theory is to find logics that characterize, or \emph{capture}, complexity classes.
The first result in this field was made by Fagin in 1974~\cite{fag74}.
He showed that existential sec\-ond-or\-der logic captures the complexity class $\NP$.
One of the most interesting open problems in descriptive complexity theory is the question of whether there exists a logic that captures $\PTIME$.%
\footnote{ Originally, this question was asked by Chandra and Harel in a study of database query languages~\cite{chahar82}.}

Independently of each other, Immerman~\cite{imm86} and Vardi~\cite{var82} 
obtained an early result towards a logical characterization for $\PTIME$. 
They proved that fixed-point logic ($\FP$) captures $\PTIME$ on ordered structures,\footnote{
More precisely, Immerman and Vardi's theorem holds for least fixed-point logic ($\LFPf$) and the equally expressive
  inflationary fixed-point logic ($\IFPf$). Our indeterminate \FPf\ refers to either of these two logics.
} that is, 
on structures where a linear order is present. 
On structures that are not necessarily ordered, it is easy to prove that $\FP$ does not capture $\PTIME$. 
In order to obtain a candidate for a logic capturing $\PTIME$ on all structures, Immerman proposed in 1987 to add to $\FP$ the ability to count \cite{imm87a}.
Although the resulting logic, fixed-point logic with counting ($\FPC$), is not strong enough to capture $\PTIME$ on all finite structures~\cite{caifurimm92},
it does so on many interesting classes of structures:
$\FPC$ captures $\PTIME$, for example, on 
planar graphs~\cite{Grohe98planar}, 
all classes of graphs of bounded treewidth~\cite{GroheM99boundedtreewidth}
and on $K_5$-mi\-nor free graphs~\cite{Grohe08a}.
Note that all these classes can be defined by a list of forbidden minors.
In fact, Grohe showed in 2010 that $\FPC$ captures $\PTIME$ on all graph classes with excluded minors~\cite{gro10}.
This leads to the question whether a similar result can be obtained for graph classes that are
characterized by excluded induced subgraphs, i.e., graph classes that are closed under taking induced subgraphs.
For $\FPC$ such a general result is not possible:
Capturing $\PTIME$ on the class of chordal graphs, comparability graphs or co-com\-pa\-ra\-bil\-i\-ty graphs
is as hard as capturing $\PTIME$ on the class of all graphs for any ``reasonable'' logic~\cite{Grohe10linegraphs,laubner11diss}.
Yet, this gives us reason to consider the three mentioned graph classes and their subclasses more closely.
So far, there are results showing that $\FPC$ captures $\PTIME$ on the class of chordal line graphs~\cite{Grohe10linegraphs}
and on the class of interval graphs (chordal co-com\-pa\-ra\-bil\-i\-ty graphs)~\cite{Laubner10}.

We add to these results the following results:
$\FPC$ captures $\PTIME$ on the class of permutation graphs (comparability co-com\-pa\-ra\-bil\-i\-ty graphs) (see Section~\ref{sec:permutationgraphs})
and on the class of chordal comparability graphs~(see \cite{diss}).
Both results are based on modular decomposition (also called substitution decomposition), 
a graph decomposition which was introduced by Gallai in 1976~\cite{gallai67transitiv}. 
Similar to treelike decomposition for classes with forbidden minors, 
modular decomposition is a suitable efficient graph decomposition for classes with forbidden induced subgraphs.

The modular decomposition of a graph partitions the vertex set
of the graph into so called modules, that is, into subsets that share the same neighbors. 
A graph is \textit{{prime}} if only the vertex set itself and all vertex sets of size $1$ are modules of the graph.
For every class $\CC$ of graphs that is closed under induced subgraphs,
we let $\CC^*_{\textup{prim}}$ be the class of all prime graphs from $\CC$ that are colored 
with binary relations on a linearly ordered set.
Our Modular Decomposition Theorem states that 
there is an $\FPC$-can\-on\-iza\-tion of $\CC$
if there is an $\FPC$-can\-on\-iza\-tion of the class $\CC^*_{\textup{prim}}$.
Note that the Modular Decomposition Theorem also holds for reasonable extensions of $\FPC$ that are closed under $\FPC$-trans\-duc\-tions.

The Modular Decomposition Theorem can be used for multiple purposes.
One reason for this is that the existence of an $\FPC$-can\-on\-iza\-tion of a graph class $\CC$ has various consequences.
It implies that $\FPC$ captures $\PTIME$ on class~$\CC$, the existence of a poly\-no\-mi\-al-time canonization algorithm for graph class $\CC$,
and that there is an easy-to-understand algorithm, the Weisfeiler-Leman Method, that solves the graph isomorphism problem on~$\CC$.
Further, the Modular Decomposition Theorem itself can be transferred to polynomial time:
There is a poly\-no\-mi\-al-time canonization algorithm for $\CC$ if there is a poly\-no\-mi\-al-time 
canonization algorithm for the class $\CC^*_{\textup{prim}}$.
We will present even more variations of the Modular Decomposition Theorem that might be helpful for future applications.

By means of the Modular Decomposition Theorem, we can not only show that the canonization of the class of permutation graphs and 
chordal comparability graphs is definable in $\FPC$, 
but simplify the proof of Laubner in \cite{Laubner10} that there is an $\FPC$-can\-on\-iza\-tion of the class of interval graphs.

Within the proof of the Modular Decomposition Theorem, we show that the modular decomposition of a graph is definable in symmetric transitive closure logic with counting.
As a consequence, the modular decomposition tree can be computed in logarithmic space.
Previously, it was only known that the modular decomposition tree is computable in linear time \cite{Cournier94,McConnell94}, or 
in polylogarithmic time with a linear number of processors \cite{Dahlhaus95}.%
\footnote{ For a survey of the algorithmic aspects of modular decomposition see \cite{HabibP10}.}
It follows directly that cograph recognition is in $\LOGSPACE$.
As there is a log\-a\-rith\-mic-space algorithm for tree canonization \cite{Lindell:Tree-Canon}, 
it also follows that there exists a log\-a\-rith\-mic-space algorithm for cograph canonization.

\subsection*{Structure}

After setting out the necessary preliminaries in Section~\ref{chp:prelims}, 
we introduce modular decomposition and show that it is $\STCC$-de\-fin\-able, and therefore $\LOGSPACE$-com\-putable, in Section~\ref{sec:ModDecompSTCC}.
In Section~\ref{sec:ModDecompThm}, we introduce the Modular Decomposition Theorem,  we prove it, 
and we present variations of it.
Finally, we apply (a variation of) the Modular Decomposition Theorem in Section~\ref{sec:permutationgraphs} 
and show that $\FPC$ captures $\PTIME$ on the class of permutation graphs.
We close with a few concluding remarks.

\section{Basic Definitions and Notation}\label{chp:prelims}

We write $\N$ for the set of all non-negative integers.
For all ${n,n' \in \N}$, we define $[n,n']:=\{m\in \N\mid n\leq m\leq n'\}$ and $[n]:=[1,n]$.
We often denote tuples $(a_1,\dots,a_k)$ by $\bar{a}$.
Given a tuple $\tup{a} = (a_1,\dotsc,a_k)$, let $\tilde{a} \isdef \set{a_1,\dotsc,a_k}$.
Let $n\geq 1$, and $\tup{a}^i = (a_1^i,\dotsc,a_{k_i}^i)$ be a tuple of length $k_i$ for each $i\in[n]$.
We denote the tuple $(a_1^1,\dotsc,a_{k_1}^1,\,\dots\, ,a_1^l,\dotsc,a_{k_l}^l)$ by $(\tup{a}^1,\dots,\tup{a}^l)$.
Mappings $f\colon A \to B$ are extended to tuples $\tup{a} = (a_1,\dotsc,a_k)$ over $A$
via $f(\tup{a}) \isdef (f(a_1),\dotsc,f(a_k))$.

For a set $S$, we let $\CP(S)$
be the set of all subsets of $S$ and ${S\choose 2}$ be the set of all 2-element subsets of $S$.
If $\CD$ is a set of sets, then we let $\bigcup \CD$ be the union of all sets in $\CD$.
The \emph{disjoint union} of two sets $S$ and $S'$ is denoted by $S\dcup S'$.
A \emph{partition} of a set $S$ is a set $\CD$ of disjoint non-empty subsets of $S$ such that $S =\bigcup \CD$.

\subsection{Relations and Orders}

The reflexive, symmetric, transitive closure of a binary relation $R$ on $U$ is called the equivalence relation \emph{generated} by $R$ on $U$.  
Let $\approx$ be an \emph{equivalence relation} on $U\!$.
For each $a\in U$, we denote the \emph{equivalence class} of $a$ by $a\modout_\approx$.
(We also use another notation, which we specify later.) 
We let $U\modout_{\approx}$ be the set of equivalence classes.
For a $k$-ary relation $R\subseteq U^k$ we let $R\modout_{\approx}$ be the set
$\{(a_1\modout_{\approx},\dots,a_k\modout_{\approx})\mid (a_1,\dots,a_k)\in R\}$.

A binary relation $\prec$ on a set $U$ is a \emph{strict partial order} if it is irreflexive and transitive.
We say  $a$ and $b$ are \emph{comparable} with respect to a strict partial order $\prec$ if 
$a\prec b$ or $b\prec a$; otherwise we call them \emph{incomparable}.
A strict partial order where no two elements $a,b$  with $a\not= b$ are incomparable is called a \emph{strict linear order}.
For each strict linear order $\prec$ there exists an associated reflexive relation 
$\preceq$, called a \emph{linear order}, which is defined by
$a\preceq b$ if and only if $a\prec b$ or~${a=b}$.
A~binary~relation $\preceq$ is a linear order if and only if it is  transitive, antisymmetric and total.

A \emph{strict weak order} is a strict partial order where incomparability  is transitive.
Moreover, in a strict weak order incomparability is an equivalence relation.
If $a$ and $a'$ are incomparable with respect to a strict weak order~$\prec$, then $a\prec b$ implies  $a'\!\prec b$, and $b\prec a$ implies $b\prec a'\!$.
As a consequence, if $\prec$ is a strict weak order on $U$ and $\sim$ is the equivalence relation defined by incomparability, 
then $\prec$ induces a strict linear order on the set $U\modout_\sim$ of equivalence classes.

\subsection{Graphs and LO-Colored Graphs}

A \emph{graph} is a pair $(V,E)$ consisting of a non-emp\-ty finite set $V$ of \emph{vertices}
and a set $E\subseteq {V\choose 2}$ of \emph{edges}. 

Let $G=(V,E)$ be a graph.
For a subset $W\subseteq V$ of vertices, $G[W]$ denotes the \emph{induced subgraph} 
of $G$ with vertex set $W\!$.
The \emph{complement graph} of $G$ is the graph $\overline{G}:=(V,\overline{E})$ where $\overline{E}={V\choose 2}\setminus E$.
\emph{Connectivity} and \emph{connected components} are defined the usual way.

Let $G=(V,E)$ be a graph and $f\colon V\to C$ be a mapping from the vertices of $G$ to a finite set $C$. 
Then $f$ is a \emph{coloring} of $G$, and the elements of $C$ are called \emph{colors}.
Throughout this paper we color the vertices of a graph with binary relations on a linearly ordered set.\footnote{
In particular, we  color graphs with representations of ordered copies of graphs on the number sort (defined in Section~\ref{sec:representation}).}
We call graphs with such a coloring \emph{\LO-col\-ored graphs}.
More precisely, an \LO-col\-ored graph is a tuple $G^*\!=(V,E,M,\trianglelefteq,L)$ with the following properties:
\begin{enumerate}
 \item The pair $(V,E)$ is a graph. We call $(V,E)$ the \emph{underlying graph} of~$G^*\!$.
 \item The set of \emph{basic color elements} $M$ is a non-emp\-ty finite set with $M\cap V=\emptyset$.
 \item The binary relation $\trianglelefteq\ \subseteq M^2$ is a linear order on $M$.
 \item The \emph{color relation} $L\subseteq V \times M^2$ is a ternary relation that assigns to each vertex $v \in V$ a color 
 ${L_v := \{(d,d')\mid (v,d,d')\in L\}}$.
\end{enumerate}

\noindent
Let $d_0,\dots,d_{|M|-1}$ be the enumeration of the basic color elements in $M$ according to their linear order 
$\trianglelefteq$.
We call ${L^{\N}_v:=\{(i,j)\hspace*{-1pt}\in\hspace*{-1pt}\N^2\hspace*{-1pt}\mid\hspace*{-1pt} (d_i,d_j)\hspace*{-1pt}\in\hspace*{-1pt} L_v\}}$ 
the \emph{natural color} of $v\in V\!$.

We can use the linear order $\trianglelefteq$ on $M$ to obtain a linear order 
on the colors $\{L_v\mid v \in V\}$ of $G^*\!$. 
Thus, an \LO-col\-ored graph is a special kind of colored graph with a linear order on its colors.

\subsection{Structures} 

A \emph{vocabulary} is a finite set $\tau$ of 
relation symbols. Each relation symbol $R\in\tau$ has a fixed arity $\ar(R)\in \N$.
A \emph{$\tau$-struc\-ture}
consists of a non-emp\-ty finite set $U(A)$, its \emph{universe}, 
and for each relation symbol $R\in \tau$ of a relation ${R(A)\subseteq U(A)^{\ar(R)}}$.

An \emph{isomorphism} between $\tau$-struc\-tures $A$ and $B$ is a bijection 
$f\colon U(A)\to U(B)$ such that for all $R\in \tau$ and all 
$\tup{a}\in U(A)^{\ar(R)}$ we have $\tup{a}\in R(A)$ if and only if $f(\tup{a})\in R(B)$.
We write $A\cong B$ to indicate that $A$ and $B$ are \emph{isomorphic}.

Let $E$ be a binary relation symbol.
Each graph corresponds to an $\{E\}$-struc\-ture $G=(V,E)$ where the universe $V$ is the vertex set and $E$ is an 
irreflexive and symmetric binary relation, the edge relation.
To represent an \LO-col\-ored graph ${G^*\!=(V,E,M,\trianglelefteq,L)}$ as a logical structure we extend the 
$5$-tu\-ple by a set $U$ to a $6$-tu\-ple $(U,V,E,M,\trianglelefteq,L)$, and we require that $U=V\dcup M$  additionally to properties 1-4. 
The set $U$ serves as the universe of the structure, and $V,E,M,\trianglelefteq,L$ are relations on $U$.
We usually do not distinguish between (\LO-col\-ored) graphs and their representation as logical structures.
It will be clear from the context which form we are referring~to.

\subsection{Logics}\label{sec:logiken}

In this section we introduce first-order logic with counting, symmetric transitive closure logic (with counting) and fixed-point logic (with counting).
Detailed introductions of these logics can be found, e.g., in \cite{ebbflu99,groheDC,imm87}.
We assume basic knowledge in logic, in particular of \emph{first-or\-der logic \textup{($\FO$)}}. 

\bigskip

\noindent
\emph{First-or\-der logic with counting \textup{($\FOC$)}}
extends $\FO$ by a counting operator
that allows for counting the cardinality of $\FOC$-de\-fin\-able relations.
It lives in a two-sorted context,
where structures $A$ are equipped with a \emph{number sort}
$N(A) \isdef [0,\card{U(A)}]$.
$\FOC$ has two types of \emph{individual variables}: 
$\FOC$-vari\-ables are either \emph{structure variables}
that range over the universe $U(A)$ of a structure~$A$,
or \emph{number variables} that range over the number sort $N(A)$.
For each individual variable $u$,
let $A^{u} \isdef U(A)$ if $u$ is a structure variable,
and $A^{u} \isdef N(A)$ if $u$ is a number variable.
Let ${A}^{(u_1,\dotsc,u_k)} := {A}^{u_1} \times \dotsb \times {A}^{u_k}$.
Tuples $(u_1,\dotsc,u_k)$ and $(v_1,\dotsc,v_\ell)$ of variables
are \emph{compatible} if $k = \ell$,
and for every $i \in [k]$ the variables $u_i$ and $v_i$ have the same type.
An \emph{assignment in $A$} is a mapping $\alpha$
from the set of variables to $U(A) \cup N(A)$,
where for each variable $u$ we have $\alpha(u) \in {A}^{u}$.
For tuples $\tup{u} = (u_1,\dotsc,u_k)$ of variables
and $\tup{a} = (a_1,\dotsc,a_k) \in {A}^{\tup{u}}$,
the assignment $\alpha[\tup{a}/\tup{u}]$
maps $u_i$ to $a_i$ for each $i \in [k]$,
and each variable $v \not \in \tilde{u}$ to $\alpha(v)$.
By $\varphi(u_1,\dotsc,u_k)$ we denote a formula $\varphi$
with $\free(\varphi) \subseteq \set{u_1,\dotsc,u_k}$, where $\free(\varphi)$ is the set of free variables in $\varphi$.
Given a formula $\varphi(u_1,\dotsc,u_k)$, a structure $A$
and $(a_1,\dotsc,a_k) \in A^{(u_1,\dotsc,u_k)}$,
we write $A \models \varphi[a_1,\dotsc,a_k]$ if $\varphi$ holds in $A$
with $u_i$ assigned to $a_i$ for each $i \in [k]$.
We write $\varphi[A,\alpha;\tup{u}]$
for the set of all tuples $\tup{a} \in \Domain{A}{\tup{u}}$
with $(A,\alpha[\tup{a}/\tup{u}]) \models \varphi$.
For a formula  $\varphi(\tup{u})$ (with $\free(\varphi) \subseteq \tilde{u}$)
we also denote $\varphi[A,\alpha;\tup{u}]$ by $\varphi[A;\tup{u}]$, and 
for a formula  $\varphi(\tup{v},\tup{u})$ 
and $\tup{a}\in A^{\tup{v}}\!$,
we denote
$\varphi[A,\alpha[\tup{a}/\tup{v}];\tup{u}]$ also by $\varphi[A,\tup{a};\tup{u}]$.

$\FOC$ is obtained by extending $\FO$ with the following
formula formation rules:
\begin{itemize}
 \item $\phi:= p \leq q$ is a formula if $p,q$ are number variables. 
 We let $\free(\phi):= \{p,q\}$.
 \item $\phi':= \# \tup{u}\,\psi = \tup{p}$ is a formula if $\psi$ is a formula, $\tup{u}$ is a tuple of individual variables and $\tup{p}$ a tuple of number variables.
 We let $\free(\phi'):= (\free(\psi) \setminus \tilde{u}) \cup \tilde{p}$.
\end{itemize}
To define the semantics, let $A$ be a structure and $\alpha$ be an assignment.
We let 
\begin{itemize}
 \item $(A,\alpha)\models p \leq q$ \,iff\, $\alpha(p)\leq \alpha(q)$, 
 \item $(A,\alpha)\models \# \tup{u}\,\psi = \tup{p}$ \,iff\,
 $|\psi[A,\alpha;\tup{u}]|=\num[A]{\alpha(\tup{p})}$,
\end{itemize}
where for tuples $\tup{n} = (n_1,\dotsc,n_k) \in N(A)^k$
we let $\num[A]{\tup{n}}$ be the number
\begin{align*}
  \num[A]{\tup{n}}\ \isdef\
  \sum_{i=1}^k\, n_i \cdot (\card{U(A)}+1)^{i-1}.
\end{align*}

\medskip

\noindent
\emph{Symmetric transitive closure logic (with counting)~$\STC\plusC$} 
is an extension of $\FO\plusC$ with $\stc$-op\-er\-a\-tors.
The set of all $\STC\plusC$-for\-mu\-las is obtained by extending 
the formula formation rules of $\FO\plusC$ by the following rule:
\begin{itemize}
 \item $\phi:=\stcx{\tup{u}}{\tup{v}}{\psi}(\tup{u}'\!,\tup{v}')$ is a formula if $\psi$ is a formula
 and $\tup{u},\tup{v},\tup{u}'\!,\tup{v}'$ are compatible tuples of structure (or number) variables.
 We let $\free(\phi):=\tilde{u}'\cup\tilde{v}'\cup\big(\free(\psi)\setminus(\tilde{u}\cup\tilde{v})\big)$.
\end{itemize}
Let $A$ be a structure and $\alpha$ be an assignment.
We let 
\begin{itemize}
 \item $(A,\alpha)\models \stcx{\tup{u}}{\tup{v}}{\psi}(\tup{u}'\!,\tup{v}')$ \,iff\,
$(\alpha(\tup{u}'),\alpha(\tup{v}'))$ 
is contained in the symmetric transitive
closure of $\psi[A,\alpha;\tup{u},\tup{v}]$.
\end{itemize}

\bigskip

\noindent
\emph{(Inflationary) fixed-point logic (with counting) $\FP\plusC$} 
is an extension of $\FO\plusC$ with atomic second order formulas and $\ifp$-op\-er\-a\-tors.
$\FP\plusC$ has a further type of variables: \emph{relational variables}.
A relational variable $X$ of arity $k$ ranges over relations $R\subseteq W_1\times\cdots\times W_k$
where $W_i=U(A)$ (or $W_i=N(A)$) for all $i\in[k]$.
We let $A^X:=\CP(W_1\times\cdots\times W_k)$.
We say a relational variable $X$ and a tuple $\tup{u}$ of individual variables are \emph{compatible} if $A^{\tup{u}}\in A^X\!$.
We extend the formula formation rules of $\FO\plusC$ by the following two rules:
\begin{itemize}
 \item $\phi:=X\tup{u}$ is a formula if $X$ is a relational variable and $\tup{u}$ is a tuple of structure (or number) variables such that $X$ and $\tup{u}$ are compatible.
 We let  $\free(\phi):=\tilde{u}\cup\{X\}$.
 \item $\phi':=\ifpx{X}{\tup{u}}{\psi}\tup{u}'$ is a formula if $\psi$ is a formula, 
 and $X$ is a relational variable, $\tup{u},\tup{u}'$ are tuples of structure (or number) variables such that
 $X,\tup{u},\tup{u}'$ are compatible.
 We let $\free(\phi'):=\tilde{u}'\cup\big(\free(\psi)\setminus(\tilde{u}\cup\{X\})\big)$.
\end{itemize}
Let $A$ be a structure and $\alpha$ be an assignment. We let
\begin{itemize}
 \item $(A,\alpha)\models X\tup{u}$ \,iff\, $\alpha(\tup{u})\in \alpha(X)$, 
 \item $(A,\alpha)\models \ifpx{X}{\tup{u}}{\psi}\tup{u}'$ \,iff\, $\alpha(\tup{u}')\in F_\infty$,
 \end{itemize}
where $F_\infty$ is defined as follows:
Let $F\colon A^X \to A^X$ be the mapping defined by 
$F(R):=R\cup \psi[A,\alpha[R/X];\tup{u}]$ for all $R\in A^X$.
We let $F_0:=\emptyset$ and $F_{i+1}:=F(F_i)$ for all $i\geq 0$. 
Let $m\geq0$ be such that $F_m=F_{m+1}$. 
Then $F_\infty:=F_m$.

We also use the property that \emph{simultaneous inflationary fixed-point logic}
has the same expressive power as $\FP$. 
For the syntax and semantics of this logic we refer the reader to \cite{groheDC} or \cite{ebbflu99}.

\bigskip

\noindent
For logics $\logic{L},\logic{L}'$ we write $\logic{L} \leq \logic{L}'$
if $\logic{L}$ is semantically contained in $\logic{L}'\!$.
We have $\STC\leq \FP$ and $\STCC\leq \FPC$.
Note that simple arithmetics like addition and multiplication are definable in $\STCC$.

\subsection{Transductions}\label{sec:transduction-allg}

Transductions (also known as \emph{syntactical interpretations})
define certain structures within other structures.
More on transductions can be found in \cite{groheDC,diss}.
In this section, we introduce transductions, consider compositions of tranductions,
and present the new notion of counting transductions.

In the following we introduce parameterized transductions for 
$\FPC$.
As parameter variables of these transductions, we allow individual variables as well as relational variables.
The domain variables are individual variables.

\begin{definition}[Parameterized $\FPC$-Transduction]\label{def:paratransduction}
  \label{def:interpr}
  Let $\tau_1,\tau_2$ be vocabularies.
  \begin{enumerate}
  \item
    A \emph{parameterized $\FPC[\tau_1,\tau_2]$-trans\-duc\-tion}  is a tuple
    \begin{align*}
     \qquad\quad\Theta(\tup{X}) \,=\,
      \Bigl(& 
        \theta_\dom(\tup{X}),
        \theta_U(\tup{X},\tup{u}),
        \theta_\approx(\tup{X},\tup{u},\tup{u}'),\bigl(
          \theta_R(\tup{X},\tup{u}_{R,1},\dots,\tup{u}_{R,{\ar(R)}})
        \bigr)_{R \in \tau_2}
      \Bigr)
    \end{align*}
    of $\FPC[\tau_1]$-formulas,
     where 
     $\tup{X}$ is a tuple of individual or relational variables, and
	$\tup{u},\tup{u}'$ and $\tup{u}_{R,i}$ for every ${R \in \tau_2}$ and $i \in [\ar(R)]$  
	are compatible tuples of individual~variables.
  \item
    The \emph{domain} of $\Theta(\tup{X})$ is the class $\Dom(\Theta(\tup{X}))$ of all 
    pairs $(A,\hspace{-1.5pt}\tup{P})$ such that 
    ${A\hspace{-1pt}\models \hspace{-1pt}\theta_\dom[\tup{P}]}$ und $\theta_U[A,\tup{P};\tup{u}]$ is not empty, 
    where $A$ is a $\tau_1$-struc\-ture and $\tup{P}\in A^{\tup{X}}\!$.
    The variables occurring in tuple $\tup{X}$ are called \emph{parameter variables}, and the ones occurring in $\tup{u}$ 
    are referred to as \emph{domain variables}. The elements in $\tup{P}$ are called \emph{parameters}.
	\item 
	Let $(A,\tup{P})$ be in the domain of $\Theta(\tup{X})$.
	We define a $\tau_2$-struc\-ture $\Theta[A,\tup{P}]$ as follows.  
	Let $\approx$ be the equivalence relation generated by $\theta_\approx[A,\tup{P};\tup{u},\tup{u}']$ on $A^{\tup{u}}$.
    We let
    \begin{align*}
      U(\Theta[A,\tup{P}]) \,:=\, \theta_U[A,\tup{P};\tup{u}]\modout_{\approx}
    \end{align*}
    be the universe of $\Theta[A,\tup{P}]$.
    Further, for each $R \in \tau_2$, we let 
    \begin{align*}
      \qquad R(\Theta[A,\tup{P}])\, :=\, \Big(\theta_R[A,\tup{P};\tup{u}_{R,1},\dots,\tup{u}_{R,{\ar(R)}}] \cap 
      \theta_U[A,\tup{P};\tup{u}]^{\ar(R)}\Big)\Big\modout_{\!\approx}.\\[-1em]
    \end{align*}
  \end{enumerate}
\end{definition}

\noindent
A parameterized $\FPC[\tau_1,\tau_2]$-trans\-duc\-tion  defines a parameterized mapping
from  $\tau_1$-struc\-tures into $\tau_2$-struc\-tures
via $\FPC[\tau_1]$-for\-mu\-las. 
If $\theta_{\dom}:=\true$ or $\theta_\approx:=\false$, we omit the respective formula in the presentation of the transduction.

A parameterized $\FPC[\tau_1,\tau_2]$-trans\-duc\-tion $\Theta(\tup{X})$ is an \emph{$\FPC[\tau_1,\tau_2]$-trans\-duc\-tion} if $\tup{X}$ is the empty tuple.
Let $\tup{X}$ be the empty tuple. 
For simplicity, we denote a trans\-duc\-tion $\Theta(\tup{X})$ by $\Theta$, 
and we write $A\in \Dom(\Theta)$ if $(A,\tup{X})$ is contained in the domain of~$\Theta$.

Let $\CC_1$ be a class of $\tau_1$-struc\-tures and $\CC_2$ be a class of $\tau_2$-struc\-tures.
We call a mapping $f$ from $\CC_1$ to $\CC_2$ \emph{$\FPC$-de\-fin\-a\-ble},
if there exists an $\FPC[\tau_1,\tau_2]$-trans\-duc\-tion $\Theta$ such that 
$\CC_1\subseteq\Dom(\Theta)$ and
for all $\tau_1$-struc\-tures $A\in \CC_1$ we have $f(A)=\Theta[A]$.

An important property of $\FPC[\tau_1,\tau_2]$-trans\-duc\-tions
is that, they allow to \emph{pull back} $\tau_2$-for\-mu\-las,
which means that for each $\tau_2$-for\-mu\-la there exists an $\tau_1$-for\-mu\-la that expresses essentially the same.
This property is the core of the Transduction Lemma.
A proof of the Transduction Lemma can be found in \cite{diss}.

\begin{proposition}[Transduction Lemma]  \label{prop:transduction-lemma}
  Let $\tau_1,\tau_2$ be vocabularies.
  Let $\Theta(\tup{X})$
  be a parameterized $\FPC[\tau_1,\tau_2]$-trans\-duc\-tion, where $\ell$-tu\-ple $\tup{u}$ is the tuple of domain variables.
  Further, let $\psi(x_1,\dotsc,x_\kappa,p_1,\dotsc,p_\lambda)$
  be an $\FPC[\tau_2]$-for\-mu\-la where 
  $x_1,\dotsc,x_\kappa$  are structure variables and $p_1,\dotsc,p_\lambda$ are number variables.
  Then there exists an $\FPC[\tau_1]$-for\-mu\-la
  $\psi^{-\Theta}(\tup{X}, \tup{u}_1,\dotsc,\tup{u}_\kappa,
      \tup{q}_1,\dotsc,\tup{q}_\lambda),$
  where 
  $\tup{u}_1,\dotsc,\tup{u}_\kappa$ are compatible with $\tup{u}$ and
  $\tup{q}_1,\dotsc,\tup{q}_\lambda$ are $\ell$-tu\-ples of number variables,
  such that for all $(A,\tup{P})\in \Dom(\Theta(\tup{X}))$,  
  all $\tup{a}_1,\dotsc,\tup{a}_\kappa \in A^{\tup{u}}$ and
  all $\tup{n}_1,\dotsc,\tup{n}_\lambda \in N(A)^\ell$,
  \begin{align*}
    A \models
    \psi^{-\Theta}[\tup{P},\tup{a}_1,\dots,\tup{a}_\kappa,
      \tup{n}_1,\dots,\tup{n}_\lambda]
    \iff\
    & {\tup{a}_1}\modout_{\approx},\dots,{\tup{a}_\kappa}\modout_{\approx}
      \in U(\Theta[A,\tup{P}]), \\
    & \!\num[A]{\tup{n}_1},\dots,\num[A]{\tup{n}_\lambda}
      \in N(\Theta[A,\tup{P}])\text{ and}\\
    & \Theta[A,\tup{P}] \models
      \psi\bigl[
       {\tup{a}_1}\modout_{\approx},\dots,{\tup{a}_\kappa}\modout_{\approx},
        \num[A]{\tup{n}_1},\dots,\num[A]{\tup{n}_\lambda}       
      \bigr],
  \end{align*}
  where $\approx$ is the equivalence relation generated by $\theta_\approx[A,\tup{P};\tup{u},\tup{u}']$ on $A^{\tup{u}}$.
\end{proposition}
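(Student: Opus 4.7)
The proof proceeds by induction on the structure of $\psi$. In the construction of $\psi^{-\Theta}$, each $\tau_2$-structure variable $x_i$ of $\psi$ is represented by the tuple $\tup{u}_i$ compatible with $\tup{u}$ (a representative of the $\approx$-class denoting the intended element of $U(\Theta[A,\tup{P}])$), and each $\tau_2$-number variable $p_j$ is represented by the $\ell$-tuple $\tup{q}_j$ of $\tau_1$-number variables whose value $\num[A]{\tup{q}_j}$ is the intended element of $N(\Theta[A,\tup{P}])$. The choice of length $\ell$ for $\tup{q}_j$ is consistent with the bound $\card{U(\Theta[A,\tup{P}])} \le (\card{U(A)}+1)^\ell$, which ensures that every value in $N(\Theta[A,\tup{P}])$ has such an encoding.

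In the atomic cases I would directly substitute the ingredients of $\Theta$: the equality $x_i = x_j$ pulls back to $\theta_\approx(\tup{u}_i,\tup{u}_j)$; each atom $R\tup{x}$ with $R \in \tau_2$ pulls back to $\theta_R(\tup{u}_{i_1},\dots,\tup{u}_{i_r})$ conjoined with $\bigwedge_j \theta_U(\tup{u}_{i_j})$; the number inequality $p_1 \le p_2$ pulls back to $\num[A]{\tup{q}_1} \le \num[A]{\tup{q}_2}$, which is $\FPC$-expressible since basic arithmetic on the encoded number sort is $\FPC$-definable. Boolean connectives are preserved verbatim, and first-order quantifiers are relativized to the universe and number-sort of $\Theta[A,\tup{P}]$: $\exists x_i\, \varphi$ becomes $\exists \tup{u}_i\,(\theta_U(\tup{u}_i) \wedge \varphi^{-\Theta})$, and $\exists p_j\, \varphi$ becomes $\exists \tup{q}_j\,(\num[A]{\tup{q}_j} \le \nu \wedge \varphi^{-\Theta})$, where $\nu$ is a previously constructed $\FPC$-term for $\card{U(\Theta[A,\tup{P}])}$ obtained by pulling back a universe-counting formula.

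For relational variables and the fixed-point operator, I would introduce for each $\tau_2$-relational variable $X$ a fresh $\tau_1$-relational variable $\hat{X}$ whose arity expands each structure coordinate of $X$ into a copy of the type pattern of $\tup{u}$, and each number coordinate into $\ell$ number coordinates. The atom $X\tup{y}$ pulls back to $\hat{X}$ applied to the correspondingly pulled-back tuple, and $\ifpx{X}{\tup{x}}{\varphi}\tup{x}'$ pulls back to the analogous $\ifp$-formula over $\hat{X}$ and $\varphi^{-\Theta}$, guarded by $\theta_U$ on each structure coordinate and the appropriate range condition on each number coordinate. A stage-by-stage induction on the inflationary iteration then shows that the $i$-th stage on the $\tau_1$-side, once restricted to the guards and closed under the product of $\approx$ with equality on the number coordinates, corresponds exactly to the $i$-th stage of the fixed-point computed inside $\Theta[A,\tup{P}]$; the required $\approx$-invariance of $\varphi^{-\Theta}$ is inherited through the inductive hypothesis, since the original $\varphi$ sees only the $\approx$-quotient.

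The delicate case, and the main obstacle of the entire proof, is the counting operator $\#\tup{x}\,\varphi = \tup{p}$. Its pull-back must express that the number of $\approx$-classes of tuples of the appropriate structure- and number-coordinate type, lying in the domain and satisfying $\varphi^{-\Theta}$, equals $\num[A]{\tup{q}}$. The nontrivial content is that $\FPC$ over $\tau_1$ can count $\FPC$-definable equivalence classes of an $\FPC$-definable relation. I would handle this using the $\FPC$-definability of addition, multiplication and integer division on the $\tau_1$-number sort, combined with a simultaneous inflationary fixed-point that aggregates contributions stratified by class size: a class of size $c$ contributes exactly $1$, recovered as the number of $\varphi^{-\Theta}$-tuples of class size $c$ divided by $c$, summed over all $c$. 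Once this counting-modulo-equivalence step is in place, the induction on $\psi$ closes routinely, and the resulting formula $\psi^{-\Theta}$ satisfies the stated equivalence.
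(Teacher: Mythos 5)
The paper defers the proof to \cite{diss}, so there is no in-text proof to compare against; your structural induction is, however, the standard route for a Transduction Lemma of this kind, and the overall architecture you describe --- representing $\tau_2$-elements by $\approx$-classes of $\tup{u}$-tuples, representing $\tau_2$-numbers by $\ell$-tuples of $\tau_1$-numbers via $\num[A]{\cdot}$, relativizing quantifiers, expanding relational variables, and isolating the counting operator as the technically substantive case, resolved by counting definable equivalence classes --- is what the cited source does. One imprecision you should fix: the atom $x_i = x_j$ pulls back not to $\theta_\approx(\tup{u}_i,\tup{u}_j)$ but to an $\FPC$-formula defining the equivalence relation \emph{generated} by $\theta_\approx$ (its reflexive symmetric transitive closure), since $\approx$ is generated, not given; this closure formula is also what you must use inside $\theta_R$ to make the guard $\approx$-invariant and in the inflationary-iteration argument, where you rely on $\approx$-closedness of each stage. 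You should also make explicit that the induction hypothesis is strengthened to formulas with free relational variables (the statement as given has none), since the $\ifp$-case needs it; you do this implicitly by introducing $\hat{X}$, but it deserves a sentence. Finally, in the counting case the value compared against is $\num[{\Theta[A,\tup{P}]}]{\cdot}$ applied to the decoded values $\num[A]{\tup{q}_j}$, not directly $\num[A]{\tup{q}}$; this re-encoding into the number sort of the interpreted structure is routine $\FPC$ arithmetic but worth stating precisely so the arities line up. With those repairs, the counting-modulo-equivalence argument via stratification by class size (a class of size $c$ contributes $N_c/c$, summed over $c$, using that $\approx$-invariance of $\varphi^{-\Theta}$ makes each $N_c$ a multiple of $c$) is correct and matches the technique referenced in the paper for counting definable equivalence classes.
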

\smallskip

\noindent
The following proposition shows that the composition of a parameterized transduction and a transduction
is again a parameterized transduction.

\begin{propC}[\cite{diss}]\label{prop:composition}
Let $\tau_1$, $\tau_2$ and $\tau_3$ be vocabularies. 
	Let $\Theta_1\big(\tup{X}\big)$ be a parameterized $\FPC[\tau_1,\tau_2]$-trans\-duc\-tion and 
	$\Theta_2$ be an $\FPC[\tau_2,\tau_3]$-trans\-duc\-tion.
	Then there exists a parameterized $\FPC[\tau_1,\hspace{-1pt}\tau_3]$-trans\-duc\-tion $\Theta\big(\hspace{-1pt}\tup{X}\big)$\hspace{-0.5pt}
	such that  
	for all $\tau_1$-struc\-tures $A$ and all~$\tup{P}\hspace{-2pt}\in\hspace{-2pt} A^{\tup{X}}$\!, 
	\begin{align*}
		\big(A,\tup{P}\big)\in \Dom\big(\Theta\big(\tup{X}\big)\big)\iff
		\big(A,\tup{P}\big)\in \Dom\big(\Theta_1\big(\tup{X}\big)\big) \text{ and }\;	
		\Theta_1\big[A,\tup{P}\big]\in \Dom\big(\Theta_2\big),
	\end{align*}
	and for all $\big(A,\tup{P}\big)\in \Dom\big(\Theta\big(\tup{X}\big)\big)$,
	\begin{align*}	
		\Theta\big[A,\tup{P}\big]\cong\Theta_2\big[\Theta_1\big[A,\tup{P}\big]\big].\\[-1em]
	\end{align*}
\end{propC}

\noindent
In the following we introduce the new notion of parameterized counting transductions.
The universe of the structure $\Theta^{\raute}[A,\tup{P}]$ defined by a parameterized counting transduction $\Theta^{\raute}(\tup{X})$
automatically includes the number sort $N(A)$ of $A$, for all structures $A$ and tuples~$\tup{P}$ of parameters 
from the domain of $\Theta^{\raute}(\tup{X})$.
Parameterized counting transductions are as powerful as parameterized transductions.
Presenting a parameterized counting transduction instead of a parameterized transduction will contribute to a clearer presentation.

\begin{definition}[Parameterized $\FPC$-Counting Transduction]\label{def:CountingTransduction}
\hspace{-1pt}Let $\tau_1,\hspace{-1pt}\tau_2$ be vocabularies. 
  \begin{enumerate}
  \item
    A  \emph{parameterized $\FPC[\tau_1,\tau_2]$-count\-ing transduction} is a tuple
    \begin{align*}
      \qquad\qquad\Theta^{\raute}(\tup{X}) \,=\,
      \Bigl( \theta^{\raute}_{\dom}(\tup{X}),
        \theta^{\raute}_U(\tup{X},\tup{u}),
        \theta^{\raute}_\approx(\tup{X},\tup{u},\tup{u}'),
        \bigl(
          \theta^{\raute}_R(\tup{X},\tup{u}_{R,1},\dots,\tup{u}_{R,{\ar(R)}})
        \bigr)_{R \in \tau_2}
      \Bigr)
    \end{align*}
    of $\FPC[\tau_1]$-formulas,
     where 
     $\tup{X}$ is a tuple of individual or relational  variables,
	$\tup{u},\tup{u}'$ are compatible tuples of individual variables but not tuples of number variables of length~$1$,
    and for every $R \in \tau_2$ and $i \in [\ar(R)]$,
    $\tup{u}_{R,i}$~is a tuple of variables
    that is compatible to $\tup{u}$ or a tuple of number variables of length $1$. 
\item
    The \emph{domain} of $\Theta^{\raute}(\tup{X})$ 
    is the class $\Dom(\Theta^{\raute}(\tup{X}))$ 
    of all pairs $(A,\hspace{-1pt}\tup{P})$ where $A$ is a $\tau_1$-struc\-ture, 
    $\tup{P}\in A^{\tup{X}}$ and $A\models \theta^{\raute}_{\dom}[\tup{P}]$.\vspace{2pt}
\item 
    Let $(A,\hspace{-1pt}\tup{P})$ be in the domain of $\Theta^{\raute}(\tup{X})$.
    We define a $\tau_2$-struc\-ture $\Theta^{\raute}[A,\hspace{-1pt}\tup{P}]$ as follows.
    Let $\approx$ be the equivalence relation generated by $\theta^{\raute}_\approx[A,\hspace{-1pt}\tup{P}\hspace{0.5pt};\tup{u},\tup{u}']$
    on the set $A^{\tup{u}}\dcup N(A)$.
    We let\vspace{-4pt}
    \begin{align*}
      U(\Theta^{\raute}[A,\hspace{-1pt}\tup{P}]) \,:=\, \big(\theta^{\raute}_U[A,\hspace{-1pt}\tup{P}\hspace{0.5pt};\tup{u}]\; 
      \dcup\ N(A)\big)\modout_{\approx}
    \end{align*}
    be the universe of $\Theta^{\raute}[A,\hspace{-1pt}\tup{P}]$.
    Further, for each $R \in \tau_2$, we let 
    \begin{align*}
    \qquad\qquad R(\Theta^{\raute}[A,\hspace{-1pt}\tup{P}])\,:=\, 
     \Big(
     \theta_R^{\raute}[A,\hspace{-1pt}\tup{P}\hspace{0.5pt};\tup{u}_{R,1},\dots,\tup{u}_{R,{\ar(R)}}] \cap 
     \big(\theta_U^{\raute}[A,\hspace{-1pt}\tup{P}\hspace{0.5pt};\tup{u}]\dcup N(A)\big)^{\hspace{-1pt}\ar(R)}
     \Big)\Big\modout_{\!\approx}.
\end{align*}
  \end{enumerate}
\end{definition}

\begin{propC}[\cite{diss}]\label{thm:CountingTransduction}
Let $\tau_1,\tau_2$ be vocabularies. 
 Let $\Theta^{\raute}(\bar{X})$ be a parameterized\hspace{1pt} \vspace{0.5pt}$\FPC[\tau_1,\tau_2]$-count\-ing transduction. 
 Then there exists a parameterized\hspace{1pt} $\FPC[\tau_1,\tau_2]$-trans\-duc\-tion $\Theta(\bar{X})$ such that  
 \begin{itemize}
 	\item $\Dom(\Theta(\tup{X}))=\Dom(\Theta^{\raute}(\tup{X}))$ and 
 	\item $\Theta[A,\tup{P}] \cong \Theta^{\raute}[A,\tup{P}]$ for all $(A,\tup{P})\in \Dom(\Theta(\tup{X}))$.
 \end{itemize}
\end{propC}

\subsection{Canonization}\label{sec:canonization}

In this section we introduce ordered structures, (definable) canonization and the capturing of $\PTIME$.
A more detailed introduction 
can be found in~\cite{groheDC} and \cite{ebbflu99}.

Let $\tau$ be a vocabulary with $\leq\ \not \in \tau$.
A $\tau\cup\{\leq\}$-struc\-ture $A'$ is \emph{ordered} if the relation symbol~$\leq$ is interpreted 
as a linear order on the universe of~$A'\!$.
Let $A$ be a $\tau$-struc\-ture. An ordered $\tau\cup\{\leq\}$-struc\-ture $(A'\!,\leq_{A'})$ is an \emph{ordered copy} of $A$
if $A'\cong A$.
Let $\CC$ be a class of $\tau$-struc\-tures. 
A mapping $f$ is a \emph{canonization mapping} of $\CC$ if
it assigns every structure $A\in \CC$ to an ordered copy $f(A)=(A_f,\leq_{A_f})$ of $A$
such that for all structures $A,B\in\CC$ we have $f(A)\cong f(B)$ if $A\cong B$.
We call the ordered structure $f(A)$ the \emph{canon} of $A$.

Let $\Theta(\bar{x})$ be a parameterized $\FPC[\tau,\tau\cup\{\leq\}]$-trans\-duc\-tion, where $\bar{x}$ is a tuple of individual variables.
We say $\Theta(\bar{x})$ \emph{canonizes} a $\tau$-struc\-ture $A$ 
if there exists a tuple $\bar{p}\in A^{\bar{x}}$ such  that ${(A,\bar{p})\in \Dom(\Theta(\bar{x}))}$,
and for all tuples $\bar{p}\in A^{\bar{x}}$ with $(A,\bar{p})\in \Dom(\Theta(\bar{x}))$, 
the $\tau\cup\{\leq\}$-struc\-ture $\Theta[A,\bar{p}]$ is an ordered copy of $A$.\footnote{
Note that if the tuple $\tup{x}$ of parameter variables is the empty tuple, $\FPCf[{\tau,\tau\cup\{\leq\}}]$-trans\-duc\-tion $\Theta$ 
canonizes a $\tau$-struc\-ture $A$ 
if ${A\in \Dom(\Theta)}$ and the ${\tau\cup\{\leq\}}$-struc\-ture $\Theta[A]$ is an ordered copy of~$A$.}
A \emph{(parameterized) $\FPC$-can\-on\-iza\-tion} of a class $\CC$ of $\tau$-struc\-tures 
is a (parameterized) $\FPC[\tau,\tau\cup\{\leq\}]$-trans\-duc\-tion that canonizes all~${A\in \CC}$. 
A class~$\CC$ of $\tau$-struc\-tures \emph{admits $\FPC$-de\-fin\-a\-ble (parameterized) canonization} 
if $\CC$ has a (parameterized) $\FPC$-can\-on\-iza\-tion.

The following lemma shows that parameters can be eliminated from $\FPC$-can\-on\-iza\-tions.
\begin{lemC}[{\cite[Lemma~3.3.18]{groheDC}}\footnote{
\cite[Lemma~3.3.18]{groheDC} is shown for $\IFPCf$. 
Note that Lemma~3.3.18 states that there exists an $\IFPCf$-can\-on\-iza\-tion of $\CC$ 
without parameters that is also \emph{normal}.
}]\label{lem:parametersInCanonizations}
Let $\CC$ be a class of $\tau$-struc\-tures.
If $\CC$ admits $\FPC$-de\-fin\-a\-ble parameterized canonization, then there exists 
an $\FPC$-can\-on\-iza\-tion of $\CC$ without parameters.
\end{lemC}

\noindent
We can use definable canonization of a 
graph class 
to prove that $\PTIME$ is captured on this graph class.
Let $\Logic$ be a logic and $\CC$ be a graph class.
$\Logic$ \emph{captures} $\PTIME$ \emph{on}~$\CC$
if for each class $\CA\subseteq \CC$,
there exists an $\Logic$-sen\-tence defining $\CA$ 
if and only if $\CA$ is $\PTIME$-de\-cid\-able.\footnote{
A precise definition of what it means that a logic \emph{(effectively strongly) captures} 
a complexity class can be found in~\cite[Chapter~11]{ebbflu99}. 
}
If $\Logic$ captures $\PTIME$ on the class of all graphs, 
then $\Logic$ \emph{captures} $\PTIME$ \cite[Theorem 11.2.6]{ebbflu99}.\pagebreak[1]
A fundamental result was shown by Immerman and Vardi:\footnote{
Immerman and Vardi proved this capturing result not only for the class of ordered graphs but for the class of ordered structures.}
\begin{thmC}[\cite{imm86,var82}] 
$\FP$ captures $\PTIME$ on the class of all ordered graphs.
\end{thmC}

\noindent
Let us suppose there exists a parameterized $\FPC$-can\-on\-iza\-tion of a graph class $\CC$.
Since $\FP$ captures $\PTIME$ on ordered graphs and
we can pull back each $\FP$-sen\-tence that defines a poly\-no\-mi\-al-time property on ordered graphs under this canonization,
the capturing result of Immerman and Vardi transfers from ordered structures 
to the class~$\CC$.
\begin{proposition} 
Let $\CC$ be a class of graphs.
If $\CC$ admits $\FPC$-de\-fin\-a\-ble parameterized canonization, 
then $\FPC$ captures $\PTIME$ on $\CC$.
\end{proposition}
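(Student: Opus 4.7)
The plan is to combine three ingredients: the elimination of parameters given by Lemma~\ref{lem:parametersInCanonizations}, the Immerman-Vardi capturing of $\PTIME$ by $\FP$ on ordered structures (stated just above), and the Transduction Lemma (Proposition~\ref{prop:transduction-lemma}) used to pull back an $\FP$-sentence from the ordered canons back to the original structures.

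First, I would invoke Lemma~\ref{lem:parametersInCanonizations} to replace the parameterized $\FPC$-canonization of $\CC$ by a parameter-free one. So I may assume there is an $\FPC[\tau,\tau\cup\{\leq\}]$-transduction $\Theta$ such that $\CC\subseteq\Dom(\Theta)$ and for every $A\in\CC$ the structure $\Theta[A]$ is an ordered copy of $A$.

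For the nontrivial direction of capturing, let $\CA\subseteq\CC$ be any $\PTIME$-decidable subclass. Since $\Theta$ is $\FPC$-definable, the map $A\mapsto \Theta[A]$ is computable in polynomial time, and by the canonization property it respects isomorphism. Hence the class of ordered copies $\CA' := \{\Theta[A]\mid A\in\CA\}$, viewed as a class of ordered $\tau\cup\{\leq\}$-structures closed under isomorphism, is still $\PTIME$-decidable: to test whether an ordered structure $B$ belongs to $\CA'$ one just needs some $A\in\CC$ with $\Theta[A]\cong B$, but on the canons of members of $\CC$ the test ``$\Theta[A]\in\CA'$'' coincides with ``$A\in\CA$''. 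By the Immerman-Vardi theorem there is an $\FP$-sentence $\psi$ (hence an $\FPC$-sentence) that defines $\CA'$ on the class of ordered $\tau\cup\{\leq\}$-structures. Applying the Transduction Lemma to $\psi$ and $\Theta$ yields an $\FPC[\tau]$-sentence $\psi^{-\Theta}$ such that $A\models\psi^{-\Theta}$ iff $\Theta[A]\models\psi$ for every $A\in\Dom(\Theta)$. Restricted to $\CC$, this sentence defines exactly $\CA$.

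The converse direction is the routine observation that every $\FPC$-definable class is $\PTIME$-decidable, since $\FPC\leq\PTIME$. I do not anticipate any genuinely difficult step; the one point requiring care is the passage from $\PTIME$-decidability of $\CA$ to $\PTIME$-decidability of $\CA'$, where one must use precisely the fact that a canonization mapping sends isomorphic inputs to isomorphic (hence equal-as-ordered-structures) canons, so that membership in $\CA'$ is well defined and computable from the preimage in $\CC$.
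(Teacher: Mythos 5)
Your proof follows essentially the same route as the paper's: invoke the Immerman--Vardi theorem on ordered structures and transfer it to $\CC$ by pulling back an $\FP$-sentence through the canonization transduction via the Transduction Lemma; the paper's proof is in fact only a brief sketch of exactly this argument. Your explicit preliminary appeal to Lemma~\ref{lem:parametersInCanonizations} to discard parameters is a reasonable tidying step (the paper leaves the parameters implicit). One small wrinkle worth noting: your class $\CA' = \{\Theta[A]\mid A\in\CA\}$ (even after closing under isomorphism) is a bit awkward to verify as $\PTIME$-decidable on \emph{all} ordered structures, since membership requires the given linear order to coincide with the canonical one; it is cleaner to instead take $\CA'' = \{B \text{ ordered} \mid \text{the } \tau\text{-reduct of } B \text{ is in } \CA\}$, which is manifestly $\PTIME$-decidable and isomorphism-closed, and then the pull-back argument goes through unchanged because $\Theta[A]\cong A$ and $\CA$ is isomorphism-closed. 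This does not change the structure of the argument.
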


\section{Defining the Modular Decomposition in \texorpdfstring{STC+C}{$\STCC$}}\label{sec:ModDecompSTCC}

In this section we show that the modular decomposition of a graph 
is definable in $\STCC$. 

First, we introduce modules and modular decomposition in this section.
In order to show that the modular decomposition is definable in $\STCC$,
we consider modules that are spanned by two vertices, that is,
modules that contain the two vertices and are minimal with this property.
We use the concept of edge classes introduced by Gallai in \cite{gallai67transitiv}
to show that these spanned modules are definable in $\STCC$.
Afterwards, we show how the spanned modules are related to the modules occurring in the modular decomposition.
We obtain that the modular decomposition is definable in $\STCC$.
Consequently, it is computable in logarithmic space~\cite{rei05}.
Thus, the modular decomposition tree is computable in logarithmic space.
We conclude that cograph recognition and cograph canonization is in logarithmic space.

We use that the modular decomposition is $\STCC$-de\-fin\-able (actually we only require $\FPC$-de\-fin\-able)
in order to prove the Modular Decomposition Theorem in Section~\ref{sec:ModDecompThm}.

\subsection{Modules and their Basic Properties}\label{sec:modulesproperties}
Let $G=(V,E)$ be a graph. 
A non-emp\-ty subset $M\subseteq V$ is a \emph{module} of a graph $G$ if for 
all vertices $v\in V\setminus M$ 
and all $w,w'\in M$ we have \vspace{-1mm}
\begin{align*}
\{v,w\}\in E \iff \{v,w'\}\in E.
\end{align*}
All vertex sets of size $1$ are modules.
We call them \emph{singleton modules}.
Further, the vertex set $V$ is a module. 
We also refer to the module $V$ and the singleton modules as \emph{trivial modules}.
The connected components of $G$ or of the complement graph $\overline{G}$ are modules as well (see Figure~\ref{fig:module1}).
The same holds for unions of connected components.
Figure~\ref{fig:module2} shows a further example of modules in a graph.
A module $M$ is a \emph{proper module} if $M\subset V\!$.
We call a graph \emph{prime} if all of its modules are trivial modules. 
The path $P_i$ with $i\geq 4$ vertices, e.g., is a prime graph.
Notice that if $M$ is a module of a graph $G$, then $M$ is also a module of $\overline{G}$.
Therefore, a graph $G$ is prime if and only if  $\overline{G}$ is prime.

\begin{figure}[hbtp]
\centering
\begin{subfigure}[b]{0.58\textwidth}
\hspace{-2.4cm}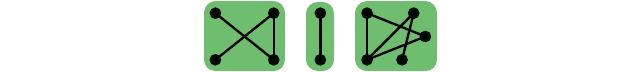\vspace{1mm}
\caption{The connected components of $G_1$ and $\overline{G_2}$ are modules. \phantom{aaaaaaaaaaaaaaaaa}}
\label{fig:module1}
\end{subfigure}
\hspace{0.6cm}%
\begin{subfigure}[b]{0.35\textwidth}
\hspace{-1.8cm}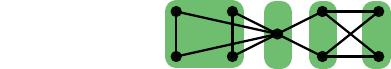\vspace{1mm}
\caption{The highlighted sets, for example, are modules of graph $G_3$.}
\label{fig:module2}
\end{subfigure}
\caption{Modules of Graphs}
\label{fig:module}
\end{figure}

The following observations contain fundamental but easily provable properties of modules.

\pagebreak[2]
\begin{observation}\label{obs:ModuleEdges}
	If $M_1$ and $M_2$ are modules of a graph $G$ with $M_1\cap M_2=\emptyset$, 
	then either there exist no edges between  vertices in $M_1$ and vertices in $M_2$, or
	every vertex in $M_1$ is adjacent to each vertex in $M_2$.
\end{observation}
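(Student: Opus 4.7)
The plan is to establish the dichotomy by applying the module property twice, once for each of $M_1$ and $M_2$. Specifically, I would argue the nontrivial direction: assuming that at least one edge exists between $M_1$ and $M_2$, derive that all possible edges between them are present. The other case is vacuous.

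So suppose there are vertices $w_1\in M_1$ and $w_2\in M_2$ with $\{w_1,w_2\}\in E$. First I would apply the module property of $M_1$: since $M_1\cap M_2=\emptyset$ we have $w_2\notin M_1$, and $w_2$ is adjacent to the element $w_1\in M_1$, so the defining property of the module $M_1$ forces $\{v_1,w_2\}\in E$ for every $v_1\in M_1$. Next, pick an arbitrary $v_2\in M_2$ and an arbitrary $v_1\in M_1$. Since $v_1\in M_1$ and $M_1\cap M_2=\emptyset$, we have $v_1\notin M_2$. By the previous step, $v_1$ is adjacent to the element $w_2\in M_2$, so the module property of $M_2$ yields $\{v_1,v_2\}\in E$. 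Since $v_1\in M_1$ and $v_2\in M_2$ were arbitrary, every vertex of $M_1$ is adjacent to every vertex of $M_2$.

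The argument has essentially no obstacle beyond bookkeeping: the only point that needs care is verifying that $w_2\notin M_1$ and $v_1\notin M_2$ so that the module definition is actually applicable, and both facts follow immediately from disjointness. The same reasoning (swapping the roles of $M_1$ and $M_2$) confirms the dichotomy is symmetric, as one would expect.
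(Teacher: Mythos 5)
Your proof is correct, and it is the standard argument; the paper presents this as an observation without proof, and your two applications of the module property (first propagating the edge from $w_1$ across all of $M_1$, then from $w_2$ across all of $M_2$) are exactly the "easily provable" reasoning the paper has in mind.
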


\begin{observation}\label{obs:ModuleIntersection}
	If $M_1$ and $M_2$ are modules of a graph $G$  with $M_1\cap M_2\not=\emptyset$, 
	then $M_1\cap M_2$ and $M_1\cup M_2$ are modules as well.
\end{observation}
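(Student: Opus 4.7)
The plan is to verify the module defining condition directly in both cases, using a bridge vertex from $M_1\cap M_2$ to handle the mixed subcase of the union.

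For $M_1\cap M_2$ I would take arbitrary $v\in V\setminus(M_1\cap M_2)$ and $w,w'\in M_1\cap M_2$. Since $v\notin M_1$ or $v\notin M_2$, and both $w,w'$ lie in whichever of $M_1,M_2$ does not contain $v$, the module property of that $M_i$ immediately gives $\{v,w\}\in E\iff\{v,w'\}\in E$. Non-emptiness of $M_1\cap M_2$ is given, so $M_1\cap M_2$ is a module.

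For $M_1\cup M_2$ I would take $v\in V\setminus(M_1\cup M_2)$ and $w,w'\in M_1\cup M_2$. If $w,w'$ happen to belong to the same $M_i$, applying the module property of $M_i$ (which does not contain $v$) finishes the case. The only interesting subcase is $w\in M_1\setminus M_2$ and $w'\in M_2\setminus M_1$ (or symmetrically); here I pick an arbitrary $x\in M_1\cap M_2$, which exists by hypothesis, and chain two equivalences: the module property of $M_1$ gives $\{v,w\}\in E\iff\{v,x\}\in E$, and the module property of $M_2$ gives $\{v,x\}\in E\iff\{v,w'\}\in E$.

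No step looks like a genuine obstacle; the statement is a textbook folklore fact. The only subtlety worth flagging is the mixed subcase of the union, where the non-emptiness assumption on $M_1\cap M_2$ is actually used (without it, $M_1\cup M_2$ need not be a module, as one sees already for two disjoint singletons of distinct adjacency types).
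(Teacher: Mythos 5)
Your proof is correct; the paper states this as one of several ``fundamental but easily provable'' observations and gives no proof, so there is nothing to compare against. Your argument is the standard one: the intersection case uses that $v$ lies outside at least one $M_i$ containing both $w,w'$, and the union case bridges through a vertex of $M_1\cap M_2$ in the mixed subcase, which is precisely where the non-emptiness hypothesis is needed (and you correctly flag this).
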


\begin{observation}\label{obs:ModuleInclusion}
Let $M'$ be a module of $G$, and let $M\hspace{-1pt}\subseteq\hspace{-1pt} M'\!$.
Then $M$ is a module of $G$ if and only if it is a module of $G[M']$.
\end{observation}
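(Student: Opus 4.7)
The plan is to unfold the definition of a module in both directions, using the hypothesis that $M'$ is a module of $G$ only in one direction. Fix $M \subseteq M'$ throughout; since $M \subseteq M' \subseteq V$ we have $M' \setminus M \subseteq V \setminus M$, which is the only set-theoretic inclusion the proof will repeatedly invoke.

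For the direction from left to right, I would assume $M$ is a module of $G$ and verify the module condition for $M$ inside $G[M']$. Given any $v \in M' \setminus M$ and $w, w' \in M$, the vertex $v$ lies in $V \setminus M$, so the module condition for $M$ in $G$ yields $\{v,w\} \in E \iff \{v,w'\} \in E$. Since all three vertices lie in $M'$, the equivalence transfers verbatim to the edge set of $G[M']$.

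For the converse, I would assume $M$ is a module of $G[M']$ and consider an arbitrary $v \in V \setminus M$ together with $w,w' \in M$. The argument splits into two cases according to whether $v \in M'$ or $v \notin M'$. In the first case, $v \in M' \setminus M$, so the hypothesis that $M$ is a module of $G[M']$ directly gives the required equivalence. In the second case, $v \in V \setminus M'$ and $w, w' \in M \subseteq M'$, so the equivalence follows from the fact that $M'$ itself is a module of $G$ -- this is the only place where the hypothesis on $M'$ is used.

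There is no real obstacle here; the argument is a pure definitional check. The one point that deserves attention is that the converse really does require $M'$ to be a module of $G$ in order to handle vertices outside $M'$, so I would make sure the case split in the second direction is spelled out explicitly to highlight where each hypothesis is used.
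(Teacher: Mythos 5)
Your argument is correct: the forward direction is a direct restriction of the module condition, and the converse splits cleanly on whether $v$ lies inside $M'$ (use that $M$ is a module of $G[M']$) or outside $M'$ (use that $M'$ is a module of $G$). The paper states this observation without proof, describing it as a fundamental, easily provable property, and your definitional check is exactly the intended verification.
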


\subsection{Modular Decomposition}\label{sec:moddecompdefinition}
In the following we present the modular decomposition of a graph, which was introduced by Gallai in 1967 \cite{gallai67transitiv}.
The modular decomposition decomposes a graph with at least two vertices into proper modules. It can be applied recursively.

Let $G=(V,E)$ be an arbitrary graph with $|V|>1$.
We let $n$ be the number of vertices in $G$.
If $G$ (or~$\overline{G}$) is not connected, then every connected component of $G$ (or~$\overline{G}$) is a module, and 
we can partition the vertex set of $G$ (or $\overline{G}$) into its connected components.
If $G$ and $\overline{G}$ are connected, then there also exists a unique partition of $V$ into proper modules.
Gallai showed that in this case the maximal proper modules of $G$ form a partition of $V$ 
(Satz 2.9 and 2.11 in~\cite{gallai67transitiv}).
Figure~\ref{fig:module2} depicts the maximal proper modules of a graph $G$ where $G$ and $\overline{G}$ are connected. 

Thus, we can canonically partition each graph $G$ with $n>1$ into proper modules.
For a vertex $v$ of graph $G$ we let $D_{G}(v)$ be the respective proper module containing $v$.
Hence, for a vertex $v$ of a graph $G=(V,E)$ with $|V|>1$, the set $D_{G}(v)$ is%
\footnote{
We can also say $D_{G}(v)$ is the maximal strong proper module of $G$ that contains $v$.
(A module~$M$ is \emph{strong}, if  $M\cap M'=\emptyset$, $M\subseteq M'$ or $M'\subseteq M$ for all other modules $M'\!$.)
Gallai proved that the maximal strong proper modules partition $V(G)$ (Satz 2.11 in \cite{gallai67transitiv}), 
and that for each graph $G$ they coincide with the sets $D_{G}(v)$ as they are defined here \cite[Satz 2.9 and 2.10]{gallai67transitiv}.}
\begin{itemize}
	\item the connected component of $G$ that contains $v$ if $G$ is not connected,
	\item the connected component of $\overline{G}$ that contains $v$ if $\overline{G}$ is not connected, or
	\item the maximal proper module of $G$ that contains $v$ if $G$ and $\overline{G}$ are connected.
\end{itemize}
If the graph $G$ has only one vertex $v$, we let $D_{G}(v):=\{v\}$.

We define the \emph{(recursive) modular decomposition}
of $G$ as the following family 
of subsets $D_{i,v}\subseteq V$ with $i\in[0,n]$, $v\in V$. We let $D_{0,v}:=V$ for all $v\in V$,
and for $i\in[0,n-1]$ we define $D_{i+1,v}$ for all $v\in V$ recursively: 
\[D_{i+1,v}:= D_{G[D_{i,v}]}(v).\]
As an example, a graph and its modular decomposition is illustrated in Figure~\ref{fig:modular-dec-tree-ex}.
\begin{figure}[htbp]
\centering
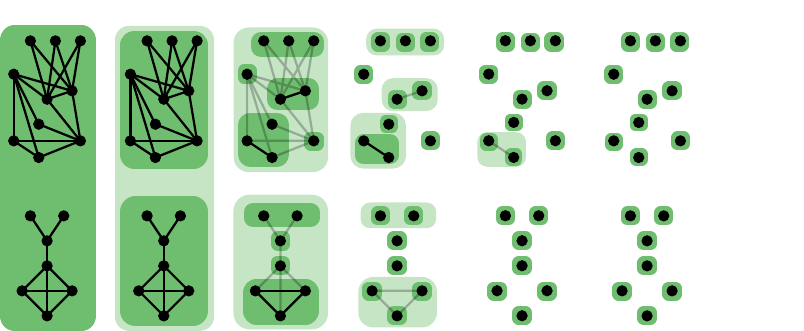
\caption{Modular decomposition of a graph}
\label{fig:modular-dec-tree-ex}
\end{figure}

It is easy to see that there exists a $k\in [0,n]$ such that
${V\hspace{-2pt}=\hspace{-1pt}D_{0,v}\supset D_{1,v}\supset\dots \supset D_{k,v}\hspace{-1pt}=\hspace{-1pt}\{v\}}$ and
that $D_{i,v}=\{v\}$ for all $i\geq k$.
Thus, $D_{n,v}=\{v\}$ for all $v\in V\!$.
For all $i\in [0,n]$ and all $v\in V$ the set $D_{i,v}$ is a module of $G$  as we can apply
Observation~\ref{obs:ModuleInclusion} inductively.
Further, an easy induction shows that the set $\{D_{i,v}\mid v\in V\}$ is a partition of the vertex set $V$ for all 
$i\in [0,n]$. Hence, we can conclude the following:
\begin{observation}
For all $v,w\in V$ and all $i\in [0,n]$, the modules $D_{i,v}$ and $D_{i,w}$ are equal if and only if $w\in D_{i,v}$.
\end{observation}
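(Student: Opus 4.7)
The plan is to derive the observation as an immediate consequence of the partition property stated in the sentence just before it, namely that $\{D_{i,v} \mid v \in V\}$ is a partition of $V$ for every $i \in [0,n]$. First I would verify the auxiliary fact that $v \in D_{i,v}$ for all $v \in V$ and all $i \in [0,n]$; this follows by a straightforward induction on $i$, using that $D_{0,v} = V \ni v$ and that $D_{i+1,v} = D_{G[D_{i,v}]}(v)$ is, by definition, either the connected component of $G[D_{i,v}]$ (or of its complement) containing $v$, or the maximal proper module of $G[D_{i,v}]$ containing $v$, or the singleton $\{v\}$ when $D_{i,v} = \{v\}$; in each case $v$ lies in the chosen set.

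Granted that $\{D_{i,v} \mid v \in V\}$ is a partition of $V$, the equivalence follows directly from the defining properties of a partition. For the ``if'' direction, suppose $w \in D_{i,v}$; since we also have $w \in D_{i,w}$, the blocks $D_{i,v}$ and $D_{i,w}$ both contain $w$, so by pairwise disjointness they must coincide. For the ``only if'' direction, $D_{i,v} = D_{i,w}$ together with $w \in D_{i,w}$ immediately yields $w \in D_{i,v}$.

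The only step that requires genuine argument is the partition claim itself, which the paper asserts as an easy induction on $i$. I would carry out that induction as follows: the base case $i = 0$ is trivial because $D_{0,v} = V$ for every $v$. For the inductive step, assume $\{D_{i,v} \mid v \in V\}$ partitions $V$ and fix one block $D_{i,v}$; by the Gallai-based definition of $D_{G'}(\cdot)$ applied to $G' = G[D_{i,v}]$, the sets $D_{G[D_{i,v}]}(w) = D_{i+1,w}$ for $w \in D_{i,v}$ form a partition of $D_{i,v}$ (using the singleton convention when $|D_{i,v}| = 1$). Refining each level-$i$ block in this manner produces the desired partition at level $i+1$. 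I do not anticipate any real obstacle, since the observation is essentially a reformulation of the partition property; the only mild technical care needed is to confirm $D_{i+1,v} \subseteq D_{i,v}$ throughout the recursion, which is immediate from the definition of $D_{G'}$.
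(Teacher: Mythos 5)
Your proof is correct and follows the approach the paper itself sketches: the observation is presented as an immediate corollary of the partition property of $\{D_{i,v}\mid v\in V\}$, and you supply exactly the missing pieces---the auxiliary fact $v\in D_{i,v}$ and the easy double implication from pairwise disjointness. The one place worth being explicit (though the logic is sound as structured) is that the equality $D_{G[D_{i,v}]}(w) = D_{i+1,w}$ for $w\in D_{i,v}$ used in the inductive step of the partition claim already relies on the observation at level $i$, which in turn is supplied by the inductive hypothesis (partition at level $i$) together with $Q_i$: $v\in D_{i,v}$; stating this dependency removes any appearance of circularity.
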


\subsection{Spanned Modules and (W)edge Classes}

Let $v,w\in V$ be vertices of $G$. 
We let \emph{$M_{v,w}$} be the intersection of all modules of $G$ that contain $v$ and $w$.
Since $V$ is a module, 
Observation~\ref{obs:ModuleIntersection} implies that $M_{v,w}$ is a module.
Consequently, $M_{v,w}$ is the smallest module containing $v$ and $w$.
We say the vertices $v,w\in V$ \emph{span} a module $M$ if $M=M_{v,w}$, 
and call $M$ a \emph{spanned module} if there exists $v,w\in V$ that span $M$.
Trivially, $M_{v,v}=\{v\}$.

Let $e,e'\!\in E$ be two edges of $G$. 
The edges $e$ and $e'$ \emph{form a wedge} in $G$ (we also write \emph{$e\edgewedge e'$}) if there exist three distinct vertices $u,v,w\in V$
such that $e=\{u,v\}$, $e'\!=\{u,w\}$ and there is no edge between $v$ and $w$.
Clearly, $e\edgewedge e'$ implies $e'\edgewedge e$.
We call 
$\edgewedge$ the \emph{wedge relation}
on $E$.
The edges $e$ and $e'$ are \emph{wedge connected} if there exists a $k\geq 1$ and a sequence of edges $e_1,\dots, e_k$,
such that $e=e_1$, $e'=e_k$ and $e_i\edgewedge e_{i+1}$ for all $1\leq i<k$.
It is not hard to see that wedge connectivity is an equivalence relation on the set of edges of the graph.
The equivalence classes are the \emph{edge classes}
of $G$.\footnote{ Edge classes (or Kantenklassen) are defined in \cite{gallai67transitiv}. 
We extend this definition to wedge classes.} 
Thus, the edge classes partition the set of edges of a graph.
Note that the edge classes of the complement graph~$\overline{G}$ of $G$ partition the set of edges of~$\overline{G}$,
and therefore, they partition the set ${V\choose 2}\setminus E$ of non-edges of $G$.
We define the \emph{wedge class} of $\{v,w\}\in{V\choose 2}$
as the edge class of $G$ that contains $\{v,w\}$ if $\{v,w\}$ is an edge of $G$,
or as the edge class of $\overline{G}$ that contains $\{v,w\}$ otherwise. 
For distinct vertices $v$ and $w$ we let \emph{$W_{v,w}$} be the union of all elements in the wedge class of $\{v,w\}$. 
Clearly, we have $v,w\in W_{v,w}$.

\begin{example}
    Consider the graph $H$ that is depicted in Figure~\ref{fig:examplewedge-a}.
  The edge classes of $H$ are illustrated in Figure~\ref{fig:examplewedge-b} and the 
  edge classes of $\overline{H}$ in Figure~\ref{fig:examplewedge-c}.
  Further, we have $W_{e,f}=\{d,e,f\}$ and $W_{b,f}= V(H)\setminus\{c\}$.
\end{example}
\begin{figure}[htbp]
\centering
\begin{subfigure}[b]{0.25\textwidth}
\hspace{0.2cm}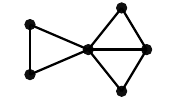\vspace{1mm}
\caption{Graph $H$ }
\label{fig:examplewedge-a}
\end{subfigure}
\hspace{0.6cm}%
\begin{subfigure}[b]{0.25\textwidth}
\hspace{0.3cm}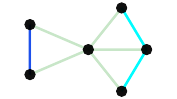\vspace{1mm}
\caption{Edge classes of $H$}
\label{fig:examplewedge-b}
\end{subfigure}
\hspace{0.6cm}%
\begin{subfigure}[b]{0.25\textwidth}
\hspace{0.3cm}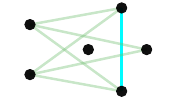\vspace{1mm}
\caption{Edge classes of $\overline{H}$}
\label{fig:examplewedge-c}
\end{subfigure}
\caption{}
\label{fig:examplewedge}
\end{figure}

\noindent
The following lemma follows directly from Satz 1.5 in \cite{gallai67transitiv} where the lemma is shown for all $v,w\in V$ with $\{v,w\}\in E$.
\begin{lemma}
Let $v,w\in V$ with $v\not=w$. Then $W_{v,w}\subseteq M_{v,w}$ and $W_{v,w}$ is a module.
\end{lemma}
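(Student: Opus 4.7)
The plan is to prove the lemma first in the case $\{v,w\}\in E$, and then deduce the case $\{v,w\}\notin E$ by passing to the complement. This reduction is legitimate because (as already noted in the excerpt) the modules of $G$ coincide with the modules of $\overline{G}$, so $M_{v,w}$ is the same in both graphs; moreover, by definition the wedge class of a non-edge $\{v,w\}$ in $G$ is precisely the edge class of the edge $\{v,w\}$ in $\overline{G}$, so $W_{v,w}$ computed in $G$ equals $W_{v,w}$ computed in $\overline{G}$. Hence I may assume $\{v,w\}\in E$, and the wedge class $K$ is an edge class of $G$.

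For the inclusion $W_{v,w}\subseteq M_{v,w}$, I would induct on the length of a wedge chain in $K$ starting from $\{v,w\}$ and show that every edge it reaches has both endpoints in $M_{v,w}$. The base case $\{v,w\}\in K$ is immediate. For the inductive step, suppose $e=\{a,b\}\in K$ satisfies $a,b\in M_{v,w}$ and $e\edgewedge e'=\{a,c\}$, where by definition of the wedge relation $\{b,c\}\notin E$. If $c\notin M_{v,w}$, then since $M_{v,w}$ is a module containing both $a$ and $b$, the outside vertex $c$ would have to be adjacent to all or none of $M_{v,w}$; but $\{a,c\}\in E$ and $\{b,c\}\notin E$ contradicts this. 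So $c\in M_{v,w}$.

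For the statement that $W_{v,w}$ is itself a module, fix any $x\notin W_{v,w}$ and any $y,z\in W_{v,w}$; I need to show $\{x,y\}\in E\iff \{x,z\}\in E$. The key observation is that for every edge $\{a,b\}\in K$, the vertex $x$ is adjacent to $a$ iff it is adjacent to $b$: otherwise, say $\{x,a\}\in E$ and $\{x,b\}\notin E$, one has $\{x,a\}\edgewedge\{a,b\}$, forcing $\{x,a\}\in K$ and $x\in W_{v,w}$, a contradiction. Since $y,z\in W_{v,w}=\bigcup K$, pick edges $e_y,e_z\in K$ containing $y$ and $z$. Wedge-connect them by a chain $e_y=e_1,\dots,e_k=e_z$ of elements of $K$; consecutive edges share a vertex by definition of the wedge relation, and by the key observation the adjacency of $x$ is constant on the endpoints of each $e_i$. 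Traversing the chain, $x$'s adjacency to $y$ agrees with its adjacency to $z$, so $W_{v,w}$ is a module.

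The only nontrivial step is the second one, and the main subtlety is the wedge-chain propagation argument: one must combine the ``constant on a single edge'' property with the shared-vertex property of consecutive wedge-connected edges to conclude constancy of adjacency along the entire chain. Everything else is a direct application of the definition of a module and of the complement duality between $G$ and $\overline{G}$.
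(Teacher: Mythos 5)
Your proof is correct. The paper does not actually prove this lemma in the text: it cites Gallai's Satz~1.5 for the case $\{v,w\}\in E$, and the general case is left to the complementation observation that you spell out (modules of $G$ and $\overline{G}$ coincide, so $M_{v,w}$ is unchanged, and by definition the wedge class of a non-edge of $G$ is the edge class of $\overline{G}$, so $W_{v,w}$ is unchanged too). Your reduction to the edge case thus matches the paper's implicit step, and your direct proof of the edge case supplies what the paper delegates to Gallai. Both halves are sound: for the inclusion, the induction along a wedge chain from $\{v,w\}$ correctly traps the third vertex of each new wedge inside $M_{v,w}$ using only the module property of $M_{v,w}$; and for the module property of $W_{v,w}$, your key observation that any $x\notin W_{v,w}$ must have constant adjacency on the endpoints of each edge of the wedge class (else $\{x,a\}\edgewedge\{a,b\}$ would force $\{x,a\}$, and hence $x$, into the class) propagates along the shared-vertex structure of a wedge chain linking any two edges of the class, which is exactly what is needed.
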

\begin{corollary}\label{cor:MvwgleichWvw}
 $M_{v,w}= W_{v,w}$ for all $v,w\in V$ with $v\not=w$.
\end{corollary}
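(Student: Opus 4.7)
The plan is a short, one-paragraph argument that simply combines the preceding lemma with the minimality of $M_{v,w}$. By the lemma, $W_{v,w}\subseteq M_{v,w}$, so I only need the reverse inclusion $M_{v,w}\subseteq W_{v,w}$. For this, I would recall that by definition $M_{v,w}$ is the intersection of all modules of $G$ containing both $v$ and $w$, i.e.\ the smallest such module. The lemma tells us two things about $W_{v,w}$: it is a module, and (by the observation that $v,w\in W_{v,w}$ made just before the lemma) it contains $v$ and $w$. Therefore $W_{v,w}$ is one of the sets in the intersection defining $M_{v,w}$, giving $M_{v,w}\subseteq W_{v,w}$. The two inclusions yield the claimed equality.

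There is no real obstacle here: the entire content of the corollary is already packaged into the preceding lemma together with the definition of $M_{v,w}$ as the smallest module containing $v$ and $w$. The only thing worth noting explicitly in the written proof is why $v,w\in W_{v,w}$, which is immediate from the fact that $\{v,w\}$ lies in its own wedge class (the wedge class of $\{v,w\}$ was defined to be the edge/non-edge class containing $\{v,w\}$, so $\{v,w\}$ is one of the pairs whose union forms $W_{v,w}$).
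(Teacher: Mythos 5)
Your argument is correct and is exactly the (implicit) argument behind the paper's Corollary: the preceding lemma gives $W_{v,w}\subseteq M_{v,w}$ and that $W_{v,w}$ is a module, while the remark just before the lemma that $v,w\in W_{v,w}$ plus the minimality of $M_{v,w}$ gives the reverse inclusion. The paper leaves this as an immediate corollary, and your proof fills in precisely those steps.
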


\noindent
In the following lemma we show that spanned modules are definable in symmetric transitive closure logic. 

\begin{lemma}\label{lem:spannedmodule-stc}
	There exists an $\STC$-for\-mu\-la $\varphi_M(x_1,x_2,y)$ 
	such that for all pairs $(v_1,v_2)\in V^2$ of vertices of $G$, the set
	$\varphi_M[G,v_1,v_2;y]$ is the module spanned by  $v_1$ and $v_2$.
\end{lemma}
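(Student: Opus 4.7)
The plan is to exploit Corollary~\ref{cor:MvwgleichWvw}, which equates $M_{v_1,v_2}$ with $W_{v_1,v_2}$ for distinct $v_1, v_2 \in V$. Since $M_{v,v} = \{v\}$, the case $v_1 = v_2$ is handled by a trivial disjunct $x_1 = x_2 \wedge y = x_1$. For distinct $v_1, v_2$, I would describe $W_{v_1,v_2}$ as the set of vertices occurring in some pair that is wedge-connected to $\{v_1, v_2\}$, and express wedge-connectivity on pairs using a symmetric transitive closure.

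First I would build a quantifier-free $\FO$-formula $\psi_{\text{wedge}}(x_1, x_2, u_1, u_2, u_1', u_2')$ asserting that the pairs $\{u_1, u_2\}$ and $\{u_1', u_2'\}$ form a wedge in $G$ (when $\{x_1,x_2\} \in E$) or a wedge in $\overline{G}$ (when $\{x_1,x_2\} \notin E$). Concretely, both pairs must be distinct (non-)edges in the appropriate graph, share exactly one vertex, and have their two unshared endpoints non-adjacent or adjacent accordingly. A four-way case split on which endpoint is shared ($u_1 = u_1'$, $u_1 = u_2'$, $u_2 = u_1'$, $u_2 = u_2'$), combined with a top-level case split on $E(x_1, x_2)$, turns $\psi_{\text{wedge}}$ into a Boolean combination of atomic formulas. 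By construction, $\psi_{\text{wedge}}$ is symmetric in $(u_1, u_2) \leftrightarrow (u_1', u_2')$.

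Applying the $\stc$-operator to the pair-tuples of $\psi_{\text{wedge}}$ yields a formula expressing wedge-connectivity between pairs (treating $x_1, x_2$ as parameters). I would then set
\begin{align*}
\varphi_M(x_1,x_2,y) \isdef\ & (x_1 = x_2 \wedge y = x_1) \;\vee\; \big(x_1 \neq x_2 \wedge \exists u_1 \exists u_2 \big[ (y = u_1 \vee y = u_2) \wedge\\
& \big( (u_1 = x_1 \wedge u_2 = x_2) \vee \stcx{(v_1,v_2)}{(v_1',v_2')}{\psi_{\text{wedge}}(x_1,x_2,v_1,v_2,v_1',v_2')}(x_1,x_2,u_1,u_2) \big) \big] \big).
\end{align*}
Correctness then follows from Corollary~\ref{cor:MvwgleichWvw} together with the definitions of wedge class and $W_{v_1,v_2}$: the $\stc$-set collects exactly the pairs wedge-connected to $(x_1,x_2)$, the disjunction $y = u_1 \vee y = u_2$ projects to vertices, and the explicit disjunct $u_1 = x_1 \wedge u_2 = x_2$ guarantees that $\{x_1, x_2\}$ is included even when it is an isolated (non-)edge that participates in no wedge (so that the symmetric transitive closure may not contain the reflexive pair $(x_1,x_2)$ itself); in that degenerate case the singleton wedge class gives $W_{x_1,x_2} = \{x_1, x_2\}$, as desired.

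The main obstacle I anticipate is the two-graph nature of the wedge class: wedge-connectivity for $\{v_1,v_2\}$ is taken in $G$ or in $\overline{G}$ depending on whether $\{v_1,v_2\}$ is an edge, so $\psi_{\text{wedge}}$ must branch on $E(x_1,x_2)$ and cannot be written as a single uniform ``wedge of $G$'' formula. Once this branching is built in correctly, the remaining ingredients --- symmetry of the wedge relation on pairs and the reduction of $M_{v_1,v_2}$ to $W_{v_1,v_2}$ --- fit naturally into $\STC$, and no counting operator is ever required.
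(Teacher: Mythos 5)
Your proposal is correct and follows essentially the same route as the paper's proof: reduce $M_{v_1,v_2}$ to $W_{v_1,v_2}$ via Corollary~\ref{cor:MvwgleichWvw}, express the (symmetric) wedge relation as an $\FO$-formula with a case split over which endpoints coincide and whether $\{x_1,x_2\}$ is an edge, take its symmetric transitive closure with the $\stc$-operator, and project to vertices, with a separate trivial disjunct for $v_1=v_2$. The only cosmetic difference is that the paper writes two wedge-connectivity formulas (one for $G$, one for $\overline{G}$) and combines them, whereas you fold the branching on $E(x_1,x_2)$ into a single $\psi_{\text{wedge}}$; you are also somewhat more explicit than the paper about the need for the extra disjunct that guarantees $\{x_1,x_2\}$ itself is included when the pair participates in no wedge.
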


\begin{proof}

Clearly, there exists an $\STC$-formula that defines the module spanned by two vertices if the vertices are equal.
In order to define the module $M_{v_1,v_2}$ spanned by two distinct vertices $v_1$ and $v_2$,
we apply Corollary~\ref{cor:MvwgleichWvw} and use the definition of $W_{v_1,v_2}$.

First of all, we need a formula 
for the wedge relation, that is, a formula
which is satisfied for vertices $w_1,w_2,w_1',w_2'\in V$
if, and only if, $\{w_1,w_2\}\edgewedge \{w_1',w_2'\}$ in $G$.
Clearly, this is precisely the case if there exist $i,j\in[2]$ such that 
\begin{align*}
&\text{$w_i=w_j'$, $w_{3-i}\not =w_{3-j}'$,\quad and}\\
&\text{$\{w_1,w_2\}\in E$, $\{w_1',w_2'\}\in E$, $\{w_{3-i},w_{3-j}'\}\not\in E$.}
\end{align*}
Thus, we obtain an $\FO$-for\-mu\-la for the wedge relation by taking the disjunction of the above statement over all $i,j\in[2]$.
Since the wedge relation is symmetric, we can use the $\STC$-op\-er\-a\-tor to express wedge connectivity.
Hence, there exists an $\STC$-for\-mu\-la that expresses wedge connectivity in $G$, and similarly we obtain one for wedge connectivity in $\overline{G}$, as well.
Using these formulas we are able to define the wedge classes of a graph. 
Consequently, we can also define the set $W_{v_1,v_2}$ for distinct vertices $v_1,v_2\in V$ in $\STC$.

Now, we can define $\varphi_M$ such that it distinguishes between the cases of whether the spanning vertices are equal or not 
and defines the spanned module accordingly.
\end{proof}

\subsection{Defining the Modular Decomposition in \texorpdfstring{STC+C }{$\STCC$}}\label{sec:definingMDinSTCC}

Let us fix a vertex $v\in V\!$.
In this section, our goal is to define the sets $D_{i,v}$ for $i\in[0,n]$ in $\STCC$. 
In order to do this, we show that each set $D_{i,v}$ can be constructed out of certain modules $M_{v,w}$ of $G$ with $w\in V\!$.

First, we take a look at two results (Lemma~\ref{lem:different-modules} and~\ref{lem:ModulesConnectedComponentsMvw}) of Gallai.
They will help us to gain a better understanding of the connection between $D_{i,v}$ and the sets $M_{v,w}$.

\begin{lemC}[{\cite[Satz 2.9 and 2.11 in connection with Satz 1.2 (3b)\footnote{
	In~\cite{gallai67transitiv}
	Gallai showed this lemma for the set $W_{v,w}$ instead of $M_{v,w}$.\label{ftn:GallaiWvw}}]{gallai67transitiv}}]\label{lem:different-modules}
	Suppose $G$ and $\overline{G}$ are connected and let $M'\!,M''$ be maximal proper modules of $G$ with $M'\not=M''\!$.
	Further let $v\in M'$ and $w\in M''\!$. Then $M_{v,w}=V\!$.
\end{lemC}

\begin{corollary}\label{cor:characterization-D_iv-conn}
	Let $i\in [0,n-1]$ and $v\in V\!$.
	If $G[D_{i,v}]$ and its complement are connected and $|D_{i,v}|>1$, then
	for all vertices $w\in D_{i,v}\setminus D_{i+1,v}$ we have
	$D_{i,v}=M_{v,w}$.\footnote{ The module $M_{v,w}$ always refers to the graph $G$.\label{ftn:Mvw}}
\end{corollary}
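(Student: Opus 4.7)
The plan is to reduce the statement to Lemma~\ref{lem:different-modules} applied to the induced subgraph $G[D_{i,v}]$, and then transport the resulting equality back to the whole graph $G$ via Observation~\ref{obs:ModuleInclusion}.

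First, I would unfold the definition of $D_{i+1,v}$. Under the hypotheses that $G[D_{i,v}]$ and its complement are connected and $|D_{i,v}|>1$, the set $D_{i+1,v}=D_{G[D_{i,v}]}(v)$ is precisely the unique maximal proper module of $G[D_{i,v}]$ containing $v$ (this is the third bullet in the definition of $D_G(v)$ from Section~\ref{sec:moddecompdefinition}). Since Gallai's result (Satz~2.9 and~2.11 of~\cite{gallai67transitiv}, cited in the text) ensures that the maximal proper modules of $G[D_{i,v}]$ form a partition of $D_{i,v}$, any $w\in D_{i,v}\setminus D_{i+1,v}$ lies in some maximal proper module $M''$ of $G[D_{i,v}]$ with $M''\neq D_{i+1,v}$.

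Next, I would apply Lemma~\ref{lem:different-modules} to the graph $G[D_{i,v}]$. Taking $M':=D_{i+1,v}$ together with the module $M''$ from the previous step, the lemma yields that the smallest module of $G[D_{i,v}]$ containing both $v$ and $w$ is $D_{i,v}$ itself.

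Finally, I would translate this equality from $G[D_{i,v}]$ to $G$. Because $D_{i,v}$ is a module of $G$ containing both $v$ and $w$, we have $M_{v,w}\subseteq D_{i,v}$. By Observation~\ref{obs:ModuleInclusion}, a subset of $D_{i,v}$ is a module of $G$ if and only if it is a module of $G[D_{i,v}]$, so the smallest module of $G$ containing $\{v,w\}$ coincides with the smallest module of $G[D_{i,v}]$ containing $\{v,w\}$. Combined with the previous paragraph, this gives $M_{v,w}=D_{i,v}$. The argument is essentially routine; the only potential pitfall is keeping the ambient graph straight when speaking of ``the smallest module containing $v$ and $w$,'' and Observation~\ref{obs:ModuleInclusion} is exactly what resolves this ambiguity.
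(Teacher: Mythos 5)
Your proof is correct and fills in exactly the routine argument the paper leaves implicit: apply Lemma~\ref{lem:different-modules} to the induced subgraph $G[D_{i,v}]$ (using that $D_{i+1,v}$ is a maximal proper module of that subgraph and $w$ lies in a different block of Gallai's partition), then use Observation~\ref{obs:ModuleInclusion} to identify the spanned module of $G[D_{i,v}]$ with the spanned module $M_{v,w}$ of $G$, noting that $M_{v,w}\subseteq D_{i,v}$ since $D_{i,v}$ is a module of $G$ containing both vertices. This matches the paper's intended derivation.
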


\begin{lemC}[{\cite[Satz 1.2 (2)\textsuperscript{\ref{ftn:GallaiWvw}}]{gallai67transitiv}}]\label{lem:ModulesConnectedComponentsMvw}
	Suppose $G$ is not connected and let $v$ and $w$ be in different connected components  $C_v$ and $C_w$ of $G$.
	Then $M_{v,w}=C_v\dcup C_w$.
\end{lemC}

\begin{corollary}\label{cor:characterization-D_iv-not-conn}
	Let $i\in [0,n-1]$ and $v\in V$.
	If $G[D_{i,v}]$ or its complement is not connected,
	then for all $w\in D_{i,v}\setminus D_{i+1,v}$  we have
	$M_{v,w}=D_{i+1,w}\dcup D_{i+1,v}$.\textsuperscript{\textup{\ref{ftn:Mvw}}}
\end{corollary}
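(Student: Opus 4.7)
The plan is to reduce the corollary to Lemma~\ref{lem:ModulesConnectedComponentsMvw} applied to the (possibly disconnected) induced subgraph $G[D_{i,v}]$ or its complement, and then transfer the resulting spanned module back to the whole graph $G$ via Observation~\ref{obs:ModuleInclusion}.

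First I would split into two cases according to whether $G[D_{i,v}]$ or $\overline{G}[D_{i,v}]$ is the disconnected one. In the first case, by definition of the modular decomposition, $D_{i+1,v}$ is the connected component of $G[D_{i,v}]$ containing $v$ and $D_{i+1,w}$ is the connected component of $G[D_{i,v}]$ containing $w$; since $w \in D_{i,v}\setminus D_{i+1,v}$, these two components are distinct. Lemma~\ref{lem:ModulesConnectedComponentsMvw}, applied inside the graph $G[D_{i,v}]$, then tells us that the smallest module of $G[D_{i,v}]$ containing both $v$ and $w$ is exactly $D_{i+1,v}\dcup D_{i+1,w}$.

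Next I would lift this identity from $G[D_{i,v}]$ to $G$ using Observation~\ref{obs:ModuleInclusion}. The key point is that $D_{i,v}$ is itself a module of $G$ (this was noted immediately after the definition of the modular decomposition, by induction on $i$ using Observation~\ref{obs:ModuleInclusion}). Therefore $M_{v,w}\subseteq D_{i,v}$, because $D_{i,v}$ is a module of $G$ containing both $v$ and $w$, and $M_{v,w}$ is the intersection of all such modules. Observation~\ref{obs:ModuleInclusion} then says that a subset of $D_{i,v}$ is a module of $G$ exactly when it is a module of $G[D_{i,v}]$. Hence the smallest module of $G[D_{i,v}]$ containing $v,w$ coincides with the smallest module of $G$ containing $v,w$, yielding $M_{v,w}=D_{i+1,v}\dcup D_{i+1,w}$.

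For the second case, where $\overline{G}[D_{i,v}]$ is disconnected, I would run the very same argument inside $\overline{G}[D_{i,v}]$, using the facts that modules are preserved under complementation (both of $G$ and of $G[D_{i,v}]$) and that, by definition of the modular decomposition in this case, $D_{i+1,v}$ and $D_{i+1,w}$ are precisely the two distinct connected components of $\overline{G}[D_{i,v}]$ containing $v$ and $w$ respectively. The main obstacle, such as it is, is being careful that the ``spanned module'' in Lemma~\ref{lem:ModulesConnectedComponentsMvw} is taken inside $G[D_{i,v}]$ rather than inside $G$ itself; Observation~\ref{obs:ModuleInclusion} together with the fact that $D_{i,v}$ is a module of $G$ is what bridges this gap, and no additional computation is required.
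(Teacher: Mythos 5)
Your proof is correct, and it fills in exactly the reasoning the paper leaves implicit in presenting this as a corollary of Lemma~\ref{lem:ModulesConnectedComponentsMvw}: apply the lemma inside $G[D_{i,v}]$ (or $\overline{G}[D_{i,v}]$), identify the two components with $D_{i+1,v}$ and $D_{i+1,w}$ via the definition of the modular decomposition, and transfer the spanned module from $G[D_{i,v}]$ back to $G$ using that $D_{i,v}$ is a module of $G$ together with Observation~\ref{obs:ModuleInclusion}. No gaps.
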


\noindent
From Corollary~\ref{cor:characterization-D_iv-conn} and~\ref{cor:characterization-D_iv-not-conn}
we can conclude that there exists a vertex $w\in V$ such that ${D_{i,v}=M_{v,w}}$ 
if $G[D_{i,v}]$ and its complement are connected, 
or if $G[D_{i,v}]$ or its complement consists of two connected components.
If $G[D_{i,v}]$ or its complement consists of more than two connected components,
then for each $w\in D_{i,v}$ we have $M_{v,w}\not = D_{i,v}$. 
However, $D_{i,v}$ is the union of all connected components $D_{i+1,w}$ with $w\in D_{i,v}$.
Thus, Corollary~\ref{cor:characterization-D_iv-not-conn} shows that $D_{i,v}$ is the union of all $M_{v,w}$ where $w\in D_{i,v}$ is in a connected component 
different from the one containing $v$.

Let $v\in V$ be fixed. So far, we have seen that we obtain each set $D_{i,v}$ by taking the union of certain submodules $M_{v,w}$ of $D_{i,v}$.
We show in the following that we can partition the vertex set $V$ into $A^v_0,\dots,A^v_k$ such that 
\[D_{i,v}=\bigcup \{M_{v,w}\mid{w\in A^v_i}\},\]
where $k$ is minimal with $D_{k,v}=\{v\}$.
In order to obtain this partition, we order the modules
$M_{v,w}$ with $w\in V$ with respect to proper inclusion.
This order is a strict weak order (Lemma~\ref{lem:IncomparabilityRelationSuccv}).
Hence, incomparability is an equivalence relation.
We define the relation \emph{$\prec_v$} on $V$ by letting 
\[w_1\prec_v w_2:\iff M_{v,w_2}\subset M_{v,w_1}.\]
Then incomparability regarding $\prec_v$ is
an equivalence relation on the vertex set $V\!$.
The resulting equivalence classes form the partition $\{A^v_0,\dots,A^v_k\}$.
Consequently, we obtain the sets $D_{i,v}$ by taking the union of all sets $M_{v,w}$ that are incomparable with respect to proper inclusion.
An example showing the connection between $D_{i,v}$, $M_{v,w}$ for $w\in V$, $\prec_v$ and the sets $A^v_0,\dots,A^v_k$ 
for a specific vertex $v\in V$ is given in Figure~\ref{fig:divmvwa1ak}.

\begin{figure}[htbp]
\newcolumntype{C}[1]{>{\hsize=#1\hsize\centering\arraybackslash}X}%
\begin{subfigure}[t]{\textwidth}
\renewcommand{\arraystretch}{2.0}
\begin{tabularx}{\textwidth}{|p{2.2cm}| C{1.2}  C{1.1}  C{1.0} C{0.7} |}\cline{2-5}
\multicolumn{1}{r|}{}&$i=0$: &$i=1$:& $i=2$:& $i=3$:\\\hline\vspace*{-0.18cm}
$D_{i,a}$:&
\multicolumn{1}{c}{\vtop{\vskip-2ex\hbox{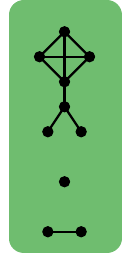}\vskip1ex}}&
\multicolumn{1}{c}{\vtop{\vskip-1.8ex\hbox{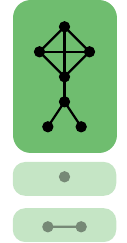}}}&
\multicolumn{1}{c}{\vtop{\vskip-1.6ex\hbox{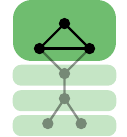}}}&
\multicolumn{1}{c|}{\vtop{\vskip-1.4ex\hbox{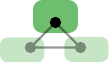}}}\\\hline
\vspace*{-0.18cm}
Sets $M_{a,w}$\newline for $w\in V$:\newline (ordered\newline according to\newline set inclusion)
&
\multicolumn{1}{c}{\vtop{\vskip-2ex\hbox{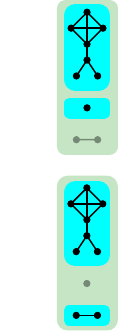}\vskip1ex}}
&
\multicolumn{1}{c}{\vtop{\vskip-2ex\hbox{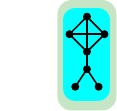}}}
&
\multicolumn{1}{c}{\vtop{\vskip-2ex\hbox{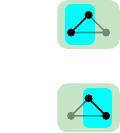}}}
&
\multicolumn{1}{c|}{\vtop{\vskip-2ex\hbox{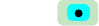}}}\\\hline
Relation $\prec_a$:&
\multicolumn{4}{l|}{
~~~~~~~~~$x,y,z$ 
~~~~~~~$\prec_a$
~~~~~~$d,e,f,g$ 
~~~~~~~$\prec_a$
~~~~~~$b,c$ 
~~~~~~~~$\prec_a$
~~~~~~$a$}
\\\hline
Partition $A^a_0,\dots,A^a_3$:&
$A^a_0=\{x,y,z\}$&$A^a_1=\{d,e,f,g\}$&$A^a_2=\{b,c\}$&$A^a_3=\{a\}$\\\hline
\end{tabularx}
\renewcommand{\arraystretch}{1.0}
\caption{$D_{i,a}$, $M_{a,w}$, $\prec_a$ and $A^a_0,\dots, A^a_3$}
\label{fig:divmvwa1ak-a}
\end{subfigure}
\vspace*{1cm}

\begin{subfigure}[b]{\textwidth}
\renewcommand{\arraystretch}{2.0}
\begin{tabularx}{\textwidth}{|p{2.2cm}| C{1.2} C{1} C{1.1} C{0.7} |}\hline
$G[D_{i,a}]$ is &
not connected &
connected&
connected&
connected\\
$\overline{G}[D_{i,a}]$ is&
connected&
connected&
not connected&
connected\\\hline
Result:&
\mbox{$D_{0,a}\!=\!M_{a,x}\cup M_{a,w}$}\newline for $w\!\in\!\{y,z\}$&
$D_{1,a}\!=\!M_{a,w}$\newline \mbox{for $w\!\in\!\{d,e,f,g\}$}&
$\!{D_{2,a}\!=\!M_{a,b}\cup M_{a,c}}$&
${D_{3,a}\!=\!M_{a,a}}$\\\hline
\end{tabularx}
\renewcommand{\arraystretch}{1.0}
\caption{Connection between the sets $D_{i,a}$ and $M_{a,w}$}
\label{fig:divmvwa1ak-c}
\end{subfigure}
\caption{}
\label{fig:divmvwa1ak}
\end{figure}

\begin{lemma}\label{lem:IncomparabilityRelationSuccv}
  For every $v\in V$ the relation $\prec_v$
  is a strict weak order.
\end{lemma}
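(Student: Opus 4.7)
The plan is to verify in turn the three properties that characterize a strict weak order. Irreflexivity ($M_{v,w} \not\subset M_{v,w}$) and transitivity ($M_{v,w_3} \subset M_{v,w_2} \subset M_{v,w_1}$ implies $M_{v,w_3} \subset M_{v,w_1}$) of $\prec_v$ both follow immediately from the corresponding properties of proper set inclusion. The substantive content is to show that incomparability under $\prec_v$ is transitive.

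To this end, I would identify $\prec_v$ with the order induced by the level of a vertex in the modular decomposition rooted at $v$. For each $w \in V$, let $i(w)$ be the unique index such that $w \in D_{i(w), v} \setminus D_{i(w)+1, v}$, adopting the convention $D_{n+1, v} := \emptyset$ (so that $i(v) = n$). The key claim is that $w_1 \prec_v w_2$ holds if and only if $i(w_1) < i(w_2)$; once this is established, incomparability under $\prec_v$ coincides with $i(w_1) = i(w_2)$, which is visibly an equivalence relation on $V$, and we are done. The equivalence is verified by invoking Corollaries~\ref{cor:characterization-D_iv-conn} and~\ref{cor:characterization-D_iv-not-conn}: for $w \neq v$ with $i(w) = i$, either $G[D_{i,v}]$ and its complement are both connected and then $M_{v,w} = D_{i,v}$, or one of them is disconnected and then $M_{v,w} = D_{i+1,w} \cup D_{i+1,v}$. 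If $i(w_1) < i(w_2)$, the chain $M_{v,w_2} \subseteq D_{i(w_2),v} \subseteq D_{i(w_1)+1,v} \subset M_{v,w_1}$ holds, the first inclusion by minimality of $M_{v,w_2}$ and the last being strict because $w_1 \in M_{v,w_1} \setminus D_{i(w_1)+1,v}$.

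The main obstacle I anticipate is the case $i(w_1) = i(w_2) = i$ at a level where the decomposition splits into more than two components: here $M_{v,w_1}$ and $M_{v,w_2}$ can be two distinct sets of the form $D_{i+1,w_j} \cup D_{i+1,v}$ with disjoint extra pieces $D_{i+1,w_1}$ and $D_{i+1,w_2}$. Showing that neither of these modules is properly contained in the other rests on the partition property of $\{D_{i+1,u} : u \in D_{i,v}\}$ together with $w_1 \notin D_{i+1,v}$ (and symmetrically for $w_2$). Once this case is handled, no two vertices at the same level become comparable under $\prec_v$, which completes the characterization and hence the proof that $\prec_v$ is a strict weak order.
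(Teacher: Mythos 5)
Your proof is correct, but it follows a genuinely different route from the paper's. The paper establishes transitivity of incomparability by a direct contradiction argument: it assumes $w_1\sim_v w_2$, $w_2\sim_v w_3$, $w_1\prec_v w_3$, picks the maximal $i$ such that $D_{i,v}$ contains all three of $M_{v,w_1}$, $M_{v,w_2}$, $M_{v,w_3}$, shows none of them can equal $D_{i,v}$, and then does an intricate case analysis (connected vs.\ disconnected $G[D_{i,v}]$, identifying which of the three vertices can sit outside the component of $v$) to extract a contradiction. Only afterwards does the paper prove, in Lemma~\ref{lem:pvwvi}, the level characterization $p_v([w]_v)=i$ for $w\in D_{i,v}\setminus D_{i+1,v}$, and that proof \emph{uses} the strict-weak-order property. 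You instead prove the level characterization first --- $w_1\prec_v w_2$ iff $i(w_1)<i(w_2)$ --- directly from Corollaries~\ref{cor:characterization-D_iv-conn} and~\ref{cor:characterization-D_iv-not-conn}, and then the strict-weak-order property (and indeed transitivity of incomparability) falls out for free, since same-level is visibly an equivalence relation. This reverses the paper's dependency order and collapses most of Lemmas~\ref{lem:IncomparabilityRelationSuccv} and~\ref{lem:pvwvi} into one argument; it is cleaner and avoids the three-vertex case analysis. The one place you flagged as an ``obstacle'' --- that two vertices $w_1,w_2$ at the same level $i$ with disjoint components $D_{i+1,w_1}$, $D_{i+1,w_2}$ give incomparable modules $D_{i+1,v}\cup D_{i+1,w_j}$ --- is handled correctly by the partition property, and the remaining same-level subcases ($M_{v,w_1}=M_{v,w_2}=D_{i,v}$ in the connected case, or $D_{i+1,w_1}=D_{i+1,w_2}$) are immediate. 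So the argument is sound.
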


\begin{proof}
 Let $v\in V$. It is easy to see that $\prec_v$ is transitive and irreflexive.
 Let us show that incomparability is transitive.
 Thus, let $w_1$ and $w_2$, and $w_2$ and $w_3$ be incomparable with respect to $\prec_v$, 
 and let us assume that $w_1$ and $w_3$ are comparable, that is, 
 without loss of generality we have $w_1\prec_v w_3$, which means $M_{v,w_1}\supset M_{v,w_3}$.
 Let $i\in\{0,\dots,n\}$ be maximal such that $D_{i,v}$ contains $M_{v,w_1}$, $M_{v,w_2}$ and $M_{v,w_3}$.
 
  First of all, we show that  $M_{v,w_j}\not=D_{i,v}$ for all $j\in\{1,2,3\}$.
  $M_{v,w_3}$ cannot be equal to  $D_{i,v}$ as $M_{v,w_3}$ is a proper subset of  $M_{v,w_1}$.
  If the module $M_{v,w_2}$ was equal to $D_{i,v}$, then 
	$M_{v,w_3}\subset M_{v,w_2}$, 
	and $w_3$ and $w_2$ would be comparable with respect to $\prec_v$. Thus, $M_{v,w_2}\not= D_{i,v}$.
	Finally, $M_{v,w_1}$ cannot be equal to $D_{i,v}$ either, 
	since $M_{v,w_1}=D_{i,v}$ implies that $M_{v,w_2}\subset M_{v,w_1}$ and then
	$w_2$ and $w_1$ would be comparable. 
	Consequently, neither of  $M_{v,w_1}$, $M_{v,w_2}$ and $M_{v,w_3}$ is equal to  $D_{i,v}$,
	and $|D_{i,v}|>1$.

 Now, if $G[D_{i,v}]$ and its complement are connected, we can partition
 $D_{i,v}$ into maximal proper modules, 
 and for all $j\in [3]$ we obtain that $M_{v,w_j}$ is a subset of module  $D_{i+1,v}$ if $w_j\in D_{i+1,v}$
 or equal to  $D_{i,v}$ if $w_j\in D_{i,v}\setminus D_{i+1,v}$ (Corollary~\ref{cor:characterization-D_iv-conn}).
 As we have shown above that $M_{v,w_j}\not=D_{i,v}$ for all $j\in[3]$, we have 
  $M_{v,w_1},M_{v,w_2},M_{v,w_3}\subseteq D_{i+1,v}$, which is a contradiction to the choice of $i$.

 If $G[D_{i,v}]$ is not connected, we can partition
 $D_{i,v}$ into its connected components.
 The case of $\overline{G}[D_{i,v}]$ being not connected can be treated analogously.
 For every $u\in D_{i,v}$, the set
 $D_{i+1,u}$ is the connected component of $G[D_{i,v}]$ containing $u$.
 Let us denote $D_{i+1,u}$ by $C_u$.
	Since $i$ has been chosen maximal, there has to be a $j\in\{1,2,3\}$ 
	such that $M_{v,w_j}$ is not contained in $C_v$.
	For this $j$, vertex $w_j$ must be a vertex in
	$M_{v,w_j}\setminus C_v$, and by Corollary~\ref{cor:characterization-D_iv-not-conn} 
	we obtain that $M_{v,w_j}=C_v\dcup C_{w_j}$.
As $w_1$ and $w_2$ are incomparable and $w_2$ and $w_3$ are incomparable,
independent from our choice of $j$, there exists an index $k\in\{1,2,3\}\setminus\{j\}$
such that $w_j$ and $w_k$ are incomparable.
Thus, $M_{v,w_k}$ cannot be a proper subset of  $C_v\dcup C_{w_j}$, and consequently,
$M_{v,w_k}\setminus C_v\not=\emptyset$. As above, we obtain that the module  $M_{v,w_k}$
is equal to $ C_v\dcup C_{w_k}$.
Let us assume $j=3$ or $k=3$.
The the module $M_{v,w_3}=C_v\dcup C_{w_3}$ is a proper subset of the module $M_{v,w_1}$.
Thus,  $M_{v,w_1}\setminus C_v\not=\emptyset$ and we can deduce $M_{v,w_1}=C_v\dcup C_{w_1}$ as we did before.
Since both $M_{v,w_1}$ and  $M_{v,w_3}$ are the union of two connected components, 
$M_{v,w_3}$ cannot be a proper subset of $M_{v,w_1}$.
Therefore, $j=1$ and $k=2$, or $j=2$ and $k=1$. 
As a consequence, we have $M_{v,w_1}=C_v\dcup C_{w_1}$ and  $M_{v,w_2}=C_v\dcup C_{w_2}$.
Now, if 
 $M_{v,w_3}\setminus C_v\not=\emptyset$, then
 $M_{v,w_3}$ is the disjoint union of the connected components 
 $C_v$ and $C_{w_3}$, a contradiction to $M_{v,w_3}\subset M_{v,w_1}$.
If $M_{v,w_3}$ is a subset of $C_v$,
then $M_{v,w_3}$ is a proper subset of $M_{v,w_2}$, which yields that $w_2$ and $w_3$ are comparable, a contradiction.
Hence, incomparability is transitive.
\end{proof}

\noindent
There exists an $\STC$-for\-mu\-la $\varphi_{_\prec}(x,y_1,y_2)$ such that
for all vertices $v,w_1,w_2\in V$ we have ${G\models \varphi_{_\prec}[v,w_1,w_2]}$ if, and only if,
$w_1\prec_v w_2$, that is,
the module spanned by $v,w_2$ is a proper subset of the module spanned by $v,w_1$.
Let $\varphi_M$ be the formula from Lemma~\ref{lem:spannedmodule-stc}.
Then 
\begin{align}\label{equ:varphiprec}
	\varphi_{_\prec}(x,y_1,y_2):=\ \ & \forall z \bigl(\varphi_M(x,y_2,z)\limplies \varphi_M(x,y_1,z) \bigr)\;\land\nonumber\\
	 &\exists z \bigl(\varphi_M(x,y_1,z) \land \lnot \varphi_M(x,y_2,z) \bigr).
\end{align}

According to Lemma~\ref{lem:IncomparabilityRelationSuccv} incomparability
with respect to $\prec_v$ is transitive. Hence, incomparability is an equivalence relation.
We write $w\sim_v w'$ if the vertices $w$ and $w'$ are incomparable.
We let  $[w]_v$ be the equivalence class of $w$, and $V\modout_{\sim_v}$ be the set of all equivalence classes.
Then ${V\modout_{\sim_v}=\{A^v_0,\dots,A^v_k\}}$.
We let $[z]_v\prec_v[w]_v$ if there exist vertices $z'\in[z]_v$ and $w'\in[w]_v$ such that $z'\prec_v w'$.
If $w$ and $w'$, and $z$ and $z'$ are incomparable with respect to the strict weak order $\prec_v$, then 
$z \prec_v w$ implies $z' \prec_v w'$, and $\prec_v$ induces a strict linear order on $V\modout_{\sim_v}$.

We use the strict linear order on the equivalence classes of the incomparability relation induced by $\prec_v$
to assign numbers to the equivalence classes, which match their position within the strict linear order.
We assign $0$ to the smallest equivalence class regarding~$\prec_v$.
The largest equivalence class regarding~$\prec_v$ is $[v]_v=\{v\}$.
Let $p_v\colon V\modout_{\sim_v}\to \N$ be this assignment.
Then  $p_v(A_i^v)=i$ for all $i\in[0,k]$.
\begin{lemma}\label{lem:pvwvi}
 For all $i\in[0,n-1]$, $v\in V$ and $w\in D_{i,v}\setminus D_{i+1,v}$, we have $p_v([w]_v)=i$.
\end{lemma}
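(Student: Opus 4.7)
The plan is to identify the equivalence class $[w]_v$ exactly as the set $D_{i,v}\setminus D_{i+1,v}$ for every $i\in[0,k-1]$, where $k$ is minimal with $D_{k,v}=\{v\}$, and to identify $\{v\}$ itself as the top class. If this structure holds with the induced strict linear order $D_{0,v}\setminus D_{1,v}\prec_v\dots\prec_v D_{k-1,v}\setminus D_{k,v}\prec_v\{v\}$, then by definition of $p_v$ we get $p_v([w]_v)=i$ for every $w\in D_{i,v}\setminus D_{i+1,v}$. The cases $i\geq k$ are vacuous, since then $D_{i,v}=D_{i+1,v}=\{v\}$ and the difference is empty.

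First I would apply Corollary~\ref{cor:characterization-D_iv-conn} and Corollary~\ref{cor:characterization-D_iv-not-conn} to describe $M_{v,w}$ for $w\in D_{i,v}\setminus D_{i+1,v}$: either $M_{v,w}=D_{i,v}$ (if $G[D_{i,v}]$ and its complement are both connected) or $M_{v,w}=D_{i+1,w}\dcup D_{i+1,v}$ (otherwise). From this description I would show that any two vertices $w,w'\in D_{i,v}\setminus D_{i+1,v}$ satisfy $w\sim_v w'$: in the first case $M_{v,w}=M_{v,w'}=D_{i,v}$; in the second case $D_{i+1,w}$ and $D_{i+1,w'}$ are connected components of either $G[D_{i,v}]$ or its complement, hence equal or disjoint, so $M_{v,w}$ and $M_{v,w'}$ are either equal or share exactly $D_{i+1,v}$ and neither is a proper subset of the other.

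Next I would verify that if $w\in D_{i,v}\setminus D_{i+1,v}$ and $w''\in D_{j,v}\setminus D_{j+1,v}$ with $j>i$, then $w\prec_v w''$. Since $D_{j,v}$ is a module containing both $v$ and $w''$, minimality of $M_{v,w''}$ yields $M_{v,w''}\subseteq D_{j,v}\subseteq D_{i+1,v}$; on the other hand $w\in M_{v,w}$ and $w\notin D_{i+1,v}$, so $M_{v,w}\not\subseteq D_{i+1,v}$. Combined with the trivial observation that $M_{v,v}=\{v\}$ is properly contained in $M_{v,w}$ for every $w\neq v$ (so $\{v\}$ is indeed the $\prec_v$-largest class), this establishes the desired ordering of the classes and hence the lemma.

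The main obstacle is the incomparability argument in the disconnected subcase: one must verify carefully that when $D_{i+1,w}\neq D_{i+1,w'}$, the modules $M_{v,w}=D_{i+1,w}\dcup D_{i+1,v}$ and $M_{v,w'}=D_{i+1,w'}\dcup D_{i+1,v}$ are not related by proper inclusion. This reduces to the observation that distinct connected components (of the same graph or of its complement) are disjoint, so the two modules have intersection exactly $D_{i+1,v}$ and each contains a non-empty part outside the other; the remaining steps are straightforward bookkeeping based on the corollaries of Gallai cited above.
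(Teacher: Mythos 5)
Your approach is sound and follows essentially the same line as the paper's proof: characterize $M_{v,w}$ for $w\in D_{i,v}\setminus D_{i+1,v}$ via Corollaries~\ref{cor:characterization-D_iv-conn} and~\ref{cor:characterization-D_iv-not-conn}, show that all vertices in $D_{i,v}\setminus D_{i+1,v}$ are mutually $\sim_v$-incomparable, and show that passing to a deeper layer strictly increases the $\prec_v$-rank. The paper packages the last step as an induction on $i$; you compare arbitrary layers $i<j$ directly, which is a mild simplification but not a substantive difference.

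There is one gap in the cross-layer comparison. To conclude $w\prec_v w''$ you must establish the proper containment $M_{v,w''}\subset M_{v,w}$. What you write establishes $M_{v,w''}\subseteq D_{i+1,v}$ and $M_{v,w}\not\subseteq D_{i+1,v}$; this shows the two modules are distinct and rules out $M_{v,w}\subseteq M_{v,w''}$, but it does not by itself give $M_{v,w''}\subseteq M_{v,w}$ — in a strict weak order, ruling out $w''\prec_v w$ does not exclude incomparability. The missing containment $D_{i+1,v}\subseteq M_{v,w}$ is, however, immediate from the descriptions you already derived from the corollaries: $M_{v,w}=D_{i,v}\supseteq D_{i+1,v}$ in the connected case, and $M_{v,w}=D_{i+1,v}\dcup D_{i+1,w}\supseteq D_{i+1,v}$ otherwise. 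Hence $M_{v,w''}\subseteq D_{i+1,v}\subseteq M_{v,w}$, and the containment is proper because $w\in M_{v,w}\setminus D_{i+1,v}$. Adding this one observation closes the argument; it is exactly the content of the paper's intermediate claim that $D_{i+1,v}\subset M_{v,w}\subseteq D_{i,v}$ for all $w\in D_{i,v}\setminus D_{i+1,v}$.
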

\begin{proof}
 In order to show Lemma~\ref{lem:pvwvi}, we first prove the following three claims.
\begin{claim}\label{clm:pv1}
 For all $i\in[0,n-1]$, and all vertices $v\in V$ and $w,w'\!\in D_{i,v}\setminus D_{i+1,v}$, it holds that ${p_v([w]_v)=p_v([w']_v)}$.
\end{claim}
\begin{proofofclaim}
Let $i\in[0,n-1]$, $v\in V$ and $w,w'\in D_{i,v}\setminus D_{i+1,v}$.
If the graph $G[D_{i,v}]$ and its complement are connected,
then $M_{v,w}=D_{i,v}$ and $M_{v,w'}=D_{i,v}$ according to Corollary~\ref{cor:characterization-D_iv-conn}. 
If $G[D_{i,v}]$ or its complement are not connected,
then $M_{v,w}=D_{i+1,v}\dcup D_{i+1,w}$ and ${M_{v,w'}=D_{i+1,v}\dcup D_{i+1,w'}}$ by Corollary~\ref{cor:characterization-D_iv-not-conn}.
In both cases, $M_{v,w}$ and $M_{v,w'}$ are incomparable with respect to proper set inclusion, 
and therefore $w$ and $w'$ are incomparable with respect to $\prec_v$.
Consequently, $p_v([w]_v)=p_v([w']_v)$.
\end{proofofclaim}

\begin{claim}\label{clm:pv2}
 For all $i\in[0,n-1]$, $v\in V$ and $w\in D_{i,v}\setminus D_{i+1,v}$, we have $D_{i+1,v}\subset M_{v,w}\subseteq D_{i,v}$.
\end{claim}
\begin{proofofclaim}
Let $i\in[0,n-1]$, $v\in V$ and $w\in D_{i,v}\setminus D_{i+1,v}$. 
Then $D_{i+1,v}\subset D_{i,v}$.
 If the graph $G[D_{i,v}]$ and its complement are connected,
 then $M_{v,w}=D_{i,v}$ according to Corollary~\ref{cor:characterization-D_iv-conn}. 
 If $G[D_{i,v}]$ or its complement are not connected,
then $M_{v,w}=D_{i+1,v}\dcup D_{i+1,w}$ by Corollary~\ref{cor:characterization-D_iv-not-conn}.
Clearly, we have $D_{i+1,v}\subset M_{v,w}\subseteq D_{i,v}$ in both cases.
\end{proofofclaim}

\begin{claim}\label{clm:pv3}
 For all $i\in[0,n-1]$, and all vertices $v\in V$, $w\in D_{i,v}\setminus D_{i+1,v}$ and $u\in D_{i+1,v}$, we have ${p_v([u]_v)> p_v([w]_v)}$.
\end{claim}
\begin{proofofclaim}
 Let $i\in[0,n-1]$, $v\in V$, $w\in D_{i,v}\setminus D_{i+1,v}$ and $u\in D_{i+1,v}$.
 Since $u\in D_{i+1,v}$, we have $M_{v,u}\subseteq D_{i+1,v}$.
 According to Claim~\ref{clm:pv2}, $D_{i+1,v}\subset M_{v,w}$.
 Hence, $M_{v,u}\subset M_{v,w}$.
 It follows that $p_v([u]_v)> p_v([w]_v)$.
\end{proofofclaim}

\noindent
We prove Lemma~\ref{lem:pvwvi} by induction on $i\in[0,n-1]$, that is,
we show that $p_v([w]_v)=i$ for  $v\in V$ and $w\in D_{i,v}\setminus D_{i+1,v}$.
Let $v\in V$.

First of all, let us consider the base case. Suppose $i=0$.
Claim~\ref{clm:pv1} and  Claim~\ref{clm:pv3} imply that the equivalence classes $[w]_v$ are minimal with respect to $\prec_v$
for all $w\in D_{0,v}\setminus D_{1,v}$.
Hence, $p_v([w]_v)=0$ for all $w\in D_{0,v}\setminus D_{1,v}$.

Next, let us consider the inductive case. Suppose $i>0$.
By inductive assumption we have $p_v([z]_v)=i-1$ for all $z\in D_{i-1,v}\setminus D_{i,v}$, and 
$p_v([z']_v)<i-1$ for all $z\in V\setminus D_{i-1,v}$.
For arbitrary vertices $w\in D_{i,v}\setminus D_{i+1,v}$ and $z\in D_{i-1,v}\setminus D_{i,v}$, 
we show that $M_{v,w}\subset M_{v,z}$. 
Then it follows from Claim~\ref{clm:pv1} and  Claim~\ref{clm:pv3} that $p_v([z]_v)=i$ for all $w\in D_{i,v}\setminus D_{i+1,v}$.

Let $w\in D_{i,v}\setminus D_{i+1,v}$ and $z\in D_{i-1,v}\setminus D_{i,v}$. 
By Claim~\ref{clm:pv2}, we have $M_{v,w}\subseteq D_{i,v}$.
If the graph $G[D_{i-1,v}]$ or its complement are not connected,
then $M_{v,z}=D_{i,v}\dcup D_{i,z}$ by Corollary~\ref{cor:characterization-D_iv-not-conn},
and $M_{v,w}\subset M_{v,z}$.
Now suppose the graph $G[D_{i-1,v}]$ and its complement are connected. 
Then $M_{v,z}=D_{i-1,v}$ according to Corollary~\ref{cor:characterization-D_iv-conn}.  
Since $z\in D_{i-1,v}\setminus D_{i,v}$, we have $D_{i,v}\subset D_{i+1,v}$.
Consequently,  $M_{v,w}\subset M_{v,z}$ holds also in this case.
\end{proof}

\smallskip

\noindent
We define \vspace{-1mm}
\[S_{i,v}:= \{v\}\cup\bigcup \{M_{v,w}\mid p_v([w]_v)=i,\ w\in V\}\] 
for all $i\in [0,n]$.
Thus,  $S_{i,v}$ is the union of $\{v\}$ and all modules $M_{v,w}$ where $w$ belongs to the equivalence class at position $i$ regarding $\prec_v$.
If $k+1$ is the number of equivalence classes of~$\sim_v$, then\vspace{-1mm} 
\begin{align*}
 S_{i,v}= 
 \begin{cases}
 \bigcup \{M_{v,w}\mid p_v([w]_v)=i,\ w\in V\} 
  & \text{if $i\leq k$}\\
    \{v\} 
    & \text{if $i\geq k$.}
 \end{cases}\\[-1em]                                                                
\end{align*}

\pagebreak[2]
\begin{lemma}\label{lem:Divpv}
 For all $i\in\{0,\dots,n\}$ and $v\in V$, we have $D_{i,v}=S_{i,v}$.
\end{lemma}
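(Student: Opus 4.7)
The plan is to prove the lemma by fixing $v\in V$, letting $k$ be minimal with $D_{k,v}=\{v\}$, and handling the two ranges $i\geq k$ and $i<k$ separately, exploiting Lemma~\ref{lem:pvwvi} and the Corollaries~\ref{cor:characterization-D_iv-conn} and~\ref{cor:characterization-D_iv-not-conn} that describe $M_{v,w}$ in terms of connectivity of $G[D_{i,v}]$ and its complement.

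First I would handle the easy range $i\geq k$. There both sides equal $\{v\}$: the left side by the definition of $k$ (the $D_{i,v}$ stabilize at $\{v\}$), and the right side because Lemma~\ref{lem:pvwvi} together with the strict chain $V=D_{0,v}\supsetneq D_{1,v}\supsetneq\dots\supsetneq D_{k,v}=\{v\}$ implies that $V\modout_{\sim_v}$ has exactly $k+1$ classes, so by the case distinction in the definition of $S_{i,v}$ we get $S_{i,v}=\{v\}$ for every $i\geq k$.

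For the main range $i\in[0,k-1]$, I would first rewrite $S_{i,v}$ using Lemma~\ref{lem:pvwvi}, which identifies $\{w\in V:p_v([w]_v)=i\}$ with $D_{i,v}\setminus D_{i+1,v}$; this set is nonempty precisely because $i<k$. Hence $S_{i,v}=\{v\}\cup\bigcup\{M_{v,w}\mid w\in D_{i,v}\setminus D_{i+1,v}\}$. The inclusion $S_{i,v}\subseteq D_{i,v}$ is then immediate from Claim~\ref{clm:pv2} (used in the proof of Lemma~\ref{lem:pvwvi}), together with $v\in D_{i,v}$. For the reverse inclusion $D_{i,v}\subseteq S_{i,v}$ I would case-split on the connectivity of $G[D_{i,v}]$ and $\overline{G}[D_{i,v}]$: if both are connected, Corollary~\ref{cor:characterization-D_iv-conn} gives $M_{v,w}=D_{i,v}$ for any single $w\in D_{i,v}\setminus D_{i+1,v}$, so one term of the union already equals $D_{i,v}$; otherwise, Corollary~\ref{cor:characterization-D_iv-not-conn} gives $M_{v,w}=D_{i+1,v}\dcup D_{i+1,w}$, and since the connected components of $G[D_{i,v}]$ or of $\overline{G}[D_{i,v}]$ are exactly the modules $D_{i+1,u}$ for $u\in D_{i,v}$, ranging $w$ over $D_{i,v}\setminus D_{i+1,v}$ sweeps out all of these components except $D_{i+1,v}$, which every such $M_{v,w}$ already contains.

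I do not expect a real obstacle here: the result is essentially a bookkeeping consequence of Lemma~\ref{lem:pvwvi} combined with the two structural Corollaries. The only delicate points are (a) ensuring nonemptiness of $D_{i,v}\setminus D_{i+1,v}$ so that the union in $S_{i,v}$ is not vacuous in the main range, and (b) covering the vertex $v$ itself in the non-connected subcase, which is handled by the fact that $v\in D_{i+1,v}\subseteq M_{v,w}$ for every admissible~$w$.
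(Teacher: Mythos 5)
Your proposal is correct and follows essentially the same approach as the paper's proof: both arguments case-split on whether $D_{i,v}$ is a singleton (your $i\geq k$ vs.\ $i<k$ is equivalent to the paper's $|D_{i,v}|=1$ vs.\ $|D_{i,v}|>1$), both use Lemma~\ref{lem:pvwvi} to translate the condition $p_v([w]_v)=i$ into $w\in D_{i,v}\setminus D_{i+1,v}$, and both then invoke Corollaries~\ref{cor:characterization-D_iv-conn} and~\ref{cor:characterization-D_iv-not-conn} to identify the resulting union with $D_{i,v}$. The only cosmetic difference is that you split the equality into two inclusions (citing Claim~\ref{clm:pv2} from the proof of Lemma~\ref{lem:pvwvi} for the forward direction), whereas the paper derives the equality $D_{i,v}=\bigcup\{M_{v,w}\mid w\in D_{i,v}\setminus D_{i+1,v}\}$ directly from the two corollaries.
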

\begin{proof}
Let $i\in[0,n]$ and $v\in V$.
Suppose $|D_{i,v}|=1$. Then $D_{i,v}=\{v\}$.
If $i=n$, then clearly $S_{i,v}=\{v\}$  and $D_{i,v}=S_{i,v}$.
Now assume that $i<n$. 
Lemma~\ref{lem:pvwvi} implies that $w\in D_{i,v}\setminus D_{i+1,v}$ if and only if $p_v([w]_v)=i$ for all $w\in V\setminus\{v\}$.
We have $D_{i,n}=\{v\}$ precisely if $D_{i,v}\setminus D_{i+1,v}=\emptyset$.
Hence, $D_{i,n}=\{v\}$ if and only if there does not exists a vertex $w\in V\setminus\{v\}$ with $p_v([w]_v)=i$.
It follows that $D_{i,v}=S_{i,v}$.

Suppose $|D_{i,v}|>1$. Then $i<n$ and $D_{i,v}\setminus D_{i+1,v}$ is not empty.
If $G[D_{i,v}]$ and its complement are connected,
then for all $w\in D_{i,v}\setminus D_{i+1,v}$ we have $D_{i,v}=M_{v,w}$ (Corollary~\ref{cor:characterization-D_iv-conn}).
If $G[D_{i,v}]$ or its complement are not connected,
then for all vertices $w\in D_{i,v}\setminus D_{i+1,v}$ we have $M_{v,w}=D_{i+1,v}\dcup D_{i+1,w}$ (Corollary~\ref{cor:characterization-D_iv-not-conn}).
Therefore in both cases, we have ${D_{i,v}=\bigcup\{M_{v,w}\mid w\in D_{i,v}\setminus D_{i+1,v}\}}$.
Since $w\in D_{i,v}\setminus D_{i+1,v}$ if and only if $p_v([w]_v)=i$ for all $w\in V\setminus\{v\}$ by Lemma~\ref{lem:pvwvi},
we obtain $D_{i,v}=\bigcup\{M_{v,w}\mid p_v([w]_v)=i, w\in V\setminus\{v\}\}$.
As the vertex $\{v\}$ is contained in $D_{i,v}$, it follows that $D_{i,v}=\{v\}\cup\bigcup\{M_{v,w}\mid p_v([w]_v)=i, w\in V\}$.
Hence, $D_{i,v}=S_{i,v}$.
\end{proof}

\begin{theorem}\label{thm:varphiD}
There is an $\STCC$-for\-mu\-la $\varphi_D(p,x,z)$ such that for all graphs $G$, all $i\in N(G)$ and all vertices $v\in V(G)$
the set $\varphi_D[G,i,v;z]$ is the set $D_{i,v}$ of the modular decomposition of $G$.
\end{theorem}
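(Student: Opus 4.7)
My plan is to apply Lemma~\ref{lem:Divpv}, which identifies $D_{i,v}$ with $S_{i,v}=\{v\}\cup\bigcup\set{M_{v,w} : p_v([w]_v)=i}$. Consequently, membership in $D_{p,x}$ can be expressed in $\STCC$ by combining the formula $\varphi_M$ for spanned modules from Lemma~\ref{lem:spannedmodule-stc}, the formula $\varphi_{_\prec}$ for the strict weak order $\prec_x$ from~(\ref{equ:varphiprec}), and a suitable formula $\varphi_{\text{pos}}(x,y,p)$ computing the position $p_x([y]_x)$ on the number sort. The target formula is then
$$
\varphi_D(p,x,z)\ \isdef\ z=x\ \lor\ \exists y\,\bigl(\varphi_{\text{pos}}(x,y,p)\land \varphi_M(x,y,z)\bigr).
$$

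The main obstacle is defining $\varphi_{\text{pos}}$, because we need to count $\sim_x$-equiv\-a\-lence classes below $[y]_x$ without having a linear order on $V$ from which to pick a canonical representative per class. The key observation is that the function
$
c(x,y')\isdef \card{\set{u\in V : M_{x,u}\supseteq M_{x,y'}}}
$
is $\sim_x$-invariant and strictly monotone along the linear order of classes: if $y_1'\prec_x y_2'$ then $M_{x,y_2'}\subsetneq M_{x,y_1'}$, so the set counted for $y_1'$ is properly contained in the one for $y_2'$ (with $y_2'$ belonging to the latter but not the former), and if $y_1'\sim_x y_2'$ the two sets coincide. Hence $c(x,\cdot)$ induces an order-preserving injection from $V\modout_{\sim_x}$ into $\N$, and $p_x([y]_x)$ equals the number of distinct values $c(x,y')$ strictly below $c(x,y)$. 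Both the set $\set{u : M_{x,u}\supseteq M_{x,y'}}$ and the value $c(x,y')$ are definable from $\varphi_{_\prec}$ by standard $\FOC$ constructions, so $\varphi_{\text{pos}}$ can be written by a nested counting formula over the number sort:
\begin{align*}
\varphi_{\text{pos}}(x,y,p)\ \isdef\ \exists q\,\big(\,&\gamma(x,y,q)\ \land\\
&p=\nb q'\bigl(q'<q\ \land\ \exists y'\,\gamma(x,y',q')\bigr)\big),
\end{align*}
where $\gamma(x,y,q)\isdef \nb u\bigl(\varphi_{_\prec}(x,u,y)\lor (\lnot\varphi_{_\prec}(x,u,y)\land\lnot\varphi_{_\prec}(x,y,u))\bigr)=q$.

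Because $\varphi_M,\varphi_{_\prec}\in\STC$ and every remaining construction lies in $\FOC\leq\STCC$, the formula $\varphi_D$ is an $\STCC$-for\-mu\-la. For correctness, I will verify that $\varphi_D[G,i,v;z]=S_{i,v}$: if $i$ exceeds the number $k$ of nonempty $\sim_v$-classes then no $y$ witnesses $\varphi_{\text{pos}}$, so the formula returns $\set{v}=S_{i,v}$; if $i\leq k$ the existential clause contributes exactly $\bigcup\set{M_{v,y} : p_v([y]_v)=i}$, and the disjunct $z=x$ (already contained in every such $M_{v,y}$) merely guarantees $v\in\varphi_D[G,i,v;z]$. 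Applying Lemma~\ref{lem:Divpv} then yields $\varphi_D[G,i,v;z]=D_{i,v}$, as required.
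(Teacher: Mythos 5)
Your top-level plan is the same as the paper's: apply Lemma~\ref{lem:Divpv}, build a position formula computing $p_v([w]_v)$, and assemble $\varphi_D$ as $\exists y(\varphi_{\text{pos}}(x,y,p)\land \varphi_M(x,y,z))\lor z=x$, which is the paper's formula verbatim. Where the paper delegates the counting of $\sim_v$-equivalence classes to a cited construction (Laubner's Lemma~2.4.3 via $\DTCC\leq\STCC$), you instead give an explicit counting formula $\gamma$. The formula $\gamma$ and the resulting $\varphi_{\text{pos}}$ are in fact correct, but the justification you give for them is not.

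Concretely, your stated ``key observation'' --- that $c(x,y')\isdef \card{\{u\in V : M_{x,u}\supseteq M_{x,y'}\}}$ is $\sim_x$-invariant --- is false: two $\sim_x$-equivalent vertices span \emph{incomparable}, not \emph{equal}, modules in general. For example, on the graph with vertices $\{v,a_1,a_2,b\}$ and sole edge $\{a_1,a_2\}$, one has $M_{v,a_1}=\{v,a_1,a_2\}$ and $M_{v,b}=\{v,b\}$, which are incomparable, so $a_1\sim_v b$; yet $c(v,a_1)=2$ while $c(v,b)=1$. What saves the construction is that your formula $\gamma$ does not compute $c$: by asymmetry of $\prec_x$ its matrix simplifies to $\lnot\varphi_{_\prec}(x,y,u)$, so $\gamma$ counts $\card{\{u : \lnot(M_{x,u}\subsetneq M_{x,y})\}}$, i.e.\ the number of vertices in $\sim_x$-classes at or below $[y]_x$ in the induced linear order on $V\modout_{\sim_x}$. \emph{This} quantity is $\sim_x$-invariant (because $\prec_x$ is a strict weak order, so $y\sim_x y'$ implies $\{u : y\prec_x u\}=\{u : y'\prec_x u\}$) and strictly increasing along the class order, which is what $\varphi_{\text{pos}}$ needs. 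You should drop the discussion of $c$ entirely and argue invariance and monotonicity directly for the quantity $\gamma$ actually defines; as written, the correctness of the central subformula rests on a false claim.
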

\begin{proof}
First we need a formula $\varphi_{\textup{ord}}$ 
that assigns to vertices $v,w\in W$  the position $p_v([w]_v)$ of $[w]_v$ within the  
strict linear order of the equivalence classes of the incomparability relation induced by $\prec_v$.
More precisely, $G\models \varphi_{\textup{ord}}[v,w,i]$ if and only if  $p_v([w]_v)=i$, for all $v,w\in V(G)$ and $i\in N(G)$.
Clearly, $\varphi_{\textup{ord}}$ is satisfied for $v,w\in V$ and $i\in N(V)$ 
exactly if $i$ is the number of equivalence classes that are smaller than $[w]_v$ regarding $\prec_v$.
Thus, we need an $\STCC$-for\-mu\-la which counts the number of equivalence classes smaller than $[w]_v$.

Let $\varphi_\prec$ be the formula defined in \eqref{equ:varphiprec}.
Then the formula 
\[\varphi_{\sim_v}(x,y,y'):=\lnot \varphi_\prec(x,y,y')\land \lnot \varphi_\prec(x,y,y')\]
defines the equivalence relation $\sim_v$ for every $v\in V$,
that is, $G\models \varphi_{\sim_v}[v,w,w']$ if and only if  $w\sim_v w'$, for all~vertices ${v,w,w'\!\in V(G)}$.
The construction from the proof of Lemma 2.4.3 in~\cite{laubner11diss} shows how to count definable equivalence classes 
in deterministic transitive closure logic with counting $\DTCC$, 
which is a logic that is contained in $\STCC$.
With the help of the formula $\varphi_\prec$, we can use this construction to count
the equivalence classes smaller than $[w]_v$ regarding $\prec_v$ for all $v,w\in V(G)$.
Thus, the formula $\varphi_{\textup{ord}}$ can be defined in $\STCC$.

Now, we let $\varphi_M$ be the formula from Lemma~\ref{lem:spannedmodule-stc}, 
and we apply Lemma~\ref{lem:Divpv}, that is, we use that $D_{i,v}=S_{i,v}$.
Then it is easy to see that the following formula is as desired:
\begin{minipage}{0.9\textwidth}
\[ \varphi_{D}(p,x,z):=\exists y\big( \varphi_{\textup{ord}}(x,y,p)\land\varphi_{M}(x,y,z)\big)\lor x=z.\]
\end{minipage}\hfill
\begin{minipage}{0.05\textwidth}
\qedhere
\end{minipage}
\end{proof}

\smallskip

\noindent
As $\STCC$-for\-mu\-las can be evaluated in logarithmic space~\cite{rei05}, we obtain the following corollary.
\begin{corollary}\label{cor:moddecompinL}
 There exists a log\-a\-rith\-mic-space deterministic Turing machine that, given a graph $G=(V,E)$, a number $i\leq |V|$ and 
 vertices $v,w\in V\!$, decides whether ${w\in D_{i,v}}$.
\end{corollary}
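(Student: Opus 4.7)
The plan is to derive Corollary~\ref{cor:moddecompinL} as an immediate consequence of Theorem~\ref{thm:varphiD} together with the known fact that $\STCC$-formulas can be evaluated in deterministic logarithmic space.

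First I would invoke Theorem~\ref{thm:varphiD} to obtain the $\STCC$-formula $\varphi_D(p,x,z)$ with the property that, for every graph $G$, every $i \in N(G)$ and every $v \in V(G)$, the set $\varphi_D[G,i,v;z]$ equals $D_{i,v}$. In particular, for any vertex $w \in V$, membership $w \in D_{i,v}$ is equivalent to $G \models \varphi_D[i,v,w]$. So the original decision problem reduces to evaluating a fixed $\STCC$-sentence (with $i$, $v$, $w$ fed in as constants) on the input graph $G$.

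The second step is to appeal to the complexity bound for $\STCC$. The $\FO$ and counting fragment can be evaluated in $\LOGSPACE$ by the standard recursive tuple-by-tuple evaluation of $\FOC$, using $O(\log n)$ bits for each quantified variable, each number variable, and each counter. The symmetric transitive closure operator reduces to undirected $s$--$t$ connectivity in an $\FOC$-definable auxiliary graph whose vertices are tuples from the input; by Reingold's theorem~\cite{rei05}, undirected connectivity is in $\LOGSPACE$, and the $\LOGSPACE$ reductions compose. Hence the evaluation problem for $\STCC$-formulas on finite structures lies in $\LOGSPACE$.

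Putting the two steps together, on input $(G,i,v,w)$ the Turing machine simply runs the $\LOGSPACE$ evaluator on $\varphi_D$ at the assignment $(p,x,z) \mapsto (i,v,w)$ and accepts iff it returns true. This is the required $\LOGSPACE$ decision procedure for membership in $D_{i,v}$. The only nontrivial ingredient is the $\LOGSPACE$-evaluability of $\STCC$, which is precisely what Reingold's result supplies; once that is cited, the corollary is immediate from Theorem~\ref{thm:varphiD}, so there is no real obstacle to overcome in the proof itself.
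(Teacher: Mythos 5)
Your proposal matches the paper's own argument: Theorem~\ref{thm:varphiD} supplies the $\STCC$-formula $\varphi_D$ for $D_{i,v}$, and the corollary follows immediately from the fact (via Reingold's theorem~\cite{rei05}) that $\STCC$-formulas can be evaluated in deterministic logarithmic space. The paper states this in one line; your additional explanation of why $\STCC$-evaluation is in $\LOGSPACE$ is correct but not a different route.
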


\noindent
Let the family of subsets $D_{i,v}$ with $i\in [0,n]$, $v\in V$ be the modular decomposition of graph $G=(V,E)$.
The \emph{modular decomposition tree} of $G$ is the directed tree $T=(V_T,E_T)$ with
\begin{align*}
V_T& :=\big\{D_{i,v} \,\big\vert\, i\in [0,n], v\in V\big\},\\
E_T& :=\big\{ (D_{i,v},D_{i+1,v'})\in V_T^2 \,\big\vert\, D_{i+1,v'}\subset D_{i,v} \big\}.
\end{align*}

\begin{corollary}\label{cor:moddecomptreeinL}
 There exists a log\-a\-rith\-mic-space deterministic Turing machine that, given a graph $G=(V,E)$, outputs the modular decomposition tree of $G$.
\end{corollary}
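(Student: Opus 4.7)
The plan is to reduce the construction of $T = (V_T, E_T)$ to polynomially many membership queries of the form $w \in D_{i,v}$, each answerable in $\LOGSPACE$ by Corollary~\ref{cor:moddecompinL}. Since $\LOGSPACE$ is closed under composition and under Boolean combinations, every derived test I will use (set equality, minimum vertex of a set, emptiness of a set difference) also stays in $\LOGSPACE$.

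First I would fix canonical representatives for the nodes of $T$. Using the linear order on $V$ given by the input encoding, call a pair $(i,v) \in [0,n] \times V$ \emph{canonical} if $v = \min D_{i,v}$ and either $i = 0$ or $D_{i-1,v} \neq D_{i,v}$. Because the chain $V = D_{0,v} \supseteq D_{1,v} \supseteq \dots \supseteq \{v\}$ is strictly decreasing until it first reaches the leaf $\{v\}$ and then stabilises, every module $M \in V_T$ is represented by exactly one canonical pair: set $v_M := \min M$, and let $i_M$ be the unique index with $D_{i_M, v_M} = M$ satisfying either $i_M = 0$ or $D_{i_M - 1, v_M} \supsetneq D_{i_M, v_M}$. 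Canonicity of a concrete pair $(i,v)$ is testable in $\LOGSPACE$ by verifying that no $w$ preceding $v$ lies in $D_{i,v}$ and, if $i \geq 1$, that some $w$ lies in $D_{i-1,v} \setminus D_{i,v}$.

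The machine then outputs $T$ in two passes over the pairs $(i,v) \in [0,n] \times V$. In the first pass, each canonical $(i,v)$ triggers the output of a node record listing the vertices of $D_{i,v}$ in ascending order (enumerate $w \in V$, test $w \in D_{i,v}$); this prints each element of $V_T$ exactly once. In the second pass, each canonical $(i,v)$ with $i \geq 1$ triggers the output of an edge from $D_{i-1,v}$ to $D_{i,v}$, each endpoint again identified by its vertex set. Every element of $E_T$ arises this way: if $(M, M') \in E_T$ with $M' \subsetneq M$, then setting $v' := \min M'$ the chain through $v'$ satisfies $D_{i_{M'} - 1, v'} = M$ and $D_{i_{M'}, v'} = M'$, so the canonical pair $(i_{M'}, v')$ generates exactly this edge. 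The only delicate point I anticipate is ruling out duplicates in $V_T$, and this is entirely handled by the canonical-pair device together with the uniqueness of $\min M$; the rest is routine composition of $\LOGSPACE$ subroutines.
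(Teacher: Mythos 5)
Your proposal is correct and is exactly the argument the paper leaves implicit: Corollary~\ref{cor:moddecomptreeinL} is stated without proof immediately after Corollary~\ref{cor:moddecompinL}, and the intended justification is precisely what you spell out — reduce the construction of $V_T$ and $E_T$ to membership queries $w\in D_{i,v}$, use the strict decrease of the chain $D_{0,v}\supset\dots\supset D_{k,v}=\{v\}$ (and the fact that $D_{i,v}=D_{i,w}\Leftrightarrow w\in D_{i,v}$) to pick a canonical representative pair for each node and for the parent edge, and compose the \LOGSPACE\ subroutines. Your canonical-pair bookkeeping is a clean way to avoid duplicate output and matches the structure established in Section~\ref{sec:moddecompdefinition}.
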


\noindent
A graph is a \emph{cograph} if it can be constructed from isolated vertices by disjoint union and join operations.
We obtain the \emph{join} of two graphs $G$ and $H$ by taking the disjoint union of $G$ and $H$ and 
adding all edges $\{v,w\}$ where $v$ is a vertex of $G$ and $w$ is a vertex of $H$. 

The modular decomposition trees of cographs have a special property:
Each inner node is the disjoint union or the join of its children.
Moreover, only modular decomposition trees of cographs have this property.

\begin{corollary}
 Cograph recognition is in $\LOGSPACE$.
\end{corollary}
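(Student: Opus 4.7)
The plan is to combine the preceding Corollary (the modular decomposition tree is computable in $\LOGSPACE$) with the stated structural characterization of cographs via the modular decomposition tree.

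First I would record the precise criterion. A graph $G$ is a cograph if and only if, for every inner node $D_{i,v}$ of its modular decomposition tree (that is, every node with $\card{D_{i,v}}>1$), one of the following holds: $G[D_{i,v}]$ is disconnected (so $D_{i,v}$ is the disjoint union of its children) or $\overline{G}[D_{i,v}]$ is disconnected (so $D_{i,v}$ is the join of its children). The forward direction is immediate from the recursive construction of cographs together with the fact that the children of $D_{i,v}$ are exactly the connected components of $G[D_{i,v}]$, of $\overline{G}[D_{i,v}]$, or the maximal proper modules when both $G[D_{i,v}]$ and its complement are connected; the reverse direction is an easy induction on the tree, since a node whose $G[D_{i,v}]$-components (respectively $\overline{G}[D_{i,v}]$-components) are cographs is itself built from cographs by a disjoint union (respectively join).

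Next I would describe the $\LOGSPACE$ algorithm. On input $G = (V,E)$: (i)~invoke the machine from the preceding Corollary to enumerate the inner nodes $D_{i,v}$ of the modular decomposition tree; (ii)~for each such $D_{i,v}$, use Reingold's $\LOGSPACE$ algorithm for undirected connectivity~\cite{rei05} to test whether $G[D_{i,v}]$ and $\overline{G}[D_{i,v}]$ are both connected; and (iii)~accept iff no such $D_{i,v}$ fails the test. Membership queries ``$w \in D_{i,v}$?'' needed to simulate Reingold's algorithm on the induced subgraph and its complement are answered in $\LOGSPACE$ by Corollary~\ref{cor:moddecompinL}, and so are the adjacency queries into $E$ (respectively $\binom{V}{2}\setminus E$). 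Since $\LOGSPACE$ is closed under composition of a constant number of subroutines with logspace-bounded oracle queries, the whole procedure runs in $\LOGSPACE$.

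The main technical point — which is the only place one has to be careful — is that we never materialise the induced subgraphs $G[D_{i,v}]$ or $\overline{G}[D_{i,v}]$ explicitly, since that would in principle require polynomial space per node. Instead we run Reingold's algorithm on a virtual graph whose vertex set and edge relation are given by $\LOGSPACE$ queries to the original graph and to the membership predicate of $D_{i,v}$ supplied by Corollary~\ref{cor:moddecompinL}; this is a routine composition argument and poses no real obstacle. I would not expect any other step to cause difficulty.
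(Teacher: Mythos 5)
Your proof is correct and follows the same route the paper takes implicitly: use the characterization that a graph is a cograph if and only if every inner node of its modular decomposition tree is the disjoint union or join of its children (equivalently, $G[D_{i,v}]$ or $\overline{G}[D_{i,v}]$ is disconnected), and then check this criterion in logarithmic space by combining the $\LOGSPACE$ membership oracle of Corollary~\ref{cor:moddecompinL} with Reingold's undirected connectivity algorithm, noting that $\LOGSPACE$ is closed under composition. Your proposal is somewhat more explicit than the paper about the oracle-composition bookkeeping and about not materialising the induced subgraphs, but this is routine and the underlying argument is the same.
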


\noindent
We obtain the \emph{cotree} of a cograph $G$ by 
coloring each inner node $v$ of the modular decomposition tree of $G$ with $0$ 
if $v$ is the disjoint union of its children, and with $1$ if $v$ is the join of its children.
It is well known that for each cograph the cotree is a canonical tree representation.
In \cite{Lindell:Tree-Canon}, Lindell presented a log\-a\-rith\-mic-space algorithm for tree canonization, which can easily be extended to 
cotrees. Thus, Corollary~\ref{cor:moddecomptreeinL} also implies the following:

\begin{corollary}
 There exists a log\-a\-rith\-mic-space algorithm for cograph canonization.
\end{corollary}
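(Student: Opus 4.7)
The plan is to compose three logarithmic-space subroutines: first produce the modular decomposition tree of the input graph; second, label its inner nodes to turn it into the cotree; and third, canonize the resulting labeled tree via an extension of Lindell's algorithm. Correctness follows because, as noted in the paragraph preceding the statement, the cotree is a canonical representation of the cograph, so any canon of the cotree yields a canon of the cograph.

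I would first invoke Corollary~\ref{cor:moddecomptreeinL} to obtain the modular decomposition tree $T$ of the input graph $G$ in $\LOGSPACE$. For cograph recognition and labeling, note that an inner node $D_{i,v}$ of $T$ is of ``union type'' if $G[D_{i,v}]$ is disconnected, and of ``join type'' if $\overline{G}[D_{i,v}]$ is disconnected; a node fails to admit either label precisely when $G[D_{i,v}]$ and its complement are both connected, in which case $G$ is not a cograph. Both connectivity tests are $\STCC$-definable when relativized to the $\STCC$-definable set $D_{i,v}$ from Theorem~\ref{thm:varphiD}, and hence evaluable in logarithmic space by Reingold's theorem~\cite{rei05}. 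On a graph that passes this test at every inner node we have thus produced, in $\LOGSPACE$, the cotree of $G$ labeled over the two-element alphabet $\{0,1\}$.

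Finally I would apply Lindell's logarithmic-space tree canonization algorithm~\cite{Lindell:Tree-Canon} adapted to node-labeled trees: at each node, one orders its children first by their label and then recursively by their canonical string. The adaptation preserves the logarithmic-space bound because the label comparison costs only constant space on top of the recursive calls already handled by Lindell's routine, and the label of any node is computable in $\LOGSPACE$ on demand. Composing this with the preceding subroutines via the standard logspace composition (recomputing intermediate bits on the fly rather than storing them) yields a log\-a\-rith\-mic-space canonization algorithm whose output is a canonical ordered cotree, from which the canonical edge relation of the cograph is read off by checking, for each pair of leaves, the label of their lowest common ancestor --- an LCA query that is itself in $\LOGSPACE$ on trees represented as above.

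The main obstacle is not any single step, which is routine, but verifying that Lindell's recursive procedure, originally stated for unlabeled rooted trees, truly goes through for node-labeled trees within the same logarithmic-space budget. This extension is implicit in the literature and requires only that the label alphabet admit a logarithmic-space comparison, which is trivial here; making it explicit is the one place where some care is needed to see that no extra workspace is introduced.
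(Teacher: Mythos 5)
Your proposal is correct and follows essentially the same route as the paper: compute the modular decomposition tree in $\LOGSPACE$ (Corollary~\ref{cor:moddecomptreeinL}), color each inner node with $0$ or $1$ according to whether it is a disjoint union or a join of its children to obtain the cotree, and then apply Lindell's tree canonization extended to labeled trees. You flesh out a few points the paper leaves implicit --- the connectivity tests used to label the nodes, the logspace composition of subroutines, and the final LCA step to recover the canonical edge relation from the canonical cotree --- but the underlying argument is the same.
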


\section{Modular Decomposition Theorem}\label{sec:ModDecompThm}
For suitable graph classes $\CC$ that are closed under induced subgraphs,
the Modular Decomposition Theorem is a tool which can be used to show that $\CC$ admits $\FPC$-de\-fin\-a\-ble canonization.
More precisely, for graph classes $\CC$ that are closed under induced subgraphs,
the Modular Decomposition Theorem states that 
$\CC$ admits $\FPC$-de\-fin\-a\-ble canonization
if the class of $\LO$-col\-ored graphs with prime underlying graphs from $\CC$ 
admits $\FPC$-de\-fin\-a\-ble (parameterized) canonization.
Note that the Modular Decomposition Theorem also holds for reasonable extensions of~$\FPC$ that are closed under parameterized $\FPC$-trans\-duc\-tions.

In this section, we first introduce modular contractions and representations of ordered graphs. 
Then we present the Modular Decomposition Theorem and a proof of it. 
Finally, we show variations of the Modular Decomposition Theorem:
We show that  $\CC$ admits $\FPC$-de\-fin\-a\-ble canonization if the class of prime graphs of $\CC$ admits $\FPC$-de\-fin\-a\-ble orders,
and we 
present an analog of the Modular Decomposition Theorem for poly\-no\-mi\-al-time computable canonization.

\subsection{Modular Contraction}\label{sec:modcontraction}
The modular contraction is the graph that we obtain 
by contracting the maximal proper modules of a graph to vertices.

For a graph $G=(V,E)$, let \emph{$\sim_{G}$} be the equivalence relation on $V$ defined by the partition $\{D_{G}(v)\mid v\in V\}$ 
(see Section~\ref{sec:moddecompdefinition}). Then the equivalence class $v\modout_{\sim_{G}}$ of a vertex $v\in V$ is the module $D_{G}(v)$ of $G$.
We let \emph{$G_{\sim}$} be the graph consisting of the vertex set 
${V\modout_{\sim_{G}}=\{v\modout_{\sim_{G}}\mid v\in V\}}$, 
where two distinct vertices $w\modout_{\sim_{G}}$ and $w'\!\modout_{\sim_{G}}$ are adjacent if and only if $w$ and $w'$ are adjacent in~$G$.
According to Observation~\ref{obs:ModuleEdges}, the edges of $G_{\sim}$ are well-defined.
We call  $G_{\sim}$ the \emph{modular contraction} of~$G$.
Thus, the modular contraction of a graph~$G$~is 
\begin{itemize}
 \item an edgeless graph with as many vertices as there are connected components in $G$ if
$G$ is not connected,
\item a complete graph with as many vertices as there are connected components in $\overline{G}$ if 
$\overline{G}$ is not connected, or
\item if $G$ and $\overline{G}$ are connected and $|V(G)|>1$, a set of vertices, one vertex for each maximal proper module, 
where there is an edge between two vertices exactly if the corresponding modules are (completely) connected with edges; or a single vertex if $|V(G)|=1$.
\end{itemize}\vspace{1mm}
Figure~\ref{fig:modcontraction} depicts the graphs from Figure~\ref{fig:module} and their modular contractions.
\begin{figure}[htbp]\vspace{1mm}
\centering
\scalebox{0.90}{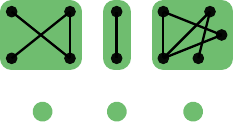\hspace{5.0em}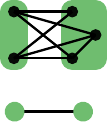\hspace{5.0em}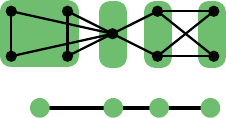
}
\caption{Graphs and their respective modular contractions}
\label{fig:modcontraction}
\end{figure}

\begin{observationC}[{\cite[Satz~1.8]{gallai67transitiv}}]\label{obs:modcontractionprime}
	If $G$ and $\overline{G}$ are connected, then the modular contraction~$G_{\sim}$ of $G$ is prime.
\end{observationC}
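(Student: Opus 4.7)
My plan is to prove the contrapositive-style contradiction: assuming $G_\sim$ has a non-trivial module, lift it back to a non-trivial, non-maximal proper module of $G$, contradicting maximality.

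First I would dispose of the trivial case $|V(G)|=1$, where $G_\sim$ has a single vertex and is prime by definition. So assume $|V(G)|>1$. Since $G$ and $\overline{G}$ are both connected, the results of Section~\ref{sec:moddecompdefinition} guarantee that the maximal proper modules $\{D_G(v)\mid v\in V\}$ form a partition of $V$, so $V(G_\sim) = V\modout_{\sim_G}$ with each vertex $v\modout_{\sim_G} = D_G(v)$ a maximal proper module of $G$.

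Suppose for contradiction that $G_\sim$ has a module $N$ with $1 < |N| < |V(G_\sim)|$. Define
\begin{align*}
  M \isdef \bigcup_{w\modout_{\sim_G} \in N} D_G(w) \ \subseteq\ V.
\end{align*}
The main step is to verify that $M$ is a module of $G$. Let $u \in V\setminus M$; then $D_G(u)\modout_{\sim_G} \notin N$, and in particular $D_G(u)$ is disjoint from every $D_G(w)$ with $w\modout_{\sim_G}\in N$. Take any two vertices $x,x' \in M$ and let $w\modout_{\sim_G}, w'\modout_{\sim_G}\in N$ be the (not necessarily distinct) vertices of $G_\sim$ with $x\in D_G(w)$ and $x'\in D_G(w')$. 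By Observation~\ref{obs:ModuleEdges} applied to the disjoint modules $D_G(w)$ and $D_G(u)$ (and likewise $D_G(w')$ and $D_G(u)$), adjacency of $u$ to $x$ in $G$ is determined by adjacency of $w\modout_{\sim_G}$ and $D_G(u)\modout_{\sim_G}$ in $G_\sim$, and similarly for $x'$. Because $N$ is a module of $G_\sim$ and $D_G(u)\modout_{\sim_G} \notin N$, the edge $\{w\modout_{\sim_G}, D_G(u)\modout_{\sim_G}\}$ is present in $G_\sim$ iff $\{w'\modout_{\sim_G}, D_G(u)\modout_{\sim_G}\}$ is. Hence $\{u,x\}\in E(G) \iff \{u,x'\}\in E(G)$, so $M$ is a module of $G$.

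Finally I would derive the contradiction. Because $|N| < |V(G_\sim)|$ there is some maximal proper module of $G$ not included in $M$, so $M\subsetneq V$, i.e., $M$ is a proper module. Because $|N|\geq 2$, the set $M$ is a disjoint union of at least two maximal proper modules of $G$, so $M$ strictly contains some $D_G(v)$. This contradicts the maximality of $D_G(v)$ among proper modules. Therefore no such $N$ exists and $G_\sim$ is prime. The only subtlety in the argument is the use of Observation~\ref{obs:ModuleEdges} to translate adjacency in $G_\sim$ to adjacency in $G$ uniformly across vertices of $M$; once that is in place the rest is bookkeeping.
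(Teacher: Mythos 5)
Your proof is correct. The paper itself does not spell out an argument for this observation; it only cites Gallai~\cite[Satz~1.8]{gallai67transitiv}, so there is no in-paper proof to compare against. Your argument is the natural one and almost certainly mirrors Gallai's: given a nontrivial module $N$ of $G_{\sim}$, lift it to $M = \bigcup_{w\modout_{\sim_G}\in N} D_G(w)$, use Observation~\ref{obs:ModuleEdges} together with the definition of the edges of $G_{\sim}$ to check that $M$ is a module of $G$, observe that $M$ is proper (since $N$ omits some vertex of $G_{\sim}$ and the $D_G(v)$ partition $V$) and strictly contains some $D_G(v)$ (since $|N|\geq 2$), and thereby contradict the maximality of $D_G(v)$. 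All the steps are sound, including the case $|V(G)|=1$ and the use of the partition property from Section~\ref{sec:moddecompdefinition}.
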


\begin{observation}
For every graph $G$, the modular contraction of $G$ is isomorphic to an induced subgraph of $G$.
\end{observation}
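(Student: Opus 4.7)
The plan is to construct an explicit isomorphism between the modular contraction $G_{\sim}$ and an induced subgraph of $G$ by choosing one representative from each equivalence class of $\sim_G$.

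First I would fix, for each equivalence class $C \in V\modout_{\sim_G}$, an arbitrary representative $r(C) \in C$, and let $R \isdef \set{r(C) \mid C \in V\modout_{\sim_G}} \subseteq V$. The candidate isomorphism is the mapping $\varphi \colon V(G_{\sim}) \to R$ with $\varphi(C) \isdef r(C)$. Injectivity and surjectivity of $\varphi$ onto $R$ are immediate from the fact that the equivalence classes $\{D_G(v) \mid v \in V\}$ form a partition of $V$ and each class contributes exactly one representative.

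The key step is showing that $\varphi$ preserves and reflects edges between $G_{\sim}$ and the induced subgraph $G[R]$. Here I would use the fact, already established in Section~\ref{sec:moddecompdefinition}, that each equivalence class $D_G(v)$ is a module of $G$ (it is either a connected component of $G$, a connected component of $\overline{G}$, or a maximal proper module of $G$, and each of these is a module). Thus any two distinct classes $C_1, C_2$ are disjoint modules, and Observation~\ref{obs:ModuleEdges} applies: either no edges exist between $C_1$ and $C_2$, or every vertex in $C_1$ is adjacent to every vertex in $C_2$. By the definition of $G_{\sim}$, the classes $C_1$ and $C_2$ are adjacent in $G_{\sim}$ iff some (equivalently every) pair of vertices in $C_1 \times C_2$ is adjacent in $G$, which by the observation holds iff $\{r(C_1), r(C_2)\} \in E$, that is, iff $\{\varphi(C_1), \varphi(C_2)\}$ is an edge of $G[R]$.

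There is no real obstacle here; the argument is essentially a direct unfolding of the definitions together with Observation~\ref{obs:ModuleEdges}. The one thing to be careful about is the trivial case where $|V(G)| = 1$: then $V\modout_{\sim_G}$ has a single element and $G_{\sim}$ is a single vertex, so $R$ is just the vertex of $G$ itself and $G[R] = G \cong G_{\sim}$. The argument above handles this uniformly.
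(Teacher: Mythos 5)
Your proof is correct. The paper states this observation without proof, but the argument you give — choosing a transversal $R$ of the partition $\{D_G(v)\mid v\in V\}$ into modules, noting via Observation~\ref{obs:ModuleEdges} that adjacency in $G_\sim$ is independent of the choice of representatives (a fact the paper itself invokes to show $G_\sim$ is well-defined), and concluding that $C\mapsto r(C)$ is an isomorphism onto $G[R]$ — is the standard and evidently intended one.
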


\noindent
For all modules $D_{i,v}$ of $G$, we denote  
the modular contraction $G[D_{i,v}]_\sim$ of $G[D_{i,v}]$ for all $i\leq n$ and $v\in V\!$ by \emph{$G_{i,v}$}.
Notice that $G_{0,v}$ is the modular contraction $G_\sim$ of $G$. 

\subsection{Representation of an Ordered Graph}\label{sec:representation}
In the following we introduce the representation of an ordered graph $G$.

Each ordered graph $G=(V,E,\leq)$ is isomorphic to an ordered graph where the vertex set is $[|V|]$ 
and the linear order is the natural order $\leq_{[|V|]}$ on $[|V|]$.
Thus, we suppose the vertex set of our ordered graph $G$ is $[|V|]$ and the linear order of $G$ is $\leq_{[|V|]}$.
We use the representation to encode the ordered graph in a binary relation.
Later, when we want to color the vertices of graphs with ordered graphs, we use these representations as colors instead.
As a result we obtain an $\LO$-col\-ored graph.

Let $G$ be an ordered graph with vertex set $[n]$ and linear order $\leq_{[n]}$.
We encode the ordered graph $G$ in a symmetric binary relation $\rep(G)\subseteq [n]^2$:
\[\rep(G):=\{(l,l')\mid \{l,l'\}\in E(G)\}\cup \{(n,n)\}.\]
We call $\rep(G)$ the \emph{representation} of $G$.
We can reinterpret every representation $R\subseteq N(G)^2$ of an ordered graph as an ordered graph $\graph(R)$.
Let $n'\!\in N(G)$ be the only number with $(n'\!,n')\in R$.
We let
\begin{align*}
	V(\graph(R))&:=[n']\;,\\
	E(\graph(R))&:=\big\{\{l_1,l_2\}\;\big\vert\; (l_1,l_2)\in R \setminus\{(n'\!,n')\}\big\}\text{\quad and}\\
	\leq\!\!(\graph(R))&:=\; \leq_{[n']}.
\end{align*}
We call $\graph(R)$ the \emph{ordered graph} of relation~$R$.
It is easy to see that $\graph(\rep(G))=G$.

\subsection{The Modular Decomposition Theorem}\label{sec:moddecompuntersection}

In the following we present the Modular Decomposition Theorem and a proof of it.

For a class $\CC$ of graphs that is closed under induced subgraphs, 
we let \emph{$\CC^*_{\textup{prim}}$} be the class of all \LO-col\-ored graphs ${H^*\hspace{-2pt}=\hspace{-1pt}(U\hspace{-1pt},V\hspace{-2pt},E,M,\trianglelefteq,L)}$ 
where the underlying graph  $(V,E)$ is a prime graph in $\CC$ and $|V|\geq 4$.

\begin{theorem}[Modular Decomposition Theorem]\label{thm:ModDecompTh}
 Let $\CC$ be a class of graphs that is closed under induced subgraphs.
 If $\CC^*_{\textup{prim}}$ admits $\FPC$-de\-fin\-a\-ble (parameterized) canonization, 
 then $\CC$ admits $\FPC$-de\-fin\-a\-ble canonization. 
\end{theorem}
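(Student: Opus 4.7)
My plan is to build an $\FPC$-definable parameterized canonization of $\CC$ by structural recursion on the modular decomposition tree of the input graph $G$, and then invoke Lemma~\ref{lem:parametersInCanonizations} to eliminate the parameters. By Theorem~\ref{thm:varphiD} the modular decomposition $\{D_{i,v}\}$ (and hence the entire modular decomposition tree) is already $\FPC$-definable, so the tree may be treated as basic data accessible to the construction.

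The core step is an inductive one: processing the tree from the leaves upward, for each module $D_{i,v}$ I would produce an ordered copy of $G[D_{i,v}]$, assuming that ordered copies have already been built for its children (the maximal proper submodules of $D_{i,v}$). Three cases arise according to the connectivity of $G[D_{i,v}]$. If $G[D_{i,v}]$ is disconnected, the children are its connected components, which are independent, so I concatenate the already-computed child canons ordered lexicographically by their representations; the co-disconnected case is symmetric, replacing non-edges across children by edges. In the remaining case, $G[D_{i,v}]$ and its complement are connected and $|D_{i,v}|>1$, so by Observation~\ref{obs:modcontractionprime} the modular contraction $G_{i,v}$ is prime. Here, after a trivial separate treatment of primes on at most three vertices (only singletons are prime under our convention), I would color each vertex of $G_{i,v}$ with the representation $\rep(\cdot)$ of the canon of its corresponding child module, using the counting-transduction machinery of Definition~\ref{def:CountingTransduction} so that all colors live as binary relations on the common linearly ordered number sort $N(G)$. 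The resulting $\LO$-colored prime graph belongs to $\CC^*_{\textup{prim}}$, and invoking the hypothesized (parameterized) $\FPC$-canonization of $\CC^*_{\textup{prim}}$ on it yields a linear order on the children which, combined with the child canons themselves, reconstructs an ordered copy of $G[D_{i,v}]$.

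I would implement this recursion as a simultaneous inflationary fixed-point iteration that builds the canon of $D_{i,v}$ only after all canons of properly contained modules are available; since the tree has depth at most $n$, the fixed point closes in polynomially many stages. Composition with the canonization transduction for $\CC^*_{\textup{prim}}$ is handled by Proposition~\ref{prop:composition}, and the Transduction Lemma (Proposition~\ref{prop:transduction-lemma}) allows each such invocation to be pulled back to an $\FPC$-formula on $G$. A final application of Lemma~\ref{lem:parametersInCanonizations} then removes any auxiliary parameters. The hardest part, I expect, is the bookkeeping in the prime case: the canons of the various children live on initial segments of $N(G)$ of different lengths, and they have to be glued into a single $\LO$-coloring of $G_{i,v}$ in a uniform $\FPC$-definable way. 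This is exactly what motivates the notion of $\LO$-colored graphs (colors encoded as binary relations over one shared linearly ordered universe) together with the counting transductions of Definition~\ref{def:CountingTransduction}, which make the required uniform encoding possible.
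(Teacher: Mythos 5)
Your proposal follows essentially the same route as the paper's proof: a recursive canonization along the modular decomposition tree implemented as a simultaneous inflationary fixed-point, with the modular contraction $G_{i,v}$ turned into an $\LO$-colored graph by coloring each vertex with the representation of the already-computed child canon and then passed to the hypothesized canonization of $\CC^*_{\textup{prim}}$, using the Transduction Lemma, Proposition~\ref{prop:composition}, and Lemma~\ref{lem:parametersInCanonizations} exactly as you indicate. The only organizational difference is that you dispatch the disconnected and co-disconnected cases ad hoc (lexicographic concatenation of child canons), whereas the paper first extends the given canonization to $\CC^*_{\CK\CI}=\CC^*_{\textup{prim}}\cup\KKI$ via Observation~\ref{obs:transductionCKI} so that one and the same canonization call uniformly handles prime, complete, and edgeless contractions; the two choices are functionally equivalent.
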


\noindent
Let $\KKI$  be the class of all $\LO$-col\-ored graphs where the underlying graph is complete or edgeless. 
For a class $\CC$ of graphs that is closed under induced subgraphs, we let
 ${\CC_{\CK\CI}^*:=\CC^*_{\textup{prim}}\cup \KKI}$. 
Notice that $\CC_{\CK\CI}^*$ contains all $\LO$-col\-ored graphs where the underlying graph is a prime graph in $\CC$, 
because every prime graph with less than $4$ vertices is complete or edgeless.
The following two observations show that it suffices to prove the Modular Decomposition Theorem under the assumption that
the class $\CC_{\CK\CI}^*$ admits $\FPC$-de\-fin\-a\-ble canonization. 

Observation~\ref{obs:parametersInCanonizations} is a direct consequence of Lemma~\ref{lem:parametersInCanonizations}. 
\begin{observation}\label{obs:parametersInCanonizations}
If there exists a parameterized $\FPC$-can\-on\-iza\-tion of $\CC^*_{\textup{prim}}$, 
then there exists an $\FPC$-can\-on\-iza\-tion of $\CC^*_{\textup{prim}}$.
\end{observation}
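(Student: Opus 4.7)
The plan is to apply Lemma~\ref{lem:parametersInCanonizations} directly to the class $\CC^*_{\textup{prim}}$. Recall that $\LO$-colored graphs were defined in Section~\ref{chp:prelims} as logical structures over the vocabulary $\tau_{\LO} = \{V, E, M, \trianglelefteq, L\}$ (with the universe $U = V \dcup M$), so $\CC^*_{\textup{prim}}$ is in particular a class of $\tau_{\LO}$-structures. Lemma~\ref{lem:parametersInCanonizations} is stated for arbitrary classes of $\tau$-structures, so its hypothesis is exactly the premise of the observation applied to $\tau = \tau_{\LO}$ and $\CC = \CC^*_{\textup{prim}}$.

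Concretely, I would first note that by assumption there is a parameterized $\FPC[\tau_{\LO},\tau_{\LO}\cup\{\leq\}]$-transduction $\Theta(\bar{x})$ that canonizes every $H^* \in \CC^*_{\textup{prim}}$ in the sense of Section~\ref{sec:canonization}. This is the meaning of ``$\CC^*_{\textup{prim}}$ admits $\FPC$-definable parameterized canonization'', i.e., the premise of Lemma~\ref{lem:parametersInCanonizations}. Invoking the lemma then yields an $\FPC[\tau_{\LO},\tau_{\LO}\cup\{\leq\}]$-transduction $\Theta'$ without parameters which canonizes every structure in $\CC^*_{\textup{prim}}$, i.e., an $\FPC$-canonization of $\CC^*_{\textup{prim}}$ in the parameter-free sense, as required.

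There is no real obstacle here since all the work is done inside Lemma~\ref{lem:parametersInCanonizations}: the observation is purely an instantiation. The only thing to check is that $\CC^*_{\textup{prim}}$ fits the framework of that lemma, and it does, since the lemma places no restriction on the class $\CC$ beyond it being a class of $\tau$-structures for some vocabulary $\tau$. Hence the proof is a one-line invocation of the lemma, which is exactly why the statement is labelled an observation.
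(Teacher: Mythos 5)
Your proposal is correct and matches the paper's argument exactly: the paper states that Observation~\ref{obs:parametersInCanonizations} is a direct consequence of Lemma~\ref{lem:parametersInCanonizations}, which is precisely the instantiation you carry out. Your additional remark that $\CC^*_{\textup{prim}}$ is a class of $\{V,E,M,\trianglelefteq,L\}$-structures and so falls within the scope of the lemma is the right (and only) point to check.
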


\begin{observation}\label{obs:transductionCKI}
 Let $\CC$ be a class of graphs that is closed under induced subgraphs.
 If $\CC^*_{\textup{prim}}$ admits $\FPC$-de\-fin\-a\-ble canonization, 
 then $\CC_{\CK\CI}^*$ admits $\FPC$-de\-fin\-a\-ble canonization.
\end{observation}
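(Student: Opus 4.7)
The plan is to observe that $\CC^*_{\textup{prim}}$ and $\KKI$ are disjoint subclasses of $\CC_{\CK\CI}^*$, and then to canonize each piece separately in $\FPC$, combining the two canonizations via a case distinction that is itself $\FO$-definable. First, I would verify disjointness: every graph in $\CC^*_{\textup{prim}}$ has at least $4$ vertices, while a complete or edgeless graph on $4$ or more vertices is not prime, since any two-element subset of its vertex set is a non-trivial module (cf.\ Observation~\ref{obs:ModuleEdges}). Hence $\CC^*_{\textup{prim}}\cap \KKI =\emptyset$, and the hypothesis supplies an $\FPC$-can\-on\-iza\-tion of $\CC^*_{\textup{prim}}$ via Observation~\ref{obs:parametersInCanonizations}.

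Next, I would exhibit an $\FPC$-can\-on\-iza\-tion of the class $\KKI$ directly. For an $\LO$-col\-ored graph $G^*=(U,V,E,M,\trianglelefteq,L)$ whose underlying graph $(V,E)$ is complete or edgeless, the isomorphism type of $G^*$ is determined by the bit ``complete vs.\ edgeless'', the linearly ordered set $(M,\trianglelefteq)$, and the multiset of colors $\{L_v\mid v\in V\}$ on the vertices. The linear order $\trianglelefteq$ on $M$ induces a linear order on the colors $\{L_v\mid v\in V\}$ (for instance, by lexicographic comparison of natural colors). Using the $\FOC$ counting operator, I can assign to every vertex $v$ the number $\#\{w\in V\mid L_w < L_v\text{ or }(L_w=L_v\text{ and some canonical tie-breaker holds)}\}$, and then use this to define an $\FPC$-trans\-duc\-tion whose domain variables produce the canon as an ordered copy of $G^*$ on an initial segment of the number sort, with $M$ and $\trianglelefteq$ copied verbatim and the color relation $L$ reassigned according to the sorted ordering.

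Finally, I would combine the two canonizations. The property of $(V,E)$ being complete or edgeless is expressed by the $\FO$-sen\-tence $\forall v,w\bigl(v\neq w\limplies \{v,w\}\in E\bigr)\lor \forall v,w\,\{v,w\}\notin E$; hence there is a formula $\chi(\,)$ with no free variables that tells us whether a given input lies in $\KKI$ or in $\CC^*_{\textup{prim}}$. I then form the combined $\FPC$-trans\-duc\-tion whose defining formulas are of the shape $(\chi\land \theta^{\KKI})\lor(\lnot\chi\land\theta^{\textup{prim}})$, where $\theta^{\KKI}$ and $\theta^{\textup{prim}}$ are the corresponding formulas of the two canonizations. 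Since $\CC^*_{\CK\CI}=\CC^*_{\textup{prim}}\dcup\KKI$ and the disjunction is exclusive, this yields a well-defined $\FPC$-can\-on\-iza\-tion of $\CC^*_{\CK\CI}$.

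The only mildly technical step is the construction of the $\KKI$-can\-on\-iza\-tion, because one has to carefully handle the two-sorted universe $U=V\dcup M$ and produce an output structure in the correct $\LO$-col\-ored-graph vocabulary; however this is a routine application of the counting apparatus of $\FPC$ and poses no real obstacle.
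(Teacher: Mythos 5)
Your proposal is correct and follows essentially the same route as the paper's proof: both canonize $\KKI$ directly by using the induced lexicographic order on natural colors, and then combine this with the given canonization of $\CC^*_{\textup{prim}}$ via a first-order case split on whether the underlying graph is complete or edgeless. One small point of caution: there is no ``canonical tie-breaker'' definable in $\FPC$ for distinguishing vertices that carry the same $\LO$-color, nor do you need one---since the canon's universe is an initial segment of the number sort, it suffices to determine for each position $p\in[|V|]$ which color it receives (the $p$-th color in the sorted multiset), and vertices with identical colors are interchangeable, so the ordered copy is well-defined without any choice.
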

\begin{proof}
 Let $\CC$ be closed under induced subgraphs, and let $\Theta^{\text{c}}$ be an $\FPC$-can\-on\-iza\-tion of $\CC^*_{\textup{prim}}$.
We extend $\Theta^{\text{c}}$ to an $\FPC$-can\-on\-iza\-tion of the class $\CC_{\CK\CI}^*$.

It is easy to describe in $\FPC$ whether the underlying graph $H$ of an \LO-col\-ored graph $H^*$ is complete or edgeless.
Also, it is not hard to define the canon of an $\LO$-col\-ored graph~$H^*\!\in \KKI$ in $\FPC$.
We can use the lexicographical order of the vertices' 
natural colors and the linear order of the basic color elements to define an ordered copy of $H^*$
(see \cite[Example 17]{diss}). 
Thus, we can extend $\Theta^{\text{c}}$ in such a way that it first detects
whether $\LO$-col\-ored graph $H^*$ is in $\KKI$ or not.
If $H^*\!\in\KKI$, then  $\Theta^{\text{c}}$ defines the canon as explained above.
If $H^*\!\not\in\KKI$, then $\Theta^{\text{c}}$ behaves as originally intended.
\end{proof}

\bigskip
\noindent
For the remainder of this section, let $\CC$ be a graph class that is closed under induced subgraphs.
Further, let $\Theta^{\text{c}}$ be an $\FPC$-can\-on\-iza\-tion of the class 
$\CC_{\CK\CI}^*$, 
and let $f^*$ be the canonization mapping defined by  $\Theta^{\text{c}}\!$.
We show that there exists an $\FPC$-can\-on\-iza\-tion of $\CC$.

\subsection*{\textbf{Sketch of the Proof}}
In order to show the Modular Decomposition Theorem
the idea is to construct 
the canon of each $G\in \CC$ recursively using the modular decomposition. 
Let $n$ be the number of vertices of $G$.
Then for all $i\in \{n,\dots,0\}$,
starting with $i=n$,
we inductively define the canons 
of the induced subgraphs $G[D_{i,v}]$
for all $v\in V\!$.
We can trivially define the canon for each module that is a singleton.
For the inductive step 
we consider the modular contraction $G_{i,v}$ of $G[D_{i,v}]$.
For all $i<n$ and $v\in V\hspace{-1pt}$, the graph~$G_{i,v}$ is prime if $G[D_{i,v}]$ and $\overline{G}[D_{i,v}]$ are connected,
complete if  $\overline{G}[D_{i,v}]$ is not connected or
edgeless if $G[D_{i,v}]$ is not connected.
We transform $G_{i,v}$ into an \LO-col\-ored graph~$G^*_{i,v}$ by coloring
every vertex \raisebox{0cm}[\ht\strutbox][6pt]{$w\modout_{{\sim_{G[D_{i,v}]}}}$} of $G_{i,v}$ with 
the representation of the canon of 
the graph $G[D_{i+1,w}]$. 
The canon of $G[D_{i+1,w}]$ is definable by inductive assumption.
Then $G^*_{i,v}\in \CC_{\CK\CI}^*$.
Thus, we can apply  $f^*$ to get $G^*_{i,v}$'s canon $K^*_{i,v}$.
Now each vertex of  $K^*_{i,v}$ stands for a module, 
and the color of every vertex is the representation of the canon of the graph induced by this module.
Therefore, we can use the coloring to replace each vertex of  $K^*_{i,v}$ by the graph induced by the module that the vertex represents.
From the coloring, we also obtain a linear order on each module.
We use the order on the vertices of $K^*_{i,v}$ to extend the linear orders on the modules to a linear order on the vertex set of the resulting~graph.

\bigskip

\noindent
In the following we present a detailed proof of the Modular Decomposition Theorem.

First, we shortly introduce notation that simplifies the construction of an $\FPC$-can\-on\-iza\-tion.  
Then, we start by recursively defining the  canonization mapping $f$ which maps each graph ${G\in \CC}$ to its canon $f(G)$. 
Afterwards we show that this canonization mapping is $\FPC$-de\-fin\-a\-ble.

\subsubsection*{\textbf{Notation}}

Throughout this section, we use 
$x,y,z$ and variants like $x_1,y'\!,z^*$ of these letters for structure variables, 
and $o,p,q,r,s$ and variants for number variables.
There exist $\FOC$-for\-mu\-las $\zero(p)$, $\one(p)$ and $\ord(p)$ that define the numbers $0$,~$1$ and~$|U(A)|$ for all structures $A$, 
and an $\FPC$-for\-mu\-la $\plus(p,q,r)$ that defines the addition function \cite[Example~2.3.5]{groheDC}.
We write $p=0$, $p=1$ and $p+q=r$ instead of $\zero(p)$, $\one(p)$ and $\plus(p,q,r)$, respectively, and use similar abbreviations.
We denote $\lnot u=v$ by $u\not= v$ and $p\leq q\land \lnot p=q$ by $p < q$,
and abbreviate $\exists u_1 \dots \exists u_k$ by $\exists u_1,\dots,u_k$ and 
$\lnot\#o\,\varphi = 0$ by~$\exists o\, \varphi$.

\subsubsection*{\textbf{Canonization Mapping}}\label{app:sec:MDT-canonmapping}
In the following we define the canonization mapping $f$,
which maps each graph ${G\in \CC}$ to the canon $f(G)$.  
We let the vertex set of canon $f(G)$ 
be $[|V(G)|]$. The linear order on the vertex set is the natural order $\leq_{[|V(G)|]}$ on $[|V(G)|]$.

If ${|V(G)|\hspace{-1.5pt}=\hspace{-1.5pt}1}$, then the canon of $G$ is ${f(G)\hspace{-1.5pt}:=\hspace{-1.5pt}(\{1\},\emptyset,\leq_{\{1\}}\hspace{-1pt})}$. 
Now in order to define the canonization mapping $f$ on graphs $G$ with $|V(G)|>1$,  
we use their decomposition into modules to recursively
construct the canon of a graph from the canons of the induced subgraphs on its decomposition modules. 
In a first step we define $G_\sim^*$, the \emph{\LO-col\-ored graph of $G$}, which has $G_\sim$, 
the modular contraction of $G$, as underlying graph.
To obtain $G_\sim^*$ we color every vertex $w\modout_{\sim_G}$ of $G_\sim$ with 
the representation of the canon $f(G[D_G(w)])$ of $G[D_G(w)]$. 
More precisely, we let 
\begin{align*}
G_\sim^*&:=(U_{G_\sim^*},V_{G_\sim^*},E_{G_\sim^*},M_{G_\sim^*},\unlhd_{G_\sim^*},L_{G_\sim^*})\qquad\qquad\
\end{align*}
where\vspace{-1mm}
\begin{align*}
U_{G_\sim^*}&:=V_{G_\sim^*}\dcup M_{G_\sim^*},\\
(V_{G_\sim^*},E_{G_\sim^*})&:=G_\sim,\\
M_{G_\sim^*}&:=[0,|V(G)|],\\
\unlhd_{G_\sim^*}&:=\;\leq_{[0,|V(G)|]}, \text{\quad and}\\
 L_{G_\sim^*}&:= \{(v,i,j)\in V_{G_\sim^*}\times M_{G_\sim^*}^2 \mid (i,j)\in \rep(f(G[D_G(v)]))\}.\\[-1em]
\end{align*}
The construction of $G_{\sim}^*$ is illustrated in Figure~\ref{fig:contraction}.

\begin{figure}[htbp]
\captionsetup[subfigure]{position=b, textfont=normalfont,justification=raggedright, singlelinecheck=on,labelformat=empty}
\centering
\vspace{2mm}

\subcaptionbox{Graph $G$\label{fig:contraction1}}[0.20\textwidth]{\scalebox{1.25}{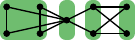}\vspace*{7.7mm}}
\hspace*{0.5mm}\subcaptionbox{Modular contraction $G_{\sim}$ and the maximal proper modules $D_{G}(w)$, $w\in V$\label{fig:contraction2}}[0.26\textwidth]{%
\scalebox{1.25}{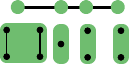}\vspace*{5.7mm}}
\hspace*{2mm}\subcaptionbox{Modular contraction $G_{\sim}$ and the canons $f(G[D_{G}(w)])$, $w\in V$\label{fig:contraction3}}[0.27\textwidth]{%
\scalebox{1.25}{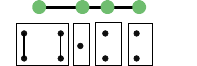}\vspace*{5.7mm}}
\hspace*{1mm}\subcaptionbox{The graph $G_{\sim}^*$\label{fig:contraction4}}[0.22\textwidth]{%
\hspace*{2.5mm}\scalebox{1.25}{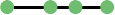}\newline
\begin{minipage}[t]{0.049\textwidth}
 \tiny{$\{\hspace{-0.5pt}(\hspace{-0.5pt}1,\!2\hspace{-0.5pt}),
 \newline \hphantom{\{}(\hspace{-0.5pt}2,\!1\hspace{-0.5pt}),
 \newline \hphantom{\{}(\hspace{-0.5pt}3,\!4\hspace{-0.5pt}), 
 \newline \hphantom{\{}(\hspace{-0.5pt}4,\!3\hspace{-0.5pt}), 
 \newline \hphantom{\{}(\hspace{-0.5pt}4,\!4\hspace{-0.5pt})\hspace{-0.5pt}\}$}
\end{minipage}
\begin{minipage}[t]{0.046\textwidth}
\tiny{$\{\hspace{-0.5pt}(\hspace{-0.5pt}1,\!1\hspace{-0.5pt})\hspace{-0.5pt}\}$}
\end{minipage}
\begin{minipage}[t]{0.046\textwidth}
\tiny{$\{\hspace{-0.5pt}(\hspace{-0.5pt}2,\!2\hspace{-0.5pt})\hspace{-0.5pt}\}$}
\end{minipage}
\begin{minipage}[t]{0.046\textwidth}
\tiny{$\{\hspace{-0.5pt}(\hspace{-0.5pt}2,\!2\hspace{-0.5pt})\hspace{-0.5pt}\}$}
\end{minipage}
}
\caption{Construction of $G_{\sim}^*$}
\label{fig:contraction}
\end{figure}

As $G_\sim$, the underlying graph of $G_\sim^*$, is a modular contraction, 
$G_\sim$ is prime, complete or edgeless. 
Therefore,  we can use the given canonization mapping $f^*$ to obtain the canon of~$G_\sim^*$:\vspace{-1mm}
\begin{align*}
K_\sim^*=(U_{K_\sim^*},V_{K_\sim^*},E_{K_\sim^*},M_{K_\sim^*},\unlhd_{K_\sim^*},L_{K_\sim^*},\leq_{K_\sim^*}).
\end{align*}

To get the canon of $G$, we replace each 
vertex $w\in V_{K_\sim^*}$ of the ordered \LO-col\-ored graph~$K_\sim^*$ by the graph represented 
by $w$'s natural color $L_w^{\N}$.
Since each $\LO$-col\-ored graph consists of a linear order on the basic color elements, 
the natural colors of isomorphic \LO-col\-ored graphs are equal.
Hence, the natural colors of $K_\sim^*$ match the (natural) colors of $G_\sim^*$, 
which again encode the canons of the subgraphs induced by the modules the vertices represent.
Thus, we replace the vertices of $K_\sim^*$ by the corresponding canons. 
We use the linear order on the vertices 
(given by the linear order $\leq_{K_\sim^*}$ restricted to the vertex set $V_{K_\sim^*}$) 
to replace one vertex  after the other.
We name the new vertices consecutively according to the time of their installment (and their order in the respective canon).
Figure~\ref{fig:contractionteil2} shows the construction of $f(G)$.

\begin{figure}[htbp]
\captionsetup[subfigure]{position=b, textfont=normalfont,justification=raggedright, singlelinecheck=on,labelformat=empty}
\centering
\subcaptionbox{A possible canon $K_{\sim}^*$ of graph $G_{\sim}^*$ where 
$w\!<_{K_\sim^*}\!\hspace{0.5pt}x\hspace{0.5pt}\!<_{K_\sim^*}\!\hspace{0.5pt}y\hspace{0.5pt}\!<_{K_\sim^*}\!z$,
$a\,\unlhd_{K_\sim^*}\hspace{-0.55pt}b\hspace{0.15pt}\unlhd_{K_\sim^*}\hspace{-0.85pt}c\hspace{0.15pt}\unlhd_{K_\sim^*}\hspace{-0.5pt}d$
\label{fig:contraction5}}[0.23\textwidth]{%
\hspace*{-1.5mm}\scalebox{1.0}{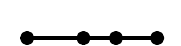}\newline
\hspace*{-2mm}\begin{minipage}[t]{0.049\textwidth}
 \tiny{$\{\hspace{-0.5pt}(\hspace{-0.5pt}a,\!b\hspace{-0.5pt}),
 \newline \hphantom{\{}(\hspace{-0.5pt}b,\!a\hspace{-0.5pt}),
 \newline \hphantom{\{}(\hspace{-0.5pt}c,\!d\hspace{-0.5pt}), 
 \newline \hphantom{\{}(\hspace{-0.5pt}d,\!c\hspace{-0.5pt}), 
 \newline \hphantom{\{}(\hspace{-0.5pt}d,\!d\hspace{-0.5pt})\hspace{-0.5pt}\}$}
\end{minipage}
\begin{minipage}[t]{0.046\textwidth}
\tiny{$\{\hspace{-0.5pt}(\hspace{-0.5pt}a,\!a\hspace{-0.5pt})\hspace{-0.5pt}\}$}
\end{minipage}
\begin{minipage}[t]{0.046\textwidth}
\tiny{$\{\hspace{-0.5pt}(\hspace{-0.5pt}b,\!b\hspace{-0.5pt})\hspace{-0.5pt}\}$}
\end{minipage}
\begin{minipage}[t]{0.046\textwidth}
\tiny{$\{\hspace{-0.5pt}(\hspace{-0.5pt}b,\!b\hspace{-0.5pt})\hspace{-0.5pt}\}$}
\end{minipage}
}
\hspace*{2mm}\subcaptionbox{The underlying graph and the natural colors of  $K_{\sim}^*$\label{fig:contraction6}}[0.22\textwidth]{%
\hspace*{-0.5mm}\scalebox{1.0}{\input{graphics/fig22.pdf_tex}}\newline
\hspace*{-1mm}\begin{minipage}[t]{0.049\textwidth}
 \tiny{$\{\hspace{-0.5pt}(\hspace{-0.5pt}1,\!2\hspace{-0.5pt}),
 \newline \hphantom{\{}(\hspace{-0.5pt}2,\!1\hspace{-0.5pt}),
 \newline \hphantom{\{}(\hspace{-0.5pt}3,\!4\hspace{-0.5pt}), 
 \newline \hphantom{\{}(\hspace{-0.5pt}4,\!3\hspace{-0.5pt}), 
 \newline \hphantom{\{}(\hspace{-0.5pt}4,\!4\hspace{-0.5pt})\hspace{-0.5pt}\}$}
\end{minipage}
\begin{minipage}[t]{0.046\textwidth}
\tiny{$\{\hspace{-0.5pt}(\hspace{-0.5pt}1,\!1\hspace{-0.5pt})\hspace{-0.5pt}\}$}
\end{minipage}
\begin{minipage}[t]{0.046\textwidth}
\tiny{$\{\hspace{-0.5pt}(\hspace{-0.5pt}2,\!2\hspace{-0.5pt})\hspace{-0.5pt}\}$}
\end{minipage}
\begin{minipage}[t]{0.046\textwidth}
\tiny{$\{\hspace{-0.5pt}(\hspace{-0.5pt}2,\!2\hspace{-0.5pt})\hspace{-0.5pt}\}$}
\end{minipage}
}
\hspace*{2mm}\subcaptionbox{The underlying graph and the graphs represented by the natural colors of  $K_{\sim}^*$\label{fig:contraction7}}[0.23\textwidth]{%
\hspace*{-10mm}\scalebox{1.0}{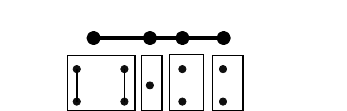}\vspace*{5.0mm}}
\hspace*{1.5mm}\subcaptionbox{Graph $f(G)$\label{fig:contraction8}}[0.23\textwidth]{%
\hspace*{-10mm}\scalebox{1.1}{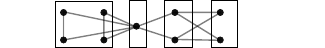}\vspace*{9mm}}
\caption{Construction of $f(G)$}
\label{fig:contractionteil2}
\end{figure}

In the following we describe the construction of the canon $f(G)$ more precisely.
For all vertices $w\in V_{K_\sim^*}$, let $L_w^{\N}$ be the natural color of~$w$, and 
let $n_w$ be  the only element
with $({n_w},{n_w})\in L_w^{\N}$. 
Since the module that $w$ stands for consists of at least one vertex,  such 
an $n_w$ exists and $0< n_w\leq |M_{K_\sim^*}|$. 
\begin{samepage}%
To construct the canon we assign each vertex $n$ of the
ordered graph $\graph(L_w^{\N})$ of representation $L_w^{\N}$ to the number
\begin{align}
	\numb(w,n):= n\ +\!\!\!\!\sum_{\genfrac{}{}{0pt}{}{\ w'<_{K_\sim^*}w,}{\ w'\,\in\,\, V_{K_\sim^*}}}\!\!\!\! n_{w'},\label{eqn:nbmn}
\end{align}
where $w'<_{K_\sim^*}w$ if and only if $w'\leq_{K_\sim^*}w$ and $w'\not=w$.%
\end{samepage}
Clearly, the mapping $\numb$ is a bijection,
that maps $(n,v)$, where $n$ is a vertex in the graph represented 
by vertex $v$'s natural color, to $m\in[|V(G)|]$.

We add a pair of numbers to the edges of $f(G)$ 
if they represent vertices from different modules, and the modules are completely connected; or
they represent vertices from the same module that are connected by an edge.
Thus, we add $\{m_1,m_2\}$ to the edges of $f(G)$ if
\begin{enumerate}
\item\label{enum:newedges1} there exist an edge ${\{w_1,w_2\}\in E_{K_\sim^*}}$ and numbers ${n_1,n_2\in [|M_{K_\sim^*}|]}$ such that
${n_1\leq n_{w_1}}$, ${n_2\leq n_{w_2}}$ and  $(m_1,m_2)=\big(\numb(w_1,n_1),\numb(w_2,n_2)\big)$, or
\item\label{enum:newedges2} there exist a vertex ${w\in V_{K_\sim^*}}$ 
and a pair ${(n_1,n_2)\in L_w^{\N}}$ such that ${n_1\not =n_2}$ and 
 $(m_1,m_2)=\big(\numb(w,n_1),\numb(w,n_2)\big)$.
\end{enumerate}

\noindent
Clearly, the ordered graph $f(G)$ is an ordered copy of~$G$ on the number sort.
Observation~\ref{obs:fcanonization} shows that 
$f$ maps isomorphic graphs from $\CC$ to the same ordered graph.
Hence, $f$ is a canonization mapping. 
Note that Observation~\ref{obs:fcanonization} follows also directly from $f$ being $\FPC$-de\-fin\-a\-ble, 
which is shown in the remainder of Section~\ref{app:sec:MDT-canonmapping}.

\begin{observation}\label{obs:fcanonization}
 For all $G,G'\in \CC$, we have $f(G)=f(G')$ if $G\cong G'\!$.
\end{observation}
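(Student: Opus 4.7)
The plan is to prove Observation~\ref{obs:fcanonization} by induction on $|V(G)|$, exploiting the fact that every stage of the construction of $f(G)$ only depends on isomorphism-invariant data of $G$.

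The base case $|V(G)|=1$ is immediate, since $f$ produces the fixed ordered graph $(\{1\},\emptyset,\leq_{\{1\}})$ for every one-vertex graph.

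For the inductive step, fix an isomorphism $\pi \colon V(G)\to V(G')$. I would first observe that the modular decomposition is isomorphism-invariant: a subset $M\subseteq V(G)$ is a module of $G$ if and only if $\pi(M)$ is a module of $G'$, and similarly for maximal proper modules and connected components. Hence $\pi(D_G(v))=D_{G'}(\pi(v))$ for every $v\in V(G)$, and $G[D_G(v)]\cong G'[D_{G'}(\pi(v))]$ via the restriction of $\pi$. Since each $|D_G(v)|<|V(G)|$ whenever $|V(G)|>1$ (because the partition into maximal proper modules is nontrivial as soon as $|V(G)|>1$), the inductive hypothesis yields $f(G[D_G(v)])=f(G'[D_{G'}(\pi(v))])$ for every $v\in V(G)$.

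Next I would show that $\pi$ descends to an isomorphism of the \LO-col\-ored graphs $G_\sim^*$ and $G'^*_\sim$. The map $\bar\pi\colon V_{G_\sim^*}\to V_{G'^*_\sim}$ sending $w\modout_{\sim_G}\mapsto \pi(w)\modout_{\sim_{G'}}$ is a well-defined graph isomorphism between the modular contractions by the invariance above. Extend $\bar\pi$ by the identity on the basic color elements (note $M_{G_\sim^*}=[0,|V(G)|]=M_{G'^*_\sim}$ and $\unlhd$ is the natural order in both, so this is legal). The color relation $L$ is preserved because $L_w$ is determined by $\rep(f(G[D_G(w)]))$, which by the inductive hypothesis equals $\rep(f(G'[D_{G'}(\pi(w))]))=L_{\pi(w)\modout_{\sim_{G'}}}$. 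Hence $G_\sim^*\cong G'^*_\sim$. Since $f^*$ is a canonization mapping of $\CC_{\CK\CI}^*$, we obtain the identity of canons $K_\sim^*=f^*(G_\sim^*)=f^*(G'^*_\sim)={K'_\sim}^*$.

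Finally, the construction producing $f(G)$ from $K_\sim^*$ uses only the data of the ordered \LO-col\-ored graph $K_\sim^*$ itself, namely the underlying graph, the linear order $\leq_{K_\sim^*}$ on its vertices, and the natural colors $L_w^{\N}$ (which determine the numbers $n_w$ and the function $\numb$ via \eqref{eqn:nbmn}, and then the edge set of $f(G)$ via clauses~\ref{enum:newedges1} and~\ref{enum:newedges2}). Since $K_\sim^*={K'_\sim}^*$, every ingredient coincides, and therefore $f(G)=f(G')$, completing the induction. The only real delicacy is to make sure the identification of $G_\sim^*$ with $G'^*_\sim$ matches the basic color elements correctly; this is painless here because $M_{G_\sim^*}$ is defined purely from $|V(G)|=|V(G')|$ and carries the canonical order, so the induced map on colors is literally the identity.
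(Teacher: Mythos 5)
Your overall plan—induction on $|V(G)|$, isomorphism-invariance of the modular decomposition, and the canonization property of $f^*$—is the same as the paper's. But there is a genuine gap in the step where you write $K_\sim^*=f^*(G_\sim^*)=f^*(G'^*_\sim)={K'_\sim}^*$. A canonization mapping only guarantees that isomorphic structures go to \emph{isomorphic} (ordered) copies, not equal ones; since $G_\sim^*$ and $G'^*_\sim$ are merely isomorphic (their universes generally differ), you only get $K_\sim^*\cong {K'_\sim}^*$, and the middle equality is unjustified.

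This is easy to repair, and the paper does so explicitly: take an isomorphism $g$ between $K_\sim^*$ and $K'^*_\sim$, and observe that $g$ preserves the natural colors (because natural colors are defined relative to the canonical linear order $\unlhd$ on the basic color elements) and the linear order $\leq$ on the vertex sets. The construction of $f(G)$ from $K_\sim^*$—the quantities $n_w$, the function $\numb$, and the two edge-generation rules—depends only on the $\leq$-positions of the vertices and their natural colors, and outputs an ordered graph on $[|V(G)|]$. Since $g$ preserves exactly these data, the outputs from $K_\sim^*$ and ${K'_\sim}^*$ are literally the same ordered graph on $[|V(G)|]=[|V(G')|]$, giving $f(G)=f(G')$. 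You should replace the claimed equality ${K_\sim^*}={K'_\sim}^*$ by this argument; everything else in your proof is correct.
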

\begin{proof}
 Let $h$ be an isomorphism between $G$ and $G'\!$. 
 We show that  ${f(G)=f(G')}$ by induction.
 Clearly, this is the case if $G$ and $G'$ consist of only one vertex.
 Therefore, suppose ${|V(G)|=|V(G')|>1}$.
 As the modular decomposition of a graph is unique, 
the isomorphism~$h$ maps every decomposition module of $G$ to a decomposition module of~$G'$.
Hence, the respective graphs  induced by the decomposition modules of $G$ and $G'$ are isomorphic,
and by inductive assumption $f$ maps them to the same canon.
Further, $h$ induces an isomorphism $h_{\sim}$ between  $G_{\sim}$ and $G'_{\sim}$.
Consequently, the graphs $G_{\sim}^*$ and  $G_{\sim}'^{\,*}$ are isomorphic.
They are mapped  to isomorphic copies   $K_{\sim}^*$ and  $K_{\sim}'^{\,*}$ by $f^*\!$.
Let $g$ be an isomorphism between them.
Clearly, for each vertex $v\in V_{K_{\sim}^*}$, the vertices $v$ and $g(v)$ have the same natural color.
Further, we have $v_1\leq_{K_{\sim}^*} v_2$ if and only if $g(v_1)\leq_{K_{\sim}'^{\,*}} g(v_2)$.
As a consequence, $f(G)=f(G')$. 
\end{proof}

\subsubsection*{\textbf{Defining the Canonization Mapping in {\normalfont$\FPC$}}}
We show that the canonization mapping $f$ is $\FPC$-de\-fin\-a\-ble in the following five steps.

\subsubsection*{\textbf{Step 1: Counting Transduction from Graphs to \LO-Colored Graphs}}

For all modules   $D_{i,v}$ of $G$ with $i\leq V(G)$ and $v\in V(G)$, we denote the \LO-col\-ored graph
$(G[D_{i,v}])_\sim^*$ of $G[D_{i,v}]$ by $G_{i,v}^*$. Notice that the underlying graph of $G_{i,v}^*$ is $G_{i,v}$.

The first step in constructing an $\FPC$-trans\-duc\-tion that defines $f$ is to define
the \LO-col\-ored graph $G^*_{i,v}$ for all ${G\in \CC}$, ${i\in N(G)}$\vspace{-1.5pt} and ${v\in V(G)}$.
For this purpose, we define a parameterized counting transduction
$\Theta^{\raute}(o,z,X)$,\vspace{0.5pt} 
where $o$ is a number variable, $z$ is a structure variable, 
and $X$ is a relational variable of arity~$4$ that ranges over relations $R\subseteq N(G)\times V(G)\times N(G)^2$.
It is a parameterized $\FPC[\{E\},\{V,E,M,\trianglelefteq,L\}]\vspace{1pt}$-count\-ing transduction, 
which maps every graph~$G$ to 
an \LO-col\-ored graph $G_{i,v}^R:=\Theta^{\raute}[G,i,v,R]$ for $(G,i,v,R)\in \Dom(\Theta^{\raute}(o,z,X))$.
For some triples $(i,v,R)\in G^{(o,z,X)}$\vspace{0.5pt} where $R$ is a specific relation depending on $i$ and $v$,
the \LO-col\-ored graph $G_{i,v}^R$ is isomorphic to $G_{i,v}^*$. 
We let
\begin{align*}
\Theta^{\raute}(o,z,X)=\big(& \theta_{\dom}(o,z,X), \theta_{U}(o,z,X,y), \theta_{\approx}(o,z,X,y,y'), \\
&\theta_{V}(o,z,X,y),\theta_{E}(o,z,X,y,y'), \\
&\theta_{M}(o,z,X,p),
\theta_{\trianglelefteq}(o,z,X,p,p'),\theta_{L}(o,z,X,y,p,p')\big),
\end{align*}
where
\begin{align*}
	\theta_{\dom}(o,z,X)&:=\ \lnot\ord(o),\\
	\theta_{U}(o,z,X,y)&:=\ \varphi_D(o,z,y),\\
	\theta_{\approx}(o,z,X,y,y')&:=\	\exists o'\big(\text{$o+1=o'$}\land\varphi_D(o'\!,y,y')\big),\\
	\theta_{V}(o,z,X,y)&:=\ \true,\\
	\theta_{E}(o,z,X,y,y')&:=\  E(y,y'),\\
	\theta_{M}(o,z,X,p)&:=\ \true,\\
	\theta_{\trianglelefteq}(o,z,X,p,p')&:=\  p\leq p'\quad \text{ and}\\
	\theta_{L}(o,z,X,y,p,p')&:= \  \exists o'\big(\text{$o+1=o'$}\land X(o',y,p,p')\big).
\end{align*}
As a reminder, the formula $\varphi_D(o,z,y)$, which was introduced in Theorem~\ref{thm:varphiD}, 
defines the set $D_{i,v}$ of the modular decomposition, i.e., 
for all $i\in N(G)$ and all vertices $v\in V(G)$
we have $\varphi_D[G,i,v;y]=D_{i,v}$.	

Let $G\in \CC$.
We say a triple $(i,v,R)\in G^{(o,z,X)}$ is \emph{suitable} for $G$ if it satisfies  $i<|V(G)|$
and  the following property:
For all  $w\in D_{i,v}$ 
the relation  
\[R_{i+1,w}:=\{(n_1,n_2)\mid (i+1,w,n_1,n_2\in R)\}\] 
is the representation of the canon of 
$G[D_{i+1,w}]$ defined by $f$.
We let \emph{$\suit(G)$} be the set of all suitable triples for $G$.

\begin{lemma}\label{lem:GivRisomGiv*}
Let $G\in\CC$ and let ${(i,v,R)\in G^{(o,z,X)}}$ be a suitable triple for graph $G$.
Then ${(G,i,v,R)\in \Dom(\Theta^{\raute}(o,z,X))}$ and 
$G_{i,v}^R=G_{i,v}^*$.
\end{lemma}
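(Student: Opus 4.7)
The plan is to verify the two claims of the lemma in turn by unfolding the definitions of $\Theta^{\raute}$, of suitability, and of $G^*_{i,v} = (G[D_{i,v}])^*_\sim$.

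For the first claim, $(G, i, v, R) \in \Dom(\Theta^{\raute}(o,z,X))$, I would evaluate the domain formula $\theta_{\dom}(o,z,X) = \lnot \ord(o)$. This is satisfied precisely when $i < |V(G)|$, which is part of the definition of a suitable triple.

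At the heart of the argument for the second claim, I would identify the equivalence relation $\approx$ generated by $\theta_{\approx}$ on $V(G) \dcup N(G)$. By Theorem~\ref{thm:varphiD}, the formula $\theta_{\approx}(o,z,X,y,y') = \exists o'(o+1 = o' \land \varphi_D(o',y,y'))$ evaluated at $(i,v,R)$ holds for a pair $(y,y')$ exactly when $y' \in D_{i+1,y}$. Because $D_{i+1,y'} = D_{i+1,y}$ whenever $y' \in D_{i+1,y}$, this relation is already reflexive, symmetric, and transitive on $V(G)$; since it does not involve number variables, each element of $N(G)$ remains in its own singleton class. Moreover, for $y \in D_{i,v}$ we have $D_{i,y} = D_{i,v}$ and hence $D_{i+1,y} \subseteq D_{i,v}$, so $\approx$ restricted to $D_{i,v}$ coincides exactly with the equivalence relation $\sim_{G[D_{i,v}]}$ whose classes are the modules $\{D_{i+1,w} \mid w \in D_{i,v}\}$ of $G[D_{i,v}]$.

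With $\approx$ identified, the remaining verification is a routine match-up. The universe $(D_{i,v} \dcup N(G))/_\approx$ decomposes into $D_{i,v}/_{\sim_{G[D_{i,v}]}}$, the vertex set of $G[D_{i,v}]_\sim$, together with the number sort. The formula $\theta_V = \true$ picks out these vertex classes, $\theta_E(y,y') = E(y,y')$ gives the correct edge relation on them (well-defined on classes by Observation~\ref{obs:ModuleEdges}), and $\theta_M$ together with $\theta_{\trianglelefteq}$ supplies the basic color elements with their natural order. Finally, $\theta_L(o,z,X,y,p,p') = \exists o'(o+1 = o' \land X(o',y,p,p'))$ assigns to the class of $y \in D_{i,v}$ the color $R_{i+1,y}$, which by suitability of $(i,v,R)$ equals $\rep(f(G[D_{i+1,y}]))$, matching the color prescribed for the corresponding vertex of $G^*_{i,v}$. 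The main obstacle is the identification of $\approx$ in the previous paragraph, where I must confirm that the generated equivalence relation does not accidentally merge classes via $\theta_\approx$-chains that leave $D_{i,v}$; this is handled by the closure $D_{i+1,y} \subseteq D_{i,v}$ for every $y \in D_{i,v}$.
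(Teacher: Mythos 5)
Your proposal is correct and takes essentially the same approach as the paper: a direct unfolding of $\Theta^{\raute}$'s component formulas and the definition of suitability, checking that each piece of $G_{i,v}^R$ matches the corresponding piece of $G_{i,v}^*$. The extra care you take in showing that the relation defined by $\theta_\approx$ is already an equivalence relation on $D_{i,v}$ (so no nontrivial closure is taken and no classes escape $D_{i,v}$) is something the paper asserts without comment, and is a fair point to make explicit.
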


\begin{proof}
Let ${G\in\CC}$. Further, let ${i\in N(G)}$, ${v\in V(G)}$ and  ${R\subseteq N(G)\times V(G)\times N(G)^2}$ be
such that  ${(i,v,R)}\in\suit(G)$. Then, ${i<|V(G)|}$.  Therefore, 
${(G,i,v,R)\in \Dom(\Theta^{\raute}(o,z,X))}$.
Clearly, $\theta_{U}[G,i,v,R;y]$ is the set $D_{i,v}$.
Further, $\theta_{\approx}[G,i,v,R;y,y']$ is the equivalence relation
${\{D_{i+1,w}\mid w\in V(G)\}}$. Let $\approx$ denote this equivalence relation.
Then the universe of $G_{i,v}^R$ is the set $D_{i,v}\modout_\approx\,\dcup\,\, [0,|V(G)|]$.
The vertex set  $V(G_{i,v}^R)$ is  $D_{i,v}\modout_\approx$, and it is not hard to see that
the formulas $\theta_{V}$, $\theta_{\approx}$ and $\theta_{E}$ of transduction $\Theta^{\raute}(o,z,X)$ 
define the graph $G_{i,v}$.
Further,  ${M(G_{i,v}^R)=[0,|V(G)|]}$ and $\trianglelefteq \!\!(G_{i,v}^R)$\vspace{0.5pt} is the natural order on $[0,|V(G)|]$.
Finally, the formula $\theta_{L}$ defines the color relation.
As ${(i,v,R)\in \suit(G)}$,\vspace{0.5pt} the relation
${\{(m_1,m_2)\mid (i+1,w,m_1,m_2)\in R\}}$ is the representation \vspace{-0.5pt} of the canon of $G[D_{i+1,w}]$
for all $w\in V(G)$, and we obtain that $G_{i,v}^R$, that is, $\Theta^{\raute}[G,i,v,R]$, is equal to  $G_{i,v}^*$
for all $(i,v,R)\in \suit(G)$.
\end{proof}

\noindent
Later, we will make sure that the triple of parameters $(o,z,X)$ 
is always interpreted by a suitable triple.

\subsubsection*{\textbf{Step 2: Transduction from Graphs to ordered \LO-Colored Graphs}}

$\Theta^{\raute}(o,z,X)$ is a parameterized $\FPC$-count\-ing transduction. 
Thus, there exists a 
parameterized $\FPC$-trans\-duc\-tion $\Theta^*(o,z,X)$
with the same domain, such that $\Theta^{\raute}[G,i,v,R]$  and $\Theta^*[G,i,v,R]$
are isomorphic for all $(G,i,v,R)$ in the domain (Proposition~\ref{thm:CountingTransduction}).
As a consequence, Lemma~\ref{lem:GivRisomGiv*} holds  for $\FPC$-trans\-duc\-tion $\Theta^*(o,z,X)$ in a similar way:
For a graph $G\in \CC$ and a suitable triple $(i,v,R)\in G^{(o,z,X)}$ the tuple
$(G,i,v,R)$ is in the domain of $\Theta^*(o,z,X)$, and the $\LO$-col\-ored graph $\Theta^*[G,i,v,R]$ is isomorphic to $G_{i,v}^*$.

Let $G\in \CC$ and let  $(i,v,R)$ be a suitable triple  for~$G$.
Then $\Theta^*[G,i,v,R]$, as it is isomorphic to $G_{i,v}^*$, is an $\LO$-col\-ored graph in $\CC^*_{\CK\CI}$.
Transduction $\Theta^{\text{c}}$ is an $\FPC$-can\-on\-iza\-tion for the class $\CC^*_{\CK\CI}$\vspace{0.5pt} of $\LO$-col\-ored graphs.
According to Proposition~\ref{prop:composition} we can compose $\Theta^*(o,z,X)$ and $\Theta^{\text{c}}\!$.
We obtain a parameterized $\FPC[\{E\},\{V,E,M,\trianglelefteq,L,\leq\}]$-trans\-duc\-tion $\Theta^{*\text{c}}(o,z,X)$
where $(G,i,v,R)\in \Dom(\Theta^{*\text{c}}(o,z,X))$ for all $G\in\CC$ and $(i,v,R)\in \suit(G)$.

As $\Theta^*[G,i,v,R]$ and $G_{i,v}^*$ are isomorphic
for $G\in \CC$ and  suitable triples $(i,v,R)$~for~$G$, and $\Theta^{\text{c}}$ is a canonization,
the ordered $\LO$-col\-ored graph $\Theta^{\text{c}}[\Theta^*[G,i,v,R]]$ is an ordered copy of $G_{i,v}^*$.
Further, for all $(G,i,v,R)\in \Dom(\Theta^{*\text{c}}[o,z,X])$ 
the ordered $\LO$-col\-ored graphs $\Theta^{*\text{c}}[G,i,v,R]$ and $\Theta^{\text{c}}[\Theta^*[G,i,v,R]]$ are isomorphic.
Thus, $\Theta^{*\text{c}}[G,i,v,R]$ also is an ordered copy of $G_{i,v}^*$ for $G\in \CC$ and  suitable triples $(i,v,R)$~for~$G$.
We denote the ordered copy $\Theta^{*\text{c}}[G,i,v,R]$ of $G_{i,v}^*$ by \emph{$K_{i,v}^*$}.

The relations between the different parameterized transductions used in Step~2 are illustrated in Figure~\ref{fig:overview-transductions}.
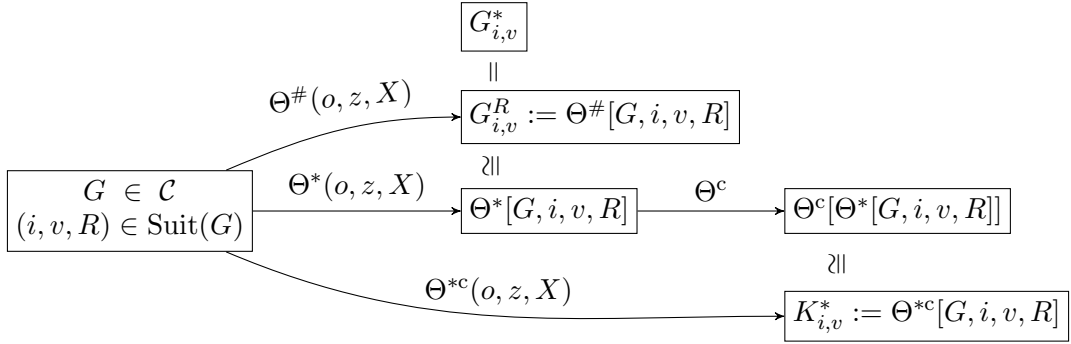
\begin{figure}[htbp]
    \begin{tikzpicture}[scale=1]
        \tikzstyle{vertex}=[anchor=west, draw,inner sep=3pt,align=left]
        
        \begin{scope}
            \node[draw,text width=3cm,align=center] (1) at (0,0) {$G\in\CC$ $(i,v,R)\in\suit(G)$};
            \node[vertex] (2a) at (4.4,2.45) {$G_{i,v}^*$};
            \node[anchor=west,inner sep=1.5pt,align=left] at (4.4,1.85) {\hspace*{2mm} \rotatebox{90}{$=$}};
            \node[vertex] (2b) at (4.4,1.25) {$G_{i,v}^R:=\Theta^{\text{\begin{tiny}\#\end{tiny}}}[G,i,v,R]$};
            \node[anchor=west,inner sep=1.5pt,align=left] at (4.4,0.6) {\hspace*{1mm} \rotatebox{90}{$\cong$}};
            \node[vertex] (2c) at (4.4,0) {$\Theta^{*}[G,i,v,R]$};
            \node[vertex] (3a) at (8.7,0) {$\Theta^{\text{c}}[\Theta^*[G,i,v,R]]$};
            \node[anchor=west,inner sep=1.5pt,align=left] at (8.7,-0.7) {\hspace*{4mm} \rotatebox{90}{$\cong$}};
            \node[vertex] (3b) at (8.7,-1.4) {$K_{i,v}^*:=\Theta^{*\text{c}}[G,i,v,R]$};
            \draw[->] (1) to node[above]{$\Theta^{*}(o,z,X)$} (2c);
            \draw[->,out=23,in=180] (1) to node[above]{\vspace*{5mm}\rotatebox{5}{$\Theta^{\text{\begin{tiny}\#\end{tiny}}}(o,z,X)$}} (2b) ;
            \draw[->,out=-23,in=180] (1) to node[above]{$\Theta^{*\text{c}}(o,z,X)$} (3b);
            \draw[->] (2c) to node[above]{$\Theta^{\text{c}}$}  (3a);
        \end{scope}
    \end{tikzpicture}
\caption{Overview of the parameterized transductions in Step 2}
\label{fig:overview-transductions}
\end{figure}

\subsubsection*{\textbf{Step 3: Defining the Edge Relation of the Canon $\bm{f(G[D_{i,v}])}$}}

In the following we construct an $\FPC[\{V,E,M,\trianglelefteq,L,\leq\}]$-for\-mula 
that given an ordered $\LO$-col\-ored graph $K_{i,v}^*$ defines the edge relation of $f(G[D_{i,v}])$.

In order to do this, we have to define the function $\numb(w,n)$ from \eqref{eqn:nbmn} in $\FPC$.
For every vertex $w$ of $K_{i,v}^*$ and each vertex $n$
occurring in the ordered graph of the natural color of the vertex $w$,
$\numb(w,n)$ is the number that the vertex $n$ is assigned to 
in the canon of $G[D_{i,v}]$.
The function $\numb(w,n)$ depends on the values $n_{w'}$ for certain vertices $w'$.
For a vertex $w$, the value $n_{w}$ is the number of vertices in the graph represented by the natural color of the vertex $w$.
We can determine this number 
by finding the only vertex $u$ for which $(u,u)$ belongs to the color of $w$.
Then $n_w$ is the number of vertices that are smaller than $u$ with respect to 
the linear order $\trianglelefteq\!\!(K_{i,v}^*)$ of the basic color elements.
We define $n_w$ via formula~$\varphi_{n_w}$:
\begin{align*}
\varphi_{n_w}(x,p):=\exists y\Big( L(x,y,y)\;\land\; \#y'\big(\! \trianglelefteq\!\!(y'\!,y)\land y'\!\hspace{-1pt}\not=\hspace{-1pt}y \big)=p\Big).
\end{align*}
Then we have $K_{i,v}^*\models\varphi_{n_w}[w,n_w]$
if, and only if, $\graph(L_w^\N)$ has $n_w$ vertices, 
where $L_w^\N$ is the natural color of $w$ in $K_{i,v}^*$.
Notice that the formula $\varphi_{n_w}$ cannot be satisfied 
if $w$ is a basic color element.

In order to define function $\numb(w,n)$, we first check whether ${n\in [n_w]}$.
Then we count the vertices $n'$ in the graph of the natural color of $w$ with $0<n'\leq n$,
and
the vertices occurring in the graphs of the natural colors of all vertices $w'$ 
that are smaller than $w$ with respect to the linear order $\leq\!\!(K_{i,v}^*)$.
Thus, we let\footnote{
Note that we have $|M(K_{i,v}^*)|>V(G)$, and thus, ${|N(K_{i,v}^*)|>V(G)}$.
Consequently,  ${\numb(w,n)\in N(K_{i,v}^*)}$ and a single number variable can represent $\numb(w,n)$.
} 
\begin{align*}
\varphi_{\numb}(x,r,s):=&\
\exists p\ \big(\varphi_{n_w}(x,p)\land \text{``}0<r\leq p\text{''}\big)\,\land\\
& 
\#(x'\!,r')\, \Big(\big(x'\!=x \land \text{``}0<r'\!\leq r\text{''}\big)\, \lor\,   \\
&
\hspace*{1.8cm}\exists p'\, \big(\varphi_{n_w}(x'\!,p')\,\land 
\leq\!\!(x'\!,x)\land x'\!\not=x \land \text{``}0<r'\!\leq p'\text{''} \big)\Big)=s.
\end{align*}
Then $K_{i,v}^*\models \varphi_{\numb}[w,n,m]$ if and only if 
$w$ is a vertex, ${n\in [n_w]}$ and $\numb(w,n)=m$ in $K_{i,v}^*$.

With the formula $\varphi_{\numb}$ we are able to define the edge relation of the canon of $G[D_{i,v}]$. We let
\begin{align*}
\varphi_{E}(s_1,s_2):=&\ \varphi_{E,\ref{enum:newedges1}}(s_1,s_2)\lor \varphi_{E,\ref{enum:newedges2}}(s_1,s_2) 
\end{align*}
where
\begin{align*}
\varphi_{E,\ref{enum:newedges1}}(s_1,s_2):=
&\ \exists x_1,x_2,r_1,r_2 \Big( E(x_1,x_2) \land \!\!\!\bigwedge_{j\in\{1,2\}}\!\!\! \varphi_{\numb}(x_j,r_j,s_j)\Big), \\
\varphi_{E,\ref{enum:newedges2}}(s_1,s_2):=
&\ \exists x,y_1,y_2,r_1,r_2 \Big(L(x,y_1,y_2)\land r_1\not= r_2\, \land \\
&\ \hspace{7em}\bigwedge_{j\in\{1,2\}}\!\!\! \Big( r_j=\#y\big(\!\!\trianglelefteq\!\!(y,y_j)\land y\not=y_j\big)\land
\varphi_{\numb}(x,r_j,s_j)\Big)\Big) 
\end{align*}
It is not hard to see that 
$\varphi_{E,\ref{enum:newedges1}}[K_{i,v}^*;s_1,s_2]$ and 
$\varphi_{E,\ref{enum:newedges2}}[K_{i,v}^*;s_1,s_2]$ are exactly the edges of the canon of 
$G[D_{i,v}]$ obtained by rule~\ref{enum:newedges1} and rule~\ref{enum:newedges2} 
from page~\pageref{enum:newedges1}.

\subsubsection*{\textbf{Step 4: Pulling Back the Formula for the Edge Relation}}

The formula $\varphi_E(s_1,s_2)$ is an 
$\FPC[{\{V,E,M,\trianglelefteq,L,\leq\}}]$-for\-mu\-la that defines
the edge relation of the canon $f(G[D_{i,v}])$ for a given ordered $\LO$-col\-ored graph $K_{i,v}$.
To construct an equivalent $\FPC[\{E\}]$-for\-mu\-la for the $\FPC$-can\-on\-iza\-tion of the class $\CC$, 
we pull back  $\varphi_E$  under the parameterized ${\FPC[\{E\},\{V,E,M,\trianglelefteq,L,\leq\}]}$-trans\-duc\-tion $\Theta^{*\text{c}}(o,z,X)$.
Hence, we apply the Transduction Lemma (Proposition~\ref{prop:transduction-lemma})
to the formula $\varphi_E(s_1,s_2)$.
We obtain the $\FPC[\{E\}]$-for\-mu\-la
$\varphi_E^{-\Theta^{*\text{c}}}(o,z,X,\bar{q}_1,\bar{q}_2)$.
Let $G\in\CC$ and let $(i,v,R)$ be a suitable triple for $G$.
Then $(G,i,v,R)\in \Dom(\Theta^{*\text{c}}(o,z,X))$. 
Thus, for all
tuples $\bar{m}_1\in G^{\bar{q}_1}$ and $\bar{m}_2\in G^{\bar{q}_2}$, we have
\begin{align}\label{eqn:pullbackEdgeRelation}
	G\models \varphi_E^{-\Theta^{*\text{c}}} [i,v,R,\bar{m}_1,\bar{m}_2] 
	\quad\Longleftrightarrow&\quad
	\langle\bar{m}_1\rangle_G,\langle\bar{m}_2\rangle_G\in N(K_{i,v}^*)\text{\ \ and }\\ \nonumber
	&\quad  K_{i,v}^*\models  \varphi_E[\langle\bar{m}_1\rangle_G,\langle\bar{m}_2\rangle_G].
\end{align}

The length of tuples $\bar{q}_1,\bar{q}_2$, and therefore also of $\bar{m}_1,\bar{m}_2$, 
is the same and depends on the length of the tuple of domain variables of the canonization $\Theta^{\text{c}}$.
Let $\ell$ be the length of the listed tuples. Let $\bar{m}_1=(m_1^1,\dots,m_1^\ell)$ and let the other tuples be defined analogously.
In the following we show that in each tuple of variables we only need the first number variable, as the others are always assigned to $0$.

\pagebreak[1]
Again, let $G$ be a graph in $\CC$ and let $(i,v,R)$ be a suitable triple for $G$.
Since the vertex set of $f(G)$ is $[|V(G)|]$,  
we have $\langle\bar{m}_1\rangle_G,\langle\bar{m}_2\rangle_G\in [|V(G)|]$
for all $\bar{m}_1,\bar{m}_2\in N(G)^\ell$ with 
$K_{i,v}^*\models \varphi_E[\langle\bar{m}_1\rangle_G,\langle\bar{m}_2\rangle_G]$.
Now remember that for a tuple $\tup{n}=(n_1,\dots,n_\ell)\in N(G)^\ell\!$,
\[\num[G]{\tup{n}}=\sum_{i=1}^\ell n_i\cdot (|V(G)|+1)^{i-1}.\]
Consequently, we have $m_1^j=0$ and $m_2^j=0$ for all $j>1$, which means that
$m_1^1=\langle\bar{m}_1\rangle_G$ and $m_2^1=\langle\bar{m}_2\rangle_G$.

We define $\phi_E$ as follows:
\begin{align*}
\phi_E(o,z,X,q_1,q_2):= \varphi_E^{-\Theta^{*\text{c}}}\big(o,z,X,(q_1,0,\dots,0),(q_2,0,\dots,0)\big).
\end{align*}
Then, for $G\in \CC$,\, $(i,v,R)\in \suit(G)$ and ${m_1,m_2\in N(G)}$ we have
\begin{align*}
G \models \phi_E[i,v,R,m_1,m_2] \iff\  &
 \text{Vertices $m_1$ and $m_2$ are adjacent in $f(G[D_{i,v}])$.} 
\end{align*}

\subsubsection*{\textbf{Step 5: Inductive Definition of the Canon $\bm{f(G)}$}}

We are now able to inductively define the edge relation of the canon $f(G)$ of $G\in \CC$. We let
\begin{align*}
\phi_K(s_1,s_2):=\exists o'\!,z'\, \Big(&o'=0\land\, s_1\not=s_2\, \land\,
 \ifpx{X}{(o,z,q_1,q_2)}{\phi}(o'\!,z'\!,s_1,s_2)\,\Big)
\end{align*}
where 
\begin{align*}
\phi:= \phi_1\lor \big(\phi_2 \land 
(\phi_{E}\lor\phi_{n_w})\big)
\end{align*}
and 
\begin{align*}
\phi_1(o,z,q_1,q_2):=&\  \ord(o)\,\land\, q_1=1\land q_2=1,\\
 \phi_2(o,z,X,q_1,q_2):=&\   \lnot\ord(o)\,   \land\,
 \exists o'\!,z'\!,q_1',q_2'\big(o+1=o'\land X(o'\!,z'\!,q_1',q_2')\big),\\
   \phi_{n_w}(o,z,q_1,q_2):=&\  q_1=q_2\, \land\, q_1=\nb y\ \varphi_D(o,z,y).
\end{align*}

\noindent
The relational variable $X$ within the inflationary fixed-point operator of the $\FPC$-formula $\phi_K$ 
is of arity~$4$ and ranges over relations $R\subseteq N(G)\times V(G)\times N(G)^2$.
Let $X^\infty$ be the relation assigned to the variable $X$ after the fixed-point is reached.
We show in Lemma~\ref{lem:XkgleichSk} that for each $i\in N(G)$ and $v\in V(G)$ the set of pairs $\{(n_1,n_2)\mid (i,v,n_1,n_2)\in X^\infty\}$ 
is the representation of the canon $f(G[D_{i,v}])$.
For $i=0$ and any vertex $v\in V(G)$ we have $D_{i,v}=V(G)$.
Therefore, Lemma~\ref{lem:XkgleichSk} implies the following corollary.
\begin{corollary}\label{cor:edgesCanon}
 For all $G\in \CC$ and all $n_1,n_2\in N(G)$,
\begin{align*}
 G\models\phi_K[n_1,n_2] \iff \text{$\{n_1,n_2\}$ is an edge of the canon $f(G)$ of $G$.}
\end{align*}
\end{corollary}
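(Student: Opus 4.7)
The plan is to derive Corollary~\ref{cor:edgesCanon} directly from Lemma~\ref{lem:XkgleichSk} by unfolding the semantics of $\phi_K$; no further fixed-point analysis is needed beyond the lemma. First I would unpack the formula: by definition $G \models \phi_K[n_1,n_2]$ holds iff there exist $o' \in N(G)$ and $z' \in V(G)$ with $o' = 0$, $n_1 \neq n_2$, and $(0, z', n_1, n_2) \in X^\infty$, where $X^\infty$ is the fixed point of the inflationary operator defined by $\phi$ over the 4-ary relational variable $X$ on $N(G) \times V(G) \times N(G)^2$.

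Next I would apply Lemma~\ref{lem:XkgleichSk} with $i = 0$. For every vertex $v \in V(G)$ we have $D_{0,v} = V(G)$, so the lemma yields
\[
  \{(m_1,m_2) \mid (0, v, m_1, m_2) \in X^\infty\}\ =\ \rep\bigl(f(G[D_{0,v}])\bigr)\ =\ \rep(f(G)).
\]
In particular, the set of pairs $(m_1,m_2)$ with $(0, z', m_1, m_2) \in X^\infty$ does not depend on the choice of $z' \in V(G)$ and always equals $\rep(f(G))$. Since $V(G)$ is non-empty, the existential witness for $z'$ is automatic whenever $(n_1,n_2) \in \rep(f(G))$.

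Now recall from Section~\ref{sec:representation} that
\[
  \rep(f(G))\ =\ \bigl\{(l,l') \,\bigm|\, \{l,l'\} \in E(f(G))\bigr\}\ \cup\ \bigl\{(|V(G)|,|V(G)|)\bigr\}.
\]
Since an edge is a 2-element set, $\{l,l'\} \in E(f(G))$ iff $(l,l') \in \rep(f(G))$ with $l \neq l'$. The only diagonal tuple in $\rep(f(G))$ is the size marker $(|V(G)|,|V(G)|)$, and this is precisely what the conjunct $s_1 \neq s_2$ in $\phi_K$ filters out.

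Combining these observations finishes the corollary. If $n_1 = n_2$, then $\phi_K[n_1,n_2]$ fails the conjunct $s_1 \neq s_2$ and $\{n_1,n_2\}$ is not a 2-element edge, so both sides are false. If $n_1 \neq n_2$, then $\phi_K[n_1,n_2]$ holds iff there is some $z' \in V(G)$ with $(0, z', n_1, n_2) \in X^\infty$, which by the lemma is equivalent to $(n_1, n_2) \in \rep(f(G))$, and under $n_1 \neq n_2$ this is equivalent to $\{n_1,n_2\} \in E(f(G))$. There is no serious obstacle: the entire argument is a routine unpacking, and the only subtle point is to track that the $s_1 \neq s_2$ filter removes the single diagonal tuple encoding $|V(G)|$, so that $\phi_K$ captures the edge relation of $f(G)$ rather than its full representation.
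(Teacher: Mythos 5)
Your proof is correct and follows the same route as the paper, which simply remarks that the corollary follows from Lemma~\ref{lem:XkgleichSk} after noting $D_{0,v}=V(G)$; you supply the routine unpacking of $\phi_K$'s semantics, the identification $X^\infty=\{(i,v,n_1,n_2)\mid (n_1,n_2)\in\rep(f(G[D_{i,v}]))\}$, and the observation that the $s_1\neq s_2$ conjunct removes exactly the diagonal marker $(|V(G)|,|V(G)|)$ from $\rep(f(G))$, all of which the paper leaves implicit.
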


\noindent
The formula $\phi$ of the inflationary fixed-point operator is constructed such that
$\phi_1$ defines the basis of the inductive definition.
For $i=|V(G)|$ and all vertices $v\in V(G)$, it ensures that the tuples describing 
the representation of the canon of $G[D_{i,v}]$ are added to the fixed-point relation in the first step.
Thus, all tuples in $\{(|V(G)|,v,1,1)\mid v\in V(G)\}$ are added in the first step.
The formulas $\phi_2$ and $\phi_E\lor \phi_{n_w}$ take effect in the inductive step.
In step $k$ we add all tuples $(i,v,n_1,n_2)\in X^\infty$ to the fixed-point relation with $i=|V(G)|-k+1$.
The formula $\phi_2$ ensures that we add only tuples $(i,v,n_1,n_2)$ if the tuples for 
$i+1$ have already been included to the fixed-point relation.
This way, $i$, $v$ and the fixed-point relation form a suitable triple.
Then, $\phi_E\lor \phi_{n_w}$ defines 
the representation of the canon of  $G[D_{i,v}]$.

In the following lemma 
we show inductively that the formula $\phi_K$ uses an inflationary fixed-point operator which in stage $k$ of its iteration 
defines the representations of the canons of all $G[D_{i,v}]$ with $v\in V(G)$ and $i\geq |V(G)|-k+1$. 

\begin{lemma}\label{lem:XkgleichSk}
	Let $X^k$ be the fixed-point relation that we get at stage $k$ of the iteration of 
	the inflationary fixed-point operator	
	in the formula $\phi_K$.
	Further, let $S^k$ be the set of all tuples $(i,v,n_1,n_2)\in  N(G)\times V(G)\times N(G)^2$
	where $i\geq |V(G)|-k+1$ and $(n_1,n_2)$ is in $\rep(f(G[D_{i,v}]))$, the representation of the canon of $G[D_{i,v}]$.
	Then $X^k=S^k$.
\end{lemma}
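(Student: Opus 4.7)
The plan is to proceed by induction on $k$, leveraging the inductive construction of the inflationary fixed-point together with the fact that $\phi_E$, pulled back from Step~4, correctly defines the edge relation of $f(G[D_{i,v}])$ precisely when the current approximation $(i,v,X^k)$ is a suitable triple for~$G$.

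For the base case, at stage $k=0$ we have $X^0=\emptyset$ and $S^0=\emptyset$ as well, since $i\geq |V(G)|+1$ is impossible. Moving one step further, at stage $k=1$ only $\phi_1$ can fire, because the existential over $X$ inside $\phi_2$ fails on the empty relation; this adds exactly the tuples $(|V(G)|,v,1,1)$ for all $v\in V(G)$. Since $D_{|V(G)|,v}=\{v\}$ and $\rep(f(G[\{v\}]))=\{(1,1)\}$, we obtain $X^1=S^1$.

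For the inductive step, assume $X^k=S^k$ and analyse which tuples $\phi$ produces on $X^k$. The subformula $\phi_1$ re-adds the tuples at $i=|V(G)|$, which already belong to $S^k$. The guard in $\phi_2$, namely $\lnot\ord(o)\land\exists o',z',q_1',q_2'(o+1=o'\land X(o',z',q_1',q_2'))$, fires under the hypothesis $X^k=S^k$ exactly for $o=i$ with $|V(G)|-k\leq i<|V(G)|$, which is precisely the range of first coordinates added when passing from $S^k$ to $S^{k+1}$. Fix such an $i$ and a vertex $v$. I would then verify that the triple $(i,v,S^k)$ is suitable for~$G$: by the definition of $S^k$, and since $i+1\geq |V(G)|-k+1$, the slice $\{(n_1,n_2)\mid (i+1,w,n_1,n_2)\in S^k\}$ equals $\rep(f(G[D_{i+1,w}]))$ for every $w\in D_{i,v}$. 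With suitability in hand, the pull-back property~(\ref{eqn:pullbackEdgeRelation}) yields that $G\models \phi_E[i,v,S^k,m_1,m_2]$ iff $\{m_1,m_2\}$ is an edge of $f(G[D_{i,v}])$, while $\phi_{n_w}$ contributes exactly the marker pair $(|D_{i,v}|,|D_{i,v}|)$. Together, $\phi_E\lor\phi_{n_w}$ enumerates $\rep(f(G[D_{i,v}]))$ exactly, and hence $X^{k+1}=X^k\cup(S^{k+1}\setminus S^k)=S^{k+1}$.

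The hard part is not the induction itself but the \emph{synchronisation} wired into $\phi_2$: one has to check that $\phi_E$ is only invoked at stages where $X^k$ has already accumulated every canon representation at level $i+1$, so that $(i,v,X^k)$ is suitable and the Transduction Lemma (in the guise of Step~4) is applicable. One also has to be mindful that when $D_{i,v}$ is a singleton the formula $\phi_E$ produces no tuple, so the marker pair $(1,1)$ must be delivered by $\phi_{n_w}$; this is precisely the reason for the disjunction $\phi_E\lor\phi_{n_w}$ inside the fixed-point.
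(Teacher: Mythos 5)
Your proof follows the same induction on $k$ as the paper, with the same base cases ($k=0,1$), the same verification that $(i,v,X^k)$ is suitable whenever $\phi_2$ fires (because the slice at level $i+1$ already equals $\rep(f(G[D_{i+1,w}]))$ for all $w$), and the same appeal to the Step~4 pull-back of $\varphi_E$ together with the marker contributed by $\phi_{n_w}$. The paper organises the inductive step a bit more explicitly as a case split on the first coordinate $j$ (namely $j=|V(G)|$, $j<|V(G)|-k$, and $|V(G)|-k\le j<|V(G)|$), but your argument covers exactly the same ground.
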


\begin{proof}
Of course, for $k=0$ we have $X^k=\emptyset$ and $S^k=\emptyset$.
For $k=1$, it is easy to see 
that there exists no tuple that satisfies the formula $\phi_2$ since   $X^0=\emptyset$.
Consequently,  $X^1$ is the set ${\phi_1[G;o,z,q_1,q_2]=\{(|V(G)|,v,1,1)\mid v\in V(G)\}}$.
Further, for all $v\in V(G)$ the representation of the canon of $G[D_{|V(G)|,v}]$ is
$\{(1,1)\}$, and therefore, $X^1=S^1\!$.
Now assume  $k\geq 1$, and let $X^k=S^k\!$. 
In the following we prove that $X^{k+1}=S^{k+1}$ 
by showing that~${X^{k+1}_j=S^{k+1}_j}$ for all $j\in N(G)$, where 
$S^{k+1}_j$ is the set of all tuples $(j,v,n_1,n_2)\in S^{k+1}$
and $X^{k+1}_j$ is the set~of tuples  $(j,v,n_1,n_2)\in X^{k+1}\!$.

It is easy to see that $X^{k+1}_{j}=S^{k+1}_{j}$ for $j=|V(G)|$:
We have already shown that $\phi_1[G;o,z,q_1,q_2]=X^1$ and that $X_1=S_1$.
Further, the relation $\phi_2[G,\alpha[X^k/X];o,z,q_1,q_2]$ cannot contain any tuples $(i,v,n_1,n_2)$ with $i=|V(G)|$.
Therefore,  ${X^{k+1}_{j}=S_1}$. Since $S_1=S^{k+1}_{j}$  for $j=|V(G)|$, we have $X^{k+1}_{j}=S^{k+1}_{j}\!$.

Next, let us consider ${j<|V(G)|-k}$.
Then ${j< |V(G)|}$, and there does not exist a tuple ${(i,v,n_1,n_2)\in \phi_1[G;o,z,q_1,q_2]}$ with $i=j$.
Further, by inductive assumption we have $X^k=S^k$, and by definition we know that the set $S^k$ does not contain any tuples $(j'\!,v,n_1,n_2)$ 
with ${j'< |V(G)|-k+1}$. Consequently, there cannot be a tuple $(i,v,n_1,n_2)$ in $\phi_2[G,\alpha[X^k/X];o,z,q_1,q_2]$ with $i=j$.
Thus, for ${j<|V(G)|-k}$ we have $X^{k+1}_{j}=\emptyset$, and since $S^{k+1}_{j}$ is also empty,  we obtain $X^{k+1}_{j}=S^{k+1}_{j}\!$.

Now, suppose ${|V(G)|-k\leq j<|V(G)|}$. 
Then the relation $\phi_1[G;o,z,q_1,q_2]$ does not contain any tuples  $(i,v,m_1,m_2)$ with ${i=j}$.
However,
there exist a vertex ${v\in V(G)}$ and numbers ${n_1,n_2\in N(G)}$ such that ${(j,v,n_1,n_2)\in \phi_2[G,\alpha[X^k/X];o,z,q_1,q_2]}$
because ${X^k=S^k}\!$, by inductive assumption, and $S^k_{j'}$ is non-emp\-ty for all ${j'\geq |V(G)|-k+1}$, by definition.
Since we have ${X^k=S^k}\!$, and ${j+1\geq |V(G)|-k+1}$ and $j< |V(G)|$, 
the relation ${\{(n_1,n_2)\mid (j+1,w,n_1,n_2)\in X^k\}}$
is the representation of the canon of $G[D_{j+1,w}]$ for all $w\in V(G)$.
Therefore, $(j,v,X^k)$ is a suitable triple for all $v\in V(G)$.
As shown in Step~4, the relation
$\phi_{E}[G,j,v,X^k;q_1,q_2]$ is the edge relation of the canon $f(G[D_{j,v}])$ of 
$G[D_{j,v}]$ for suitable triples $(j,v,X^k)$.
Further, ${\phi_{n_w}[G,j,v;q_1,q_2]=\{(|D_{j,v}|,|D_{j,v}|)\}}$.
Thus, the relation
$(\phi_{E}\lor\phi_{n_w})[G,j,v,X^k;q_1,q_2]$,
is the representation of the canon of $G[D_{j,v}]$ for all vertices $v\in V(G)$,
and it follows that $X^{k+1}_j=S^{k+1}_j\!$.
\end{proof}

\smallskip

\begin{proof}[Proof of Theorem~\ref{thm:ModDecompTh}]
 Let $\CC$ be a class of graphs that is closed under induced subgraphs.
 Further, let $\CC^*_{\textup{prim}}$ admit $\FPC$-de\-fin\-a\-ble parameterized canonization.
 Then there exists  an $\FPC$-can\-on\-iza\-tion $\Theta^{\text{c}}$ of the class $\CC_{\CK\CI}^*$
 (Observations~\ref{obs:parametersInCanonizations} and~\ref{obs:transductionCKI}).
 Now according to Corollary~\ref{cor:edgesCanon},
 the $\FPC$-for\-mula $\phi_K$ defines the edge relation of the canon $f(G)$ for all $G\in \CC$. 
Therefore, $\Theta'=(\theta_U',\theta_E',\theta_{\leq}')$  with 
\begin{align*}
 \theta_U'(s_1)&:=0\leq s_1,\\ 
 \theta_E'(s_1,s_2)&:=\phi_K(s_1,s_2),\\
 \theta_{\leq}'(s_1,s_2)&:=s_1\leq s_2
\end{align*}
is an $\FPC$-can\-on\-iza\-tion of the graph class $\CC$.
\end{proof}

\subsection{Variations of the Modular Decomposition Theorem}
In this section we show variations of the Modular Decomposition Theorem, which might be helpful in future applications.
Let $\CC$ be a graph class that is closed under taking induced subgraphs.
We prove that  $\CC$ admits $\FPC$-de\-fin\-a\-ble canonization if the class of prime graphs of $\CC$ admits $\FPC$-de\-fin\-a\-ble orders,
and we present an analog of the Modular Decomposition Theorem for poly\-no\-mi\-al-time computable canonization.
The Modular Decomposition Theorem and the just mentioned analog of it require a canonization
of the class $\CC^*_{\textup{prim}}$.
We also show that we can relax this requirement, and prove that 
a canonization of the class of all prime graphs from $\CC$ that are colored with elements from a linearly ordered set is sufficient.

An $\FPC$-for\-mu\-la $\varphi(\tup{x},y,y')$ \emph{defines orders} on a class $\CC$ of graphs if
for all graphs $G\in \CC$ there is a tuple $\tup{v}\in G^{\tup{x}}$ such that the binary relation
$\varphi[G,\tup{v};y,y']$ is a linear order on $V(G)$.
We say a graph class $\CC$ \emph{admits $\FPC$-de\-fin\-a\-ble orders}, 
if there exists an $\FPC$-for\-mu\-la that defines orders on $\CC$. 

For a graph class $\CC$, let $\CC_\text{prim}$ be the class of all prime graphs from~$\CC$. 

\begin{corollary}\label{cor:moddecthm}
 Let $\CC$ be a graph class that is closed under induced subgraphs.
 If $\CC_\textup{prim}$ admits $\FPC$-de\-fin\-a\-ble orders, 
 then $\CC$ admits $\FPC$-de\-fin\-a\-ble canonization. 
\end{corollary}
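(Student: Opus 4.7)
The plan is to apply the Modular Decomposition Theorem (Theorem~\ref{thm:ModDecompTh}); it then suffices to show that $\CC^*_{\textup{prim}}$ admits $\FPC$-definable parameterized canonization. This reduction is the entire content of the proof, so everything comes down to upgrading ``$\FPC$-definable orders on prime graphs'' to ``$\FPC$-definable canonization of $\LO$-colored prime graphs''.

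Let $\varphi(\bar{x}, y, y')$ be the $\FPC$-formula defining orders on $\CC_{\textup{prim}}$. Given an $\LO$-colored graph $H^* = (U, V, E, M, \trianglelefteq, L) \in \CC^*_{\textup{prim}}$, the underlying graph $(V, E)$ is a prime graph in $\CC$. I would relativize $\varphi$ to the unary predicate $V$, replacing each quantifier $\exists y\, \psi$ by $\exists y\, (V(y) \land \psi)$ and analogously for universals, obtaining an $\FPC$-formula $\varphi'(\bar{x}, y, y')$ in the vocabulary of $\LO$-colored graphs such that for any $\bar{v} \in V^{\bar{x}}$ the relation $\varphi'[H^*, \bar{v}; y, y']$ coincides with $\varphi[(V, E), \bar{v}; y, y']$. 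By the hypothesis, some $\bar{v} \in V^{\bar{x}}$ makes this a linear order $\leq_V$ on $V$.

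I would then construct a parameterized $\FPC$-transduction $\Theta^{\textup{c}}(\bar{x})$ as follows. The domain formula asserts that every component of $\bar{x}$ lies in $V$ and that $\varphi'(\bar{x}, y, y')$ defines a linear order on $V$, which is first-order expressible. The transduction copies $U$, $V$, $E$, $M$, $\trianglelefteq$, and $L$ identically (with $\theta_U := \true$ and $\theta_\approx := \false$), and adds the new linear order
\begin{align*}
\theta^{\textup{c}}_{\leq}(\bar{x}, u, u') \,:=\,
\big(V(u) \land M(u')\big) \,\lor\,
\big(V(u) \land V(u') \land \varphi'(\bar{x}, u, u')\big) \,\lor\,
\big(M(u) \land M(u') \land u \trianglelefteq u'\big),
\end{align*}
which combines $\leq_V$ on $V$ with $\trianglelefteq$ on $M$, placing $V$ before $M$. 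For every valid $\bar{v}$, the structure $\Theta^{\textup{c}}[H^*, \bar{v}]$ is $H^*$ equipped with a linear order on $U$, hence an ordered copy of $H^*$; and valid $\bar{v}$ always exist by the above. Thus $\Theta^{\textup{c}}$ is a parameterized $\FPC$-canonization of $\CC^*_{\textup{prim}}$, and Theorem~\ref{thm:ModDecompTh} yields an $\FPC$-canonization of $\CC$.

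The main subtlety lies in the parameters: $\varphi$ only defines an order with suitable parameters, so the construction inevitably produces a \emph{parameterized} canonization of $\CC^*_{\textup{prim}}$, which is exactly why the parameterized form of the Modular Decomposition Theorem (and, implicitly, Lemma~\ref{lem:parametersInCanonizations} via Observation~\ref{obs:parametersInCanonizations}) is the right tool to invoke. Beyond handling parameters, the construction is essentially an identity transduction augmented with one definable relation, so no further technical difficulty arises.
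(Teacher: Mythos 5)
Your overall plan is exactly the one the paper uses: build a parameterized $\FPC$-canonization of $\CC^*_{\textup{prim}}$ by attaching the $\FPC$-definable order on the underlying prime graph to the given $\LO$-colored structure (placing $V$ before $M$, using $\trianglelefteq$ on $M$), check that the parameter yields a linear order via a domain formula, and then invoke the Modular Decomposition Theorem. That matches the paper's construction of $\Theta_{\leq}(\tup{x})$ formula by formula.

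However, the step ``relativize $\varphi$ to the unary predicate $V$, replacing each quantifier $\exists y\, \psi$ by $\exists y\,(V(y)\land\psi)$'' has a genuine gap for $\FPC$: it only restricts the structure sort and ignores the number sort. An $\LO$-colored graph $H^*$ has number sort $N(H^*)=[0,|V|+|M|]$, while the underlying graph $(V,E)$ has number sort $[0,|V|]$. Counting terms and number quantifiers in $\varphi$ are evaluated against $|V|$ when $\varphi$ is interpreted in $(V,E)$, but against $|V|+|M|$ if you naively transplant $\varphi$ into the vocabulary of $\LO$-colored graphs; number variables also range over the larger set, and the base used to decode tuples of number variables via $\num[A]{\cdot}$ changes. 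Hence the claimed identity $\varphi'[H^*,\bar v;y,y'] = \varphi[(V,E),\bar v;y,y']$ does not follow from the naive syntactic relativization. The correct device here is precisely what the paper uses: pull $\varphi$ (and the linear-order test $\varphi_{\textup{lin}}$) back under the $\FPC$-transduction $\Theta=(V(x),E(x,x'))$ that maps an $\LO$-colored graph to its underlying graph, and let the Transduction Lemma (Proposition~\ref{prop:transduction-lemma}) do the bookkeeping for the number sort. If you replace your relativization step by this pull-back, the rest of your argument goes through unchanged.
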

\begin{proof}
Let graph class $\CC$ be closed under induced subgraphs,
and let $\varphi(\tup{x},y_1,y_2)$ be an $\FPC$-for\-mu\-la that defines orders on $\CC_\textup{prim}$.
We use the formula $\varphi$ to define a parameterized $\FPC$-can\-on\-iza\-tion of the class $\CC^*_{\textup{prim}}$.
Then Corollary~\ref{cor:moddecthm} follows directly from the Modular Decomposition Theorem.

First of all, we use $\varphi(\tup{x},y_1,y_2)$ to define an $\FPC$-for\-mu\-la
$\varphi_\textup{lin}(\tup{x})$ where
for all $G\in\CC_\textup{prim}$ and $\tup{v}\in G^{\tup{x}}$ we have
\begin{align*}G\models \varphi_\textup{lin}(\tup{v})\iff \varphi[G,\tup{v};y,y']\text{ is a linear order on }V(G).\end{align*}
As it can be tested in first-or\-der logic whether a binary relation is a linear order, i.e., a transitive, antisymmetric and total relation, 
the formula $\varphi_\textup{lin}$ is $\FPC$-de\-fin\-able.

Since we can define orders on $\CC_\text{prim}$, 
we can also define orders on the underlying graphs of the $\LO$-col\-ored graphs from $\CC^*_{\textup{prim}}$.
We simply pull back the formula $\varphi$ under $\FPC[{(\{V,E,M,\trianglelefteq,L\}, \{E\})}]$-trans\-duc\-tion 
$\Theta=(V(x), E(x,x'))$,
which maps every $\LO$-col\-ored graph to (an isomorphic copy of) its underlying graph.
We do the same for the formula $\varphi_\textup{lin}$.

Let $\leq_V$ be a linear order on the vertex set $V(G^*)$ of the underlying graph of $G^*\!\in\CC^*_{\textup{prim}}$.
We can use $\leq_V$ and the linear order $\trianglelefteq$ on the set $M(G^*)$  of basic color elements
to construct a linear order~$\leq^*$ on the universe $U(G^*)$ of~$G^*\!$. We let
\begin{align}\label{eqn:leqw}
 \leq^*\,:=\,\leq_V\cup\trianglelefteq\cup\,\{(v,m)\mid v\in V(G^*),\,m\in M(G^*)\}.
\end{align}

We now define a parameterized $\FPC$-can\-on\-iza\-tion $\Theta_{\leq}(x)$, which maps each  
$\LO$-col\-ored graph $G^*\!\in \CC^*_{\textup{prim}}$ to an ordered copy $(G^*\!,\leq^*)$.
Valid parameters of this transduction
are all tuples $\tup{v}\in G^{\tup{x}}$ where $\varphi[G,\tup{v};y,y']$ is a linear order on the vertex set $V(G)$.
We let $${\Theta_{\leq}(\tup{x})=(\theta_{\dom},\theta_{U},\theta_{V},\theta_{E},\theta_{M},\theta_{\trianglelefteq},\theta_{L},\theta_{\leq})},$$ where
\begin{align*}
&\theta_{\dom}(\tup{x}):=\varphi_\textup{lin}^{-\Theta}(\tup{x}), 
&&\theta_{M}(\tup{x},y):=M(y),\\
&\theta_{U}(\tup{x},y):= \true,
&&\theta_{\trianglelefteq}(\tup{x},y,y'):=\,\trianglelefteq\!(y,y'),\\
&\theta_{V}(\tup{x},y):= V(y),
&&\theta_{L}(\tup{x},y,y'\!,y''):=L(y,y'\!,y''),\\
&\theta_{E}(\tup{x},y,y'):=E(y,y'),&&
\end{align*}
and
\begin{align*}
\theta_{\leq}(\tup{x},y,y')&:=  \varphi^{-\Theta}(\tup{x},y,y')\, \lor \trianglelefteq\!(y,y') \lor \big(V(y)\land M(y')\big).\qquad\quad\ \,
\end{align*}
The formula $\theta_{\dom}$, that is, the pull-back of $\varphi_\textup{lin}$, defines the valid parameters, and 
the formula $\theta_{\leq}$ defines the linear order $\leq^*$ from \eqref{eqn:leqw} by using the pull-back  of  the $\FPC$-for\-mu\-la
$\varphi$.
\end{proof}

\smallskip

\noindent
In Section~\ref{app:sec:MDT-canonmapping} we recursively defined a canonization mapping for the graph class $\CC$.
It is not hard to see that this canonization mapping can be computed in polynomial time if 
there exists a canonization mapping for $\CC^*_{\textup{prim}}$ that is computable in polynomial time.
Thus, the Modular Decomposition Theorem can be transferred to polynomial time:
\begin{corollary}\label{cor:ptimecanonizationMDT}
 Let $\CC$ be a class of graphs that is closed under induced subgraphs.
 If $\CC^*_{\textup{prim}}$ admits poly\-no\-mi\-al-time canonization, 
 then $\CC$ admits poly\-no\-mi\-al-time canonization. 
\end{corollary}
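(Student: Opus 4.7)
The plan is to mirror the $\FPC$ canonization construction from Section~\ref{app:sec:MDT-canonmapping}, but replace each $\FPC$-definable step with a polynomial-time algorithmic step. Let $g^*$ be a polynomial-time canonization algorithm for $\CC^*_{\textup{prim}}$; by the same trivial handling of complete and edgeless $\LO$-colored graphs as in Observation~\ref{obs:transductionCKI} (which is clearly polynomial-time, using lexicographic order of the natural colors), we first extend $g^*$ to a polynomial-time canonization algorithm for $\CC^*_{\CK\CI}$. I will then give a recursive polynomial-time algorithm $g$ that, on input $G\in\CC$, outputs the canon $f(G)$ exactly as defined in Section~\ref{app:sec:MDT-canonmapping}.

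The recursive step proceeds as follows. First, compute the modular decomposition tree of $G$; by Corollary~\ref{cor:moddecomptreeinL} this is even in $\LOGSPACE$, hence certainly polynomial time. Next, for each vertex $v\in V(G)$, compute $D_G(v)$ and the modular contraction $G_\sim$, and recursively call $g$ on each induced subgraph $G[D_G(w)]$ to obtain the canons $f(G[D_G(w)])$ together with their representations $\rep(f(G[D_G(w)]))$. Assemble the $\LO$-colored graph $G_\sim^*$ by coloring each vertex of $G_\sim$ with the corresponding representation; this construction is plainly polynomial-time. Apply the extended $g^*$ to $G_\sim^*$ to obtain $K_\sim^*$, and finally build $f(G)$ by replacing each vertex of $K_\sim^*$ with the graph represented by its natural color, using the numbering function $\numb(w,n)$ from \eqref{eqn:nbmn} to generate the vertex labels exactly as in the $\FPC$ construction. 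Each of these assembly and substitution steps is a direct combinatorial computation and is obviously polynomial-time.

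Correctness is immediate: the output graph $g(G)$ agrees with the canonization mapping $f$ defined in Section~\ref{app:sec:MDT-canonmapping}, and $f$ was already shown to be a canonization mapping (Observation~\ref{obs:fcanonization}). The main thing to check is the polynomial time bound. The only subtlety is that the recursion tree of $g$ is precisely the modular decomposition tree of $G$, which has at most $2|V(G)|-1$ nodes in total since the decomposition modules at each level partition $V(G)$. At each node, the work (modular decomposition computation at that level, assembly of the $\LO$-colored graph, a call to $g^*$, and the substitution step) runs in time polynomial in the size of the corresponding submodule. Summing over all nodes of the decomposition tree gives a polynomial bound in $|V(G)|$. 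The one point that deserves a careful word is that the representations $\rep(f(G[D_G(w)]))$ used as colors have size polynomial in $|D_G(w)|$, so the $\LO$-colored graph $G_\sim^*$ passed to $g^*$ has size polynomial in $|D_{G}(v)|$ summed appropriately; this is where closure of $g^*$'s running time bound under composition is used, but since $g^*$ is polynomial-time by assumption, no difficulty arises.
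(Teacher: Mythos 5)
Your proposal is correct and follows exactly the approach the paper takes: the paper's own proof of this corollary is a one-line observation that the canonization mapping $f$ from the proof of the Modular Decomposition Theorem (Section~\ref{app:sec:MDT-canonmapping}) can be evaluated in polynomial time when $g^*$ is polynomial-time, and you simply fill in the details of that evaluation (modular decomposition tree via Corollary~\ref{cor:moddecomptreeinL}, recursion over the $O(|V(G)|)$ tree nodes, assembly of $G_\sim^*$, one call to the extended $g^*$ per node, and the substitution). The only minor point worth noting is that, as in the paper's $f$, one should normalize the output of $g^*$ to live on an initial segment of the integers so that isomorphic inputs yield identical (not merely isomorphic) canons, which is needed for the colors $\rep(f(G[D_G(w)]))$ to be well defined; this is standard and clearly polynomial-time, and you implicitly rely on it when invoking Observation~\ref{obs:fcanonization}.
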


\smallskip
\noindent
Within the Modular Decomposition Theorem and Corollary~\ref{cor:ptimecanonizationMDT} it is possible
to relax the requirement of a canonization of the class $\CC^*_{\textup{prim}}$ to a canonization of the class $\CC'_{\textup{prim}}$,
that is, the class of all prime graphs from $\CC$ that are colored with elements from a linearly ordered set.
More precisely, for a class $\CC$ of graphs that is closed under induced subgraphs, 
we let $\CC'_{\textup{prim}}$ be the class of all
tuples $G'=(V,E,C,\lin,f)$ with the following properties:
\begin{enumerate}
 \item The pair $(V,E)$ is a prime graph from the class $\CC$, and $|V|\geq 4$.
 \item The set of \emph{colors} $C$ is a non-emp\-ty finite set with $C\cap V=\emptyset$.
 \item The binary relation $\lin \subseteq C^2$ is a linear order on $C$.
 \item The \emph{coloring} $f\subseteq V \times C$ is a binary relation where for each vertex $v \in V$ 
 there exists exactly one color $c\in C$ with $(v,c)\in f$. We also denote this color $c$ by $f(v)$.
 We say a color $c\in C$ is \emph{used} if there exists a vertex $v\in V$ with $c=f(v)$. 
\end{enumerate}
To represent a colored graph ${G'\!=(V,E,C,\lin,f)}\in \CC'_{\textup{prim}}$ as a logical structure we extend the 
$5$-tu\-ple by a set $U$ to a $6$-tu\-ple $(U,V,E,M,\trianglelefteq,L)$, 
and we require that $U=V\dcup C$  additionally to the properties above. 

In the following, we call the colors of an $\LO$-col\-ored graph \emph{$\LO$-col\-ors}, and 
we say an $\LO$-col\-or $D$ is \emph{used} if there exists a vertex that is colored with $D$.

\begin{lemma}\label{lem:coloredgraphlemma}
  Let $\CC$ be a class of graphs. 
  If there exists a (parameterized) $\FPC$-can\-on\-iza\-tion of $\CC'_{\textup{prim}}$, 
  then there also exists one of  $\CC^*_{\textup{prim}}$.
\end{lemma}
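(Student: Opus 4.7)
The plan is to define an $\FPC$-canonization of $\CC^*_{\textup{prim}}$ by reducing each $\LO$-colored graph $G^*\in\CC^*_{\textup{prim}}$ to a simply-colored graph $G'\in\CC'_{\textup{prim}}$, applying the given canonization $\Theta^{\text{c}}$ of $\CC'_{\textup{prim}}$ to $G'$, and lifting the resulting ordered structure back to an ordered copy of $G^*$.

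First, I would construct an $\FPC$-transduction $T\colon\CC^*_{\textup{prim}}\to\CC'_{\textup{prim}}$ that sends $G^*=(V,E,M,\trianglelefteq,L)$ to $G'=(V,E,C,\lin,f)$ in such a way that the underlying prime graph is preserved and the color set $C$ together with $\lin$ encodes both $M$ (with its order $\trianglelefteq$) and the natural colors of all vertices. Concretely, $C$ would be partitioned into an initial block of $|M|$ marker elements ordered as in $\trianglelefteq$, followed by one block per distinct natural color $D$ appearing in $G^*$ (arranged in the lex order on natural colors); each such block would consist of $|M|^2$ position markers with an extra pair marker inserted at the $(i,j)$-th position precisely when $(i,j)\in D$, and would end with a representative element used as the color in $f$ of every vertex whose natural color equals $D$. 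Defining the equivalence relation ``same natural color'' and the lex comparison of natural colors in $\FPC$ is straightforward via counting, so $T$ can be given as an $\FPC$-counting transduction.

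Second, I would verify that this encoding preserves isomorphism in both directions: any isomorphism of $G^*$s extends naturally to the encoded graphs by acting on $M$ via the forced order-preserving bijection and on each per-color block by the induced permutation of pairs, while any isomorphism of $G'$s, by preserving $\lin$, restricts to an order-preserving bijection on the $M$-block (recovering $|M|$ and $\trianglelefteq$) and preserves each per-color block's structure (recovering each natural color as a subset of $[0,|M|-1]^2$). Then I would compose $T$ with $\Theta^{\text{c}}$ using Proposition~\ref{prop:composition} to obtain a (parameterized) transduction producing an ordered canon $K'$ of $G'$, and compose further with an $\FPC$-transduction $U$ that reads off an ordered copy of $G^*$ from $K'$: vertices and edges are those of $K'$; the basic color elements, their order $\trianglelefteq$, and the coloring $L$ are recovered by parsing the block structure of $C(K')$; and the linear order on the universe is the restriction of $\leq_{K'}$, extended by placing the basic color elements after the vertices in the order induced by $\trianglelefteq$. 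If $\Theta^{\text{c}}$ is only parameterized, the resulting canonization of $\CC^*_{\textup{prim}}$ is itself parameterized, and Lemma~\ref{lem:parametersInCanonizations} then eliminates the parameters.

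The hard part will be ensuring that the per-color blocks can actually be parsed in $\FPC$, so that the natural color of each vertex is recoverable from $K'$: since a linearly ordered set only provides the used/unused distinction and the ordering, the block layout must be arranged so that $|M|$, the boundaries between blocks, and which position markers carry pair markers can all be read off in $\FPC$. I expect this to be achievable by fixing the sizes of the initial $M$-block and the per-color position blocks as polynomials in $|M|$ and $k$ (the number of distinct natural colors), both of which are recoverable from $K'$ via simple counting on used and unused elements of $C(K')$; but verifying the details of this parsing is the technical core of the proof.
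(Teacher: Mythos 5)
Your high-level plan (encode $G^*$ as a colored graph $G'\in\CC'_{\textup{prim}}$, canonize, lift the canon back to an ordered copy of $G^*$) matches the paper, but the concrete encoding you describe cannot work, and the reason it cannot work points to an insight the paper relies on and you have missed.

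The problem is the block encoding of natural colors. In a structure from $\CC'_{\textup{prim}}$, the only information about $C$ that survives canonization is its size, its linear order, and which positions are hit by $f$ (the ``used'' colors). In your per-color block (``$|M|^2$ position markers with a pair marker inserted at the $(i,j)$-th position iff $(i,j)\in D$, ending in one used representative''), every position marker and pair marker is unused and hence indistinguishable from its neighbors; a run of consecutive unused elements does not tell you where the pair markers sit, so the block's isomorphism-invariant content is only its length $|M|^2+|D|+1$. That recovers $|D|$, not $D$. Since there are $2^{|M|^2}$ possible natural colors but a block's invariant is a single polynomially bounded integer, the natural color $D$ is irretrievably lost after canonization. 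Trying to fix this by padding slots or interleaving still reduces to the same size-only invariant, because each block contributes at most one used element. This is exactly the ``hard part'' you flag at the end, and it is not a detail: it is where the argument breaks.

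What you are overlooking is that the lifting step does not have to be a transduction that reads only $K'$. The final $\FPC$-transduction is a transduction from $G^*$, so it can simultaneously pull back formulas through $\Theta^{*\text{c}}$ (to query the canon) and directly query the relations $M$, $\trianglelefteq$, $L$ of the original $G^*$. The paper exploits this to use a far simpler encoding: define $C'=N(G^*)$ with its natural order and let $f$ map each vertex $v$ to the position of $L_v$ in the lex order $\trianglelefteq_{\CL}$ on the set of used $\LO$-colors of $G^*$. This does not make $G'$ self-contained — you indeed cannot recover $L_v$ from $G'$ alone — but that is harmless. In the lift-back transduction, for each vertex of the canon one determines (via the pulled-back formula $\varphi'_{\textup{pos}}$) the position $i$ of its color, then picks any vertex $x$ of the original $G^*$ whose $\LO$-color is at position $i$ (via $\varphi^*_{\textup{pos}}$), and assigns $L_x$. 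Well-definedness is automatic since all such $x$ share the same $\LO$-color. If you restructure your argument to use this two-sided access instead of forcing the encoding to be losslessly parseable from $K'$, the lemma goes through.
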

\begin{proof}
   Let $\CC$ be a graph class, and let us suppose that the class $\CC'_{\textup{prim}}$ admits $\FPC$-de\-fin\-a\-ble (parameterized) canonization.
   According to Lemma~\ref{lem:parametersInCanonizations}, 
   we can assume that there exists an $\FPC$-can\-on\-iza\-tion $\Theta'^{\textup{c}}$ of $\CC'_{\textup{prim}}$.
   We show that there exists an $\FPC$-can\-on\-iza\-tion $\Theta$ of $\CC^*_{\textup{prim}}$.
  
  In order to do this, we first map each $\LO$-col\-ored graph $G^*\!\in \CC^*_{\textup{prim}}$ to a colored graph $G'\!\in \CC'_{\textup{prim}}$. 
  Within this mapping we replace the set of used $\LO$-col\-ors with the initial segment of the natural numbers.
  We define the mapping by an $\FPC$-count\-ing transduction $\Theta^{\raute}\!$, and
  apply Proposition~\ref{thm:CountingTransduction}, to show that there exists an $\FPC$-trans\-duc\-tion ${\Theta^*}\vphantom{\Theta}'$ 
  that defines essentially the same mapping.
  Then we compose  ${\Theta^*}\vphantom{\Theta}'$ and the canonization $\Theta'^{\textup{c}}$ to a 
  transduction $\Theta^{*\textup{c}}$ (Proposition~\ref{prop:composition}). 
  An overview of the different transductions can be found in Figure~\ref{fig:overviewtransduction-coloredgraphs}.
  Finally, we construct $\Theta$ with the help of $\Theta^{*\textup{c}}\!$, 
  and substitute the colors used in the ordered colored graphs again by their corresponding $\LO$-col\-ors.

  \begin{figure}[htbp]
    \begin{tikzpicture}[scale=1]
        \tikzstyle{vertex}=[anchor=west, draw,inner sep=3pt,align=left]
        
        \begin{scope}
            \node[draw,text width=3cm,align=center] (1) at (0,0) {$G^*\!\in\CC^*_{\textup{prim}}$};
            \node[vertex] (2b) at (3.8,1.25) {$G'=\Theta^{\raute}[G^*]\in \CC'_{\textup{prim}}$};
            \node[anchor=west,inner sep=1.5pt,align=left] at (4.8,0.6) {\hspace*{1mm} \rotatebox{90}{$\cong$}};
            \node[vertex] (2c) at (4.7,0) {${\Theta^*}\vphantom{\Theta}'[G^*]$};
            \node[vertex] (3a) at (8.7,0) {$\Theta'^{\textup{c}}[{\Theta^*}\vphantom{\Theta}'[G^*]]$};
            \node[anchor=west,inner sep=1.5pt,align=left] at (9.0,-0.7) {\hspace*{4mm} \rotatebox{90}{$\cong$}};
            \node[vertex] (3b) at (9.0,-1.4) {$\Theta^{*\textup{c}}[G^*]$};
            \draw[->] (1) to node[above]{${\Theta^*}\vphantom{\Theta}'$} (2c);
            \draw[->,out=23,in=180] (1) to node[above]{\vspace*{5mm}\rotatebox{5}{$\Theta^{\raute}$}} (2b) ;
            \draw[->,out=-23,in=180] (1) to node[above]{$\Theta^{*\textup{c}}$} (3b);
            \draw[->] (2c) to node[above]{$\Theta'^{\textup{c}}$}  (3a);
        \end{scope}
    \end{tikzpicture}
\caption{Overview of the transductions}
\label{fig:overviewtransduction-coloredgraphs}
\end{figure}
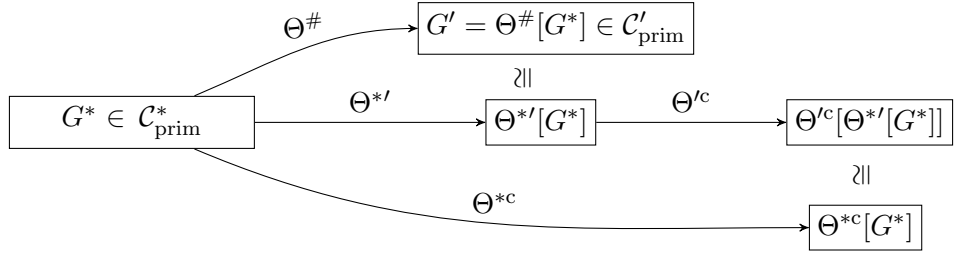

  First, we map each $\LO$-col\-ored graph $G^*\!=(U,V,E,M,\trianglelefteq, L)\in \CC^*_{\textup{prim}}$ to a colored graph 
  $G'\!=(U'\!,V'\!,E'\!,C'\!,\lin'\!,f')\in \CC'_{\textup{prim}}$.
  The linear order $\trianglelefteq$ on $M$ induces a linear order $\trianglelefteq_{\CL}$ on the set of $\LO$-col\-ors $\CL:=\{L_v\mid v\in V\}$.
  We construct $G'$ by substituting for each vertex $v\in V$ the $\LO$-col\-or $L_v$ by the number corresponding to the position
  of $L_v$ within the linear order $\trianglelefteq_{\CL}$ on $\CL$.
  It is not hard to see that there exists an $\FPC$-formula $\varphi^*_\textup{pos}(x,p)$ such that
  for all $G^*\!\in \CC^*_{\textup{prim}}$, $v\in V(G^*)$ and $i\in N(G^*)$ we have
  $G^*\models \varphi^*_\textup{pos}[v,i]$ if, and only if, the $\LO$-col\-or $L_v$ is at position $i$ regarding $\trianglelefteq_{\CL}$.
  We let $C'$ be the set of numbers $N(G^*)$, and we let $\lin'$ be the natural linear order on $N(G^*)$.
  
  The following $\FPC$-count\-ing transduction $\Theta^{\raute}$ 
  maps each $\LO$-col\-ored graph $G^*\!\in \CC^*_{\textup{prim}}$ to a colored graph $G'\!\in \CC'_{\textup{prim}}$ as described above.
  We let 
  $$\Theta^{\raute}=(\theta^{\raute}_{U}(x),\theta^{\raute}_{V}(x),\theta^{\raute}_{E}(x,x'),
  \theta^{\raute}_{C}(p),\theta^{\raute}_{\lin}(p,p'),\theta^{\raute}_{f}(x,p)),$$
  where 
  \begin{align*}
&\theta^{\raute}_{U}(x):= V(x), && \theta^{\raute}_{C}(p):= \true,\\
&\theta^{\raute}_{V}(x):= V(x), && \theta^{\raute}_{\lin}(p,p'):=p\leq p',\\
&\theta^{\raute}_{E}(x,x'):= E(x,x'),&& \theta^{\raute}_{f}(x,p):= \varphi^*_\textup{pos}(x,p).
\end{align*}

\noindent
According to Proposition~\ref{thm:CountingTransduction}, there exists an $\FPC$-trans\-duc\-tion ${\Theta^*}\vphantom{\Theta}'\!$,
such that $\Theta^{\raute}[G^*]$ and ${\Theta^*}\vphantom{\Theta}'[G^*]$ are isomorphic for all $\LO$-col\-ored graphs $G^*\!\in \CC^*_{\textup{prim}}$.
We compose the two transductions ${\Theta^*}\vphantom{\Theta}'$ and $\Theta'^{\textup{c}}$ (Proposition~\ref{prop:composition})
and obtain an $\FPC[\{V,E,M,\trianglelefteq, L\},\{V,E,C,\lin, f,\leq\}]$-trans\-duc\-tion 
$\Theta^{*\textup{c}}\!=(\theta_{U}^{*\textup{c}},\theta_{\approx}^{*\textup{c}},\theta_{V}^{*\textup{c}},
\theta_{E}^{*\textup{c}},\theta_{C}^{*\textup{c}},\theta_{\lin}^{*\textup{c}},\theta_{f}^{*\textup{c}},\theta_{\leq}^{*\textup{c}})$.
The transduction $\Theta^{*\textup{c}}$ 
maps each $\LO$-col\-ored graph $G^*\!\in \CC^*_{\textup{prim}}$ 
to the canon $\Theta^{*\textup{c}}[G^*]$ of the colored graph ${\Theta^*}\vphantom{\Theta}'[G^*]\in \CC'_{\textup{prim}}$.
We use the linear order on the set of colors of $\Theta^{*\textup{c}}[G^*]$ to replace the color of each vertex again with the corresponding $\LO$-col\-or.

Let $\tup{u}$ be the tuple of domain variables of 
$\Theta^{*\textup{c}}$.
Let $\approx$ be the equivalence relation generated by $\theta_{\approx}^{*\textup{c}}[G^*;\tup{u},\tup{u}']$ on $(G^*)^{\tup{u}}\!$.
Similarly to the formula $\varphi^*_\textup{pos}(x,p)$, we can define an 
$\FPC$-formula $\varphi'_\textup{pos}(\tup{u},p)$ where
for all $G^*\in \CC^*_{\textup{prim}}$, $\tup{a}\in (G^*)^{\tup{u}}$ and $i\in N(G^*)$ we have
$G^*\models \varphi'_\textup{pos}[\tup{a},i]$ if, and only if,
$\tup{a}\modout_{\approx}\in V(\Theta^{*\textup{c}}[G^*])$ and 
the position of the color of $\tup{a}\modout_{\approx}$ is $i$ regarding the linear order induced by $\lin(\Theta^{*\textup{c}}[G^*])$
on the set  $\{f(\Theta^{*\textup{c}}[G^*])(v)\mid v\in V(\Theta^{*\textup{c}}[G^*])\}$ of colors.

We use $\Theta^{*\textup{c}}$ and the formulas $\varphi^*_\textup{pos}$ and $\varphi'_\textup{pos}$
to define an $\FPC$-can\-on\-iza\-tion $\Theta$ for the class $\CC^*_{\textup{prim}}$.
We let 
$$\Theta=(\theta_{U},\theta_{\approx},\theta_{V},\theta_{E},\theta_{M},\theta_{\trianglelefteq},\theta_{L},\theta_{\leq}),$$ 
where
\begin{align*}
&\theta_{U}(\tup{u},z):= \theta^{*\textup{c}}_{V}(\tup{u}), \\
&\theta_{\approx}(\tup{u},z,\tup{u}'\!,z'):= \big(\theta^{*\textup{c}}_{\approx}(\tup{u},\tup{u}')\land\lnot M(z)\land \lnot M(z') \big)\lor
  \big(z=z'\land M(z)\land M(z') \big),\\
&\theta_{V}(\tup{u},z):= \theta^{*\textup{c}}_{V}(\tup{u}) \land \lnot M(z),\\
&\theta_{E}(\tup{u},z,\tup{u}'\!,z'):=\theta^{*\textup{c}}_{E}(\tup{u},\tup{u'})\land \lnot M(z)\land \lnot M(z'),\\
&\theta_{M}(\tup{u},z):= M(z),\\
&\theta_{\trianglelefteq}(\tup{u},z,\tup{u}'\!,z'):=\;\trianglelefteq\!(z,z')\land M(z)\land M(z'),\\
&\theta_{L}(\tup{u},z,\tup{u}'\!,z'\!,\tup{u}''\!,z''):=\theta_{V}(\tup{u},z) \land \theta_{M}(\tup{u}'\!,z') \land \theta_{M}(\tup{u}''\!,z'')\\
&\phantom{\theta_{L}(\tup{u},z,\tup{u}'\!,z'\!,\tup{u}''\!,z''):=}\land 
   \exists p \exists x \big( \varphi'_{\textup{pos}}(\tup{u},p)\land \varphi^*_{\textup{pos}}(x,p)\land L(x,z'\!,z'')\big),\\
&\theta_{\leq}(\tup{u},z,\tup{u}'\!,z'):= \big(\theta^{*\textup{c}}_{\leq}(\tup{u},\tup{u}')\land \lnot M(z)\land \lnot M(z')\big)\lor  
      \theta_{\trianglelefteq}(\tup{u},z,\tup{u}'\!,z') \lor \big(\lnot M(z)\land M(z')\big).
\end{align*}

\noindent
Let $G^*\!\in \CC^*_{\textup{prim}}$. Then the vertices of $\Theta[G^*]$ correspond to the vertices of $\Theta^{*\textup{c}}[G^*]$, 
and the underlying graph of $\Theta[G^*]$ is isomorphic to the underlying graph of $\Theta^{*\textup{c}}[G^*]$.
The basic color elements of $\Theta[G^*]$ and their linear order correspond to the basic color elements of $G^*$ and their linear order.
The $\LO$-col\-or of each vertex of $\Theta[G^*]$ is defined as follows:
First, the position $i$ of the color of the corresponding vertex of $\Theta^{*\textup{c}}[G^*]$ is determined.
Then we pick an (arbitrary) vertex $v$ of $G^*$ whose $\LO$-col\-or is at position $i$ with respect to $\trianglelefteq_{\CL}$.
We assign the $\LO$-col\-or of $v$ to the vertex of $\Theta[G^*]$.
To obtain the linear order on the elements of $\Theta[G^*]$, we combine 
the linear order on the vertices of $\Theta[G^*]$,
which is obtained from the linear on the vertices of $\Theta^{*\textup{c}}[G^*]$,
and the linear order $\trianglelefteq\!(\Theta[G^*])$ on the basic color elements of $\Theta[G^*]$.
Consequently, $\Theta[G^*]$ is an ordered copy of $G^*$ for all $G^*\!\in \CC^*_{\textup{prim}}$.
\end{proof}

\noindent
The following corollary is an direct consequence of Lemma~\ref{lem:coloredgraphlemma} and the Modular Decomposition Theorem.
\begin{corollary}\label{cor:ModDecompThLinOrd}
 Let $\CC$ be a class of graphs that is closed under induced subgraphs.
 If $\CC'_{\textup{prim}}$ admits $\FPC$-de\-fin\-a\-ble (parameterized) canonization, 
 then $\CC$ admits $\FPC$-de\-fin\-a\-ble canonization. 
\end{corollary}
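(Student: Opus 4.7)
The plan is to prove this corollary as an immediate concatenation of two already-established results, with essentially no new technical work. The input hypothesis is about canonization of $\CC'_{\textup{prim}}$ (prime graphs from $\CC$ colored with elements of a linearly ordered set), while the Modular Decomposition Theorem (Theorem~\ref{thm:ModDecompTh}) is stated in terms of $\CC^*_{\textup{prim}}$ (prime graphs from $\CC$ that are $\LO$-colored, i.e., colored with binary relations on a linearly ordered set). So I need to bridge these two notions of ``prime colored graph'' and then invoke the main theorem.

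First I would suppose that $\CC'_{\textup{prim}}$ admits $\FPC$-definable parameterized canonization. I would then apply Lemma~\ref{lem:coloredgraphlemma} directly: the hypothesis there is precisely the existence of a (parameterized) $\FPC$-canonization of $\CC'_{\textup{prim}}$, and its conclusion is the existence of an $\FPC$-canonization of $\CC^*_{\textup{prim}}$. The lemma handles the translation between colorings by elements of a linearly ordered set and colorings by binary relations over a linearly ordered set (the latter being encoded from the former using the position of the color in the natural color ordering induced by $\trianglelefteq_{\CL}$). This step is ``free'' in the sense that all the technical work, including the use of the composition proposition and of counting transductions to replace $\LO$-colors by their positional indices, is already performed in the proof of Lemma~\ref{lem:coloredgraphlemma}.

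Next I would feed the resulting $\FPC$-canonization of $\CC^*_{\textup{prim}}$ into the Modular Decomposition Theorem (Theorem~\ref{thm:ModDecompTh}). Since $\CC$ is assumed closed under induced subgraphs, the hypothesis of the Modular Decomposition Theorem is satisfied, and its conclusion gives exactly what we want: $\CC$ admits $\FPC$-definable canonization.

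There is no real obstacle here: the corollary is a composition of two previously proved implications, and the remark preceding the statement already notes that it is a ``direct consequence'' of Lemma~\ref{lem:coloredgraphlemma} and the Modular Decomposition Theorem. If anything, the only point worth checking is that the parenthetical ``(parameterized)'' in the hypothesis correctly matches the assumption required by Lemma~\ref{lem:coloredgraphlemma}, which it does, since that lemma is explicitly stated to accept either a canonization or a parameterized canonization of $\CC'_{\textup{prim}}$, and its output is suitable input for Theorem~\ref{thm:ModDecompTh}.
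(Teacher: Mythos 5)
Your proposal is correct and matches the paper's own (one-line) justification exactly: apply Lemma~\ref{lem:coloredgraphlemma} to convert a (parameterized) $\FPC$-canonization of $\CC'_{\textup{prim}}$ into one of $\CC^*_{\textup{prim}}$, then invoke Theorem~\ref{thm:ModDecompTh}. Nothing to add.
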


\noindent
Within the proof of Lemma~\ref{lem:coloredgraphlemma}, it is described how to obtain a polynomial-time canonization mapping for $\CC^*_{\textup{prim}}$
if there exists one for $\CC'_{\textup{prim}}$. Hence, Corollary~\ref{cor:ptimecanonizationMDT} implies the following:
\begin{corollary}\label{cor:ptimecanonizationMDTLinOrd}
 Let $\CC$ be a class of graphs that is closed under induced subgraphs.
 If $\CC'_{\textup{prim}}$ admits poly\-no\-mi\-al-time canonization, 
 then $\CC$ admits poly\-no\-mi\-al-time canonization. 
\end{corollary}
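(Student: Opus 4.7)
The plan is to chain two polynomial-time reductions, exactly mirroring the $\FPC$-definable path of Corollary~\ref{cor:ModDecompThLinOrd}. First, I would establish a polynomial-time analog of Lemma~\ref{lem:coloredgraphlemma}: if $\CC'_{\textup{prim}}$ admits polynomial-time canonization, then so does $\CC^*_{\textup{prim}}$. Then I would apply Corollary~\ref{cor:ptimecanonizationMDT}, which already transfers polynomial-time canonization from $\CC^*_{\textup{prim}}$ up to $\CC$.

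For the first reduction, I would imitate the construction in the proof of Lemma~\ref{lem:coloredgraphlemma} and check that every step runs in polynomial time. Given an input $\LO$-colored graph $G^*=(U,V,E,M,\trianglelefteq,L)\in\CC^*_{\textup{prim}}$, compute the set $\CL=\{L_v\mid v\in V\}$ of used $\LO$-colors together with the order $\trianglelefteq_\CL$ induced by $\trianglelefteq$; both are obtainable in polynomial time by a direct scan. Build the colored graph $G'\in \CC'_{\textup{prim}}$ whose underlying graph is $(V,E)$ and whose color of a vertex $v$ is the position of $L_v$ within $\trianglelefteq_\CL$, encoded as an initial segment of $\N$ with the natural order. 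Apply the assumed polynomial-time canonization of $\CC'_{\textup{prim}}$ to produce an ordered copy of $G'$. Finally, reverse the color substitution: replace each numerical color $i$ appearing in the canon with the $\LO$-color occupying position $i$ in $\trianglelefteq_\CL$ (read off from any representative vertex in $G^*$ sharing that position), carry over the basic color elements of $G^*$ together with $\trianglelefteq$, and combine the vertex order from the canonization with $\trianglelefteq$ into a linear order on the entire universe by placing all vertices before all basic color elements. Since isomorphic $G^*$ produce equal $G'$ and hence equal canons, and the back-substitution depends only on the data of $G^*$, the resulting mapping is a canonization of $\CC^*_{\textup{prim}}$, and its total cost is polynomial.

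For the second reduction, Corollary~\ref{cor:ptimecanonizationMDT} now directly yields a polynomial-time canonization of $\CC$. Concretely, the recursion from Section~\ref{app:sec:MDT-canonmapping} can be executed algorithmically: compute the modular decomposition tree of $G$ (polynomial time by the known linear-time algorithms, or even in $\LOGSPACE$ by Corollary~\ref{cor:moddecomptreeinL}); process the tree bottom-up, building each $\LO$-colored graph $G^*_{i,v}$ from the canons of the children $G[D_{i+1,w}]$, applying the polynomial-time canonization of $\CC^*_{\textup{prim}}$ just established, and assembling the canon $f(G[D_{i,v}])$ by the edge rules~\ref{enum:newedges1} and~\ref{enum:newedges2}. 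The recursion tree has at most $n$ levels and $O(n)$ nodes in total, and each node requires only polynomially many polynomial-time operations, so the overall algorithm is polynomial-time.

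No step is a real obstacle: the first reduction is a routine algorithmic rewriting of the transduction-based argument in Lemma~\ref{lem:coloredgraphlemma}, and the second is the polynomial-time analog of the Modular Decomposition Theorem already stated as Corollary~\ref{cor:ptimecanonizationMDT}. The only point requiring minor care is verifying that the back-substitution of $\LO$-colors in the first reduction preserves canonicity, which follows because the position of an $\LO$-color in $\trianglelefteq_\CL$ is an isomorphism invariant of $G^*$.
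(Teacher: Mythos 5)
Your proof is correct and follows exactly the same two-step route the paper takes: first argue that the construction in the proof of Lemma~\ref{lem:coloredgraphlemma} yields a polynomial-time canonization of $\CC^*_{\textup{prim}}$ from one of $\CC'_{\textup{prim}}$, then invoke Corollary~\ref{cor:ptimecanonizationMDT}. The paper states this in one line; you merely spell out the algorithmic details that the paper leaves implicit.
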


\begin{remark}
Note that if $\CC_{\textup{prim}}$ admits poly\-no\-mi\-al-time canonization, then 
it does not necessarily follow that $\CC$ admits poly\-no\-mi\-al-time canonization. 
Let us suppose there is a deterministic Turing machine $M$ that computes in polynomial time for all graphs $H\in \CC_{\textup{prim}}$ a 
linear order $\leq_H$ on the vertex set of $H$, 
such that  $H\cong H'\!$ implies $(H,\leq_H)\cong (H'\!,\leq_{H'})$ for all graphs $H,H'\in \CC_{\textup{prim}}$.
Then $\CC_{\textup{prim}}$ admits poly\-no\-mi\-al-time canonization.
Let us consider two graphs $G$ and $G'$ that are isomorphic.
Assume there exists an isomorphism $h$ between their modular contractions $G_\sim$ and $G'_\sim$
such that $h(v\modout_\sim)= v'\!\modout_\sim$ but $v\modout_\sim$ and $v'\!\modout_\sim$ represent modules that induce non-iso\-mor\-phic graphs.
For example, suppose that $G$ and $G'$ are isomorphic to the graph in Figure~\ref{fig:module2}, 
and that $v\modout_\sim$ and $v'\!\modout_\sim$ correspond to the two ends of the modular contraction (a~path of length $4$).
Depending on the input strings that represent $G_\sim$ and $G'_\sim$,
it is possible that $M$ computes linear orders $\leq_{G_\sim}$ and $\leq_{G'_\sim}$ such that $v\modout_\sim$ and $v'\!\modout_\sim$
occur at the same position in $\leq_{G_\sim}$ and $\leq_{G'_\sim}$, respectively.
If we use $\leq_{G_\sim}$  and the linear orders on the maximal proper modules of $G$
to construct a linear order for an isomorphic copy of~$G$, 
and use $\leq_{G'_\sim}$  and the linear orders on the maximal proper modules of $G'$
to construct a linear order for an isomorphic copy of $G'$,
we obtain ordered copies of $G$ and $G'$ that are not isomorphic.
\end{remark}

\section{Capturing \texorpdfstring{Polynomial Time}{PTIME} on Permutation Graphs}\label{sec:permutationgraphs}

In this section we use the Modular Decomposition Theorem
to prove that there exists an $\FPC$-can\-on\-iza\-tion of the class of permutation graphs.
More precisely, 
for prime permutation graphs  $G$
we show that there exist parameterized $\FP$-for\-mu\-las that define the strict linear orders of a realizer of $G$.
This directly implies that the class of prime permutation graphs admits $\FP$-de\-fin\-able orders.
As the class of permutation graphs is closed under induced subgraphs, we can apply Corollary~\ref{cor:moddecthm}, and obtain 
that canonization of the class of permutation graphs is definable in $\FPC$.
As a consequence, $\FPC$ captures $\PTIME$ on the class of permutation graphs.

\subsection{Preliminaries}

 Let $G=(V,E)$ be a graph, and let $<_1$ and $<_2$ be two strict linear orders on the vertex set~$V\!$.
 We call $(<_1,<_2)$ a \emph{realizer} of $G$ if  $u,v\in V$ are adjacent in $G$ if and only if 
they occur in different order in $<_1$ and $<_2$, that is,
$u<_1 v$ and $v<_2 u$, or $v<_1 u$ and $u<_2 v$.
A graph $G$ is a \emph{permutation graph} if there exists a realizer of $G$.
Figure~\ref{fig:permrealizer} shows an example of a permutation graph and a realizer of it. 
A detailed introduction to permutation graphs can be found in~\cite{golumbic}.
\begin{figure}[htbp]
\centering
\begin{minipage}{0.2\textwidth}
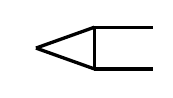
\end{minipage}\hspace{2em}
\begin{minipage}{0.3\textwidth}
$a<_1 b<_1 c<_1 d<_1 e$\\
$d<_2 a<_2 c<_2 e<_2 b$	
\end{minipage}
\caption{A permutation graph and a realizer}
\label{fig:permrealizer}
\end{figure}

\noindent
Let $\vartriangleleft_1$ and $\vartriangleleft_2$ be two binary relations. 
We call the pair $(\vartriangleleft_1,\vartriangleleft_2)$ \emph{transitive} if 
each of the binary relations $\vartriangleleft_1$ and $\vartriangleleft_2$ is transitive.
Further, we let the transitive closure $(\vartriangleleft_1,\vartriangleleft_2)^T$ of $(\vartriangleleft_1,\vartriangleleft_2)$
be the pair $(\vartriangleleft_1^T,\vartriangleleft_2^T)$ 
where $\vartriangleleft_1^T$ and $\vartriangleleft_2^T$ is the transitive closure of $\vartriangleleft_1$ and $\vartriangleleft_2$, respectively.
Let $G=(V,E)$ be a graph
and $(\vartriangleleft_1,\vartriangleleft_2)$ be a pair of binary relations on $V\!$.
The pair $(\vartriangleleft_1,\vartriangleleft_2)$ is \emph{closed under $E$}  
if for all vertices $u,v\in V$ and all $i\in[2]$ the following holds:
\begin{itemize}
 \item If $u\vartriangleleft_i v$ and $\{u,v\}\in E$, then $v\vartriangleleft_{3-i} u$.
 \item If $u\vartriangleleft_i v$ and $\{u,v\}\not\in E$, then $u\vartriangleleft_{3-i} v$.
\end{itemize}
Notice that each realizer of a graph $G=(V,E)$ is closed under the edge relation $E$. 
Moreover, we observe the following.
\begin{observation}
Let $G=(V,E)$ be a graph. Then 
a pair of binary relations $(\vartriangleleft_{1},\vartriangleleft_{2})$ is a realizer of $G$
if, and only if, $\vartriangleleft_{1}$ and $\vartriangleleft_{2}$ are strict linear orders and   
$(\vartriangleleft_{1},\vartriangleleft_{2})$ is closed under the edge relation~$E$.
\end{observation}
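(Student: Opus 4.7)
The plan is to prove both directions by simply unfolding the definitions, since the statement essentially reformulates the definition of realizer in terms of a closure condition.

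For the forward direction, I would assume $(\vartriangleleft_1,\vartriangleleft_2)$ is a realizer. Then by definition $\vartriangleleft_1$ and $\vartriangleleft_2$ are strict linear orders, so only closure under $E$ needs verification. Let $u,v \in V$ and $i \in [2]$ with $u \vartriangleleft_i v$. If $\{u,v\} \in E$, the realizer property says $u$ and $v$ appear in opposite order in the two relations, so $v \vartriangleleft_{3-i} u$. If $\{u,v\} \notin E$, the realizer property (its contrapositive) implies $u$ and $v$ do not appear in opposite order; since $\vartriangleleft_{3-i}$ is a strict linear order and $u \neq v$, this forces $u \vartriangleleft_{3-i} v$.

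For the backward direction, assume $\vartriangleleft_1, \vartriangleleft_2$ are strict linear orders and $(\vartriangleleft_1,\vartriangleleft_2)$ is closed under $E$. I need to show that for all distinct $u,v \in V$, the pair $\{u,v\}$ is an edge if and only if $u$ and $v$ appear in opposite order in the two relations. Without loss of generality suppose $u \vartriangleleft_1 v$. If $\{u,v\} \in E$, closure gives $v \vartriangleleft_2 u$, i.e., opposite order. If $\{u,v\} \notin E$, closure gives $u \vartriangleleft_2 v$, i.e., the same order. Thus $\{u,v\} \in E$ exactly when the orders disagree on the pair, which is precisely the realizer condition.

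There is no real obstacle here; the statement is a direct rewrite of the definition of a realizer, and the only mildly nontrivial point is using totality of the strict linear orders (together with $u \neq v$) in the non-edge case of the forward direction to upgrade ``not in opposite order'' to ``in the same order''. I would therefore present the proof as two short paragraphs corresponding to the two implications.
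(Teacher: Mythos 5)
Your proof is correct. The paper leaves this observation without proof, and your direct unfolding of the two definitions — using irreflexivity to get $u\neq v$ and then totality of the strict linear order to upgrade ``not in opposite order'' to ``same order'' in the non-edge case — is exactly the straightforward argument the observation tacitly relies on.
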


\noindent
Now for all $i\in[2]$ we let 
\begin{align*}
D_{3-i}^E&:=\big\{(v,u)\mid \text{$u\vartriangleleft_{i} v$ and $\{u,v\}\in E$}\big\}\text{ and}\\
D_{3-i}^{\not E}&:=\big\{(u,v)\mid \text{$u\vartriangleleft_{i} v$ and $\{u,v\}\not\in E$}\big\},
\end{align*}
and we let $(\vartriangleleft_1,\vartriangleleft_2)^E$ be the pair $(\vartriangleleft_1^E$, $\vartriangleleft_2^E)$ of relations where for all $i\in[2]$ we have
\begin{align*}
 {\vartriangleleft}_i^E&:= {\vartriangleleft}_i\cup\, D_i^E \cup D_i^{\not E}.
 \end{align*}
\begin{observation}
Let $G=(V,E)$ be a graph and $\vartriangleleft_1,\vartriangleleft_2$ be binary relations on $V\!$. 
Then $(\vartriangleleft_1,\vartriangleleft_2)^E$ is closed under $E$.
\end{observation}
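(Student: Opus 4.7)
The plan is a direct case analysis based on how a pair gets into $\vartriangleleft_i^E$. I first unfold the definitions so the symbols are easier to keep straight: a pair $(a,b)$ lies in $\vartriangleleft_i^E$ precisely if either (i) $a \vartriangleleft_i b$, or (ii) $b \vartriangleleft_{3-i} a$ and $\{a,b\}\in E$ (this is the $D_i^E$ case), or (iii) $a \vartriangleleft_{3-i} b$ and $\{a,b\}\not\in E$ (this is the $D_i^{\not E}$ case). Note the index flip in cases (ii) and (iii), which comes from the $3-i$ subscript in the definitions of $D_{3-i}^E$ and $D_{3-i}^{\not E}$.

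With this unfolding, closedness under $E$ splits into two implications to verify, and each reduces to three subcases according to (i)--(iii). Suppose $u \vartriangleleft_i^E v$. If $\{u,v\}\in E$, then in subcase (i) the pair $(v,u)$ lies in $D_{3-i}^E$ and hence in $\vartriangleleft_{3-i}^E$; in subcase (ii) we already have $v \vartriangleleft_{3-i} u$, which is contained in $\vartriangleleft_{3-i}^E$; and subcase (iii) is vacuous because it asserts $\{u,v\}\not\in E$. Dually, if $\{u,v\}\not\in E$, subcase (i) puts $(u,v)$ into $D_{3-i}^{\not E}\subseteq\vartriangleleft_{3-i}^E$, subcase (ii) is vacuous, and subcase (iii) already gives $u \vartriangleleft_{3-i} v$. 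This covers all possibilities and yields the claim.

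There is no real obstacle here; the only thing one has to be careful about is bookkeeping the index swap between $i$ and $3-i$ induced by the definitions of $D_{3-i}^E$ and $D_{3-i}^{\not E}$. In particular, the key observation making the proof trivial is that the two ``correction'' sets $D_i^E$ and $D_i^{\not E}$ were designed exactly to absorb the witnesses required by closedness under $E$: the $D$-sets translate every relevant pair from $\vartriangleleft_{3-i}$ into a pair of $\vartriangleleft_i^E$ with the right orientation, so the closure condition becomes automatic once it is checked on the base relation.
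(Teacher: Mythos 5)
Your proof is correct; the paper states this observation without proof, treating it as immediate from the definitions, and your case analysis (unfolding $\vartriangleleft_i^E$ into the three constituent sets, then checking both closure conditions with the edge/non-edge dichotomy and noting the vacuous subcases) is exactly the implicit argument. The index bookkeeping — in particular that $D_i^E$ and $D_i^{\not E}$ are defined via $\vartriangleleft_{3-i}$ — is handled correctly.
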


\noindent
Let $\vartriangleleft_1,\vartriangleleft_2$ be binary relations on $V\!$.
It is not hard to see that the relations ${\vartriangleleft}_1^E$ and ${\vartriangleleft}_2^E$ are minimal with the property that
$\vartriangleleft_1\,\subseteq\, \vartriangleleft_1^E$, $\vartriangleleft_2\,\subseteq\, \vartriangleleft_2^E$ and 
$(\vartriangleleft_1^E,\vartriangleleft_2^E)$ is closed under $E$.
Thus, we call  $(\vartriangleleft_1,\vartriangleleft_2)^E$
the \emph{closure of $(\vartriangleleft_1,\vartriangleleft_2)$ under~$E$}.

\subsection{Defining Orders on Prime Permutation Graphs}\label{sec:defineRealizersinFPC}
We now show that the class of prime permutation graphs admits $\FP$-de\-fin\-able orders.
It is known that the realizer of a prime permutation graph is unique up to reversal and exchange \cite{Moehring85}.
Thus, a prime permutation graph has at most $4$ different realizers.
We prove that the strict linear orders of these realizers are definable in $\FP$.

Let $G=(V,E)$ be a prime permutation graph.
For each $w\in V$ we define two binary relations $\vartriangleleft_{1}^w$ and $\vartriangleleft_{2}^w$ on the vertex set $V\!$. 
If there exists a realizer $(<_1,<_2)$ of $G$ where $w$ is the first vertex of the first strict linear order $<_1$, 
then it will turn out that  $(\vartriangleleft_{1}^w,\vartriangleleft_{2}^w)=(<_1,<_2)$.

Let $w\in V\hspace{-1pt}$. In order to construct the binary relations $\vartriangleleft_{1}^w$ and $\vartriangleleft_{2}^w$, 
we recursively define relations $\vartriangleleft_{1,k}^w$ and $\vartriangleleft_{2,k}^w$ on the vertex set $V$ for all $k\geq 0$.
We begin with defining the relations for $k=0$.
As $w$ is the first element of the first strict linear order of the realizer that we want to reconstruct, we let 
  \begin{align*}
  	\vartriangleleft_{1,0}^w\,:=\,\{(w,v)\mid v\in V,v\not=w\}\text{\quad and\quad}
  	\vartriangleleft_{2,0}^w\,:=\,\emptyset.
  \end{align*}
Now, we recursively define $\vartriangleleft_{1,k+1}^w$ and $\vartriangleleft_{2,k+1}^w$ for all $k> 0$ as follows:\vspace{0.5pt}
\begin{align*}
(\vartriangleleft_{1,k+1}^w,\vartriangleleft_{2,k+1}^w):=((\vartriangleleft_{1,k}^w,\vartriangleleft_{2,k}^w)^{E})^T.\hspace*{2cm}
\end{align*}
Clearly, for all vertices $w\in V$ and all $k\geq 0$ the relations satisfy the property that 
\begin{align*}
 \vartriangleleft_{1,k}^w\ \subseteq\ \vartriangleleft_{1,k+1}^w\text{\quad and\quad} 
 \vartriangleleft_{2,k}^w\ \subseteq\ \vartriangleleft_{2,k+1}^w.\hspace*{1.5cm}
\end{align*}
Since the vertex set is finite, there exists an $m\geq 0$ such that 
$\vartriangleleft_{i,m}^w\,=\, \vartriangleleft_{i,m+1}^w$ for all $i\in[2]$. 
We define $\vartriangleleft_i^w\,:=\,\vartriangleleft_{i,m}^w$ for $i\in[2]$.

In the following, let $(<_1,<_2)$ be a realizer of the permutation graph $G$, and let $w$ be the first element of $<_1$.
We show that $(\vartriangleleft_{1}^w,\vartriangleleft_{2}^w)=(<_1,<_2)$.
By definition of $(\vartriangleleft_{1,0}^w,\vartriangleleft_{2,0}^w)$ 
we have $\vartriangleleft_{1,0}^w\,\subseteq\, <_1$ and $\vartriangleleft_{2,0}^w\,\subseteq\, <_2$. 
Further,  we obtain ${(\vartriangleleft_1^w,\vartriangleleft_2^w)}$ from $(\vartriangleleft_{1,0}^w,\vartriangleleft_{2,0}^w)$
by recursively taking the closure under the edge relation $E$ and the transitive closure.
Since the realizer ${(<_1,<_2)}$ is closed under both, the following observation holds.
\begin{observation}\label{obs:orderscontained}
For all $k\geq 0$, it holds that $\vartriangleleft_{1,k}^w\, \subseteq\, <_1$ and $\vartriangleleft_{2,k}^w\, \subseteq\,  <_2$.
\end{observation}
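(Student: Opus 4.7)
The plan is a straightforward induction on $k$, exploiting the two closure properties of a realizer: a realizer is closed under the edge relation $E$ (noted explicitly in the excerpt), and each of its two components is transitive.

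For the base case $k=0$, the claim is immediate. Since $w$ is the first element of $<_1$, we have $w <_1 v$ for every $v\in V$ with $v\neq w$, so $\vartriangleleft_{1,0}^w=\{(w,v)\mid v\in V, v\neq w\}\subseteq\ <_1$. The relation $\vartriangleleft_{2,0}^w=\emptyset$ is trivially contained in $<_2$.

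For the inductive step, suppose $\vartriangleleft_{1,k}^w\subseteq\, <_1$ and $\vartriangleleft_{2,k}^w\subseteq\, <_2$. By construction, $(\vartriangleleft_{1,k+1}^w,\vartriangleleft_{2,k+1}^w)=((\vartriangleleft_{1,k}^w,\vartriangleleft_{2,k}^w)^E)^T$, so I would handle the two closure operations separately. First consider $(\vartriangleleft_{1,k}^w,\vartriangleleft_{2,k}^w)^E$. A new pair is added to $\vartriangleleft_{1,k}^w$ only through either $D_1^E=\{(v,u)\mid u\vartriangleleft_{2,k}^w v,\ \{u,v\}\in E\}$ or $D_1^{\not E}=\{(u,v)\mid u\vartriangleleft_{2,k}^w v,\ \{u,v\}\notin E\}$. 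In the first case the inductive hypothesis yields $u<_2 v$, and since $\{u,v\}\in E$ the realizer property forces $v<_1 u$; in the second case it yields $u<_2 v$ and $\{u,v\}\notin E$, whence $u<_1 v$. Either way the added pair lies in $<_1$. The analogous argument, swapping the roles of the two coordinates, shows that every pair added to $\vartriangleleft_{2,k}^w$ lies in $<_2$. Hence $(\vartriangleleft_{1,k}^w)^E\subseteq\, <_1$ and $(\vartriangleleft_{2,k}^w)^E\subseteq\, <_2$.

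It then remains to pass from the closure under $E$ to its transitive closure. Since $<_1$ and $<_2$ are strict linear orders they are already transitive, so taking the transitive closure of a subrelation cannot escape them: $\vartriangleleft_{1,k+1}^w=((\vartriangleleft_{1,k}^w)^E)^T\subseteq\, <_1^T\,=\, <_1$, and similarly for the second coordinate. This closes the induction. There is no real obstacle here; the only thing to be careful about is keeping track of which coordinate the edge/non-edge closure operation feeds into (the closure under $E$ produces pairs in $\vartriangleleft_i^E$ from pairs in $\vartriangleleft_{3-i}$, so one must invoke the correct half of the inductive hypothesis when verifying each added pair).
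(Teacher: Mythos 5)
Your proof is correct and takes essentially the same approach as the paper: the paper sketches the argument in the sentences immediately preceding the observation (base case by definition; inductive step because the realizer $(<_1,<_2)$ is closed under $E$ and under transitivity), and your write-up simply fills in the details of that induction, including the careful bookkeeping of which coordinate feeds into which under the edge closure.
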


\noindent
For all $k\geq 0$, relations $\vartriangleleft_{1,k}^w$ and $\vartriangleleft_{2,k}^w$  are strict partial orders.
By induction on $k$, it can be shown that incomparability with respect to $\vartriangleleft_{i,k}^w$ for $i\in[2]$ is transitive. 
It follows that $\vartriangleleft_{1,k}^w$ and $\vartriangleleft_{2,k}^w$ are strict weak orders.

\pagebreak[2]
\begin{lemma}\label{lem:triagleleftstrictweakorder}
Relations $\vartriangleleft_{1,k}^w$ and $\vartriangleleft_{2,k}^w$ are strict weak orders for all $k\geq 0$.
\end{lemma}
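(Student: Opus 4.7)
The plan is to proceed by induction on $k\geq 0$. Let $\sim_{i,k}$ denote the incomparability relation of $\vartriangleleft_{i,k}^w$. The base case is essentially immediate: $\vartriangleleft_{1,0}^w=\{(w,v):v\neq w\}$ has the two incomparability classes $\{w\}$ and $V\setminus\{w\}$ and is thus a strict weak order, while $\vartriangleleft_{2,0}^w=\emptyset$ makes every pair of vertices incomparable, so incomparability is trivially transitive. Both relations are irreflexive and vacuously transitive.

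For the inductive step, suppose both $\vartriangleleft_{1,k}^w$ and $\vartriangleleft_{2,k}^w$ are strict weak orders, fix $i\in[2]$, and let $R$ denote the $i$-th component of $(\vartriangleleft_{1,k}^w,\vartriangleleft_{2,k}^w)^{E}$, so that $\vartriangleleft_{i,k+1}^w = R^T$. I would first dispense with the strict partial order property of $R^T$: transitivity is automatic from the closure, and irreflexivity follows from the containment $R\subseteq\,<_i$, which combines Observation~\ref{obs:orderscontained} with the fact that the realizer $(<_1,<_2)$ is closed under $E$, so the pairs added via $D_i^E$ and $D_i^{\not E}$ also land inside $<_i$. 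The substantive work is then to verify that incomparability of $R^T$ is transitive.

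I would split this into two moves. \emph{Move~1:} A short case split on whether $\{u,v\}\in E$ shows that $u,v$ are $R$-comparable iff they are $\vartriangleleft_{i,k}^w$-comparable or $\vartriangleleft_{3-i,k}^w$-comparable. Hence $R$-incomparability equals the common incomparability $\sim_{1,k}\cap\sim_{2,k}$, which is an equivalence relation because the inductive hypothesis makes both $\sim_{1,k}$ and $\sim_{2,k}$ equivalence relations. \emph{Move~2:} To transfer this property to $R^T$, I argue the contrapositive of transitivity of $R^T$-incomparability, namely that whenever $u\,R^T z$ every other vertex $v$ is $R^T$-comparable with $u$ or with $z$. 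Fix an $R$-chain $u=x_0\,R\,x_1\,R\cdots R\,x_n=z$ with $n\geq 1$, which exists because $R^T$ is irreflexive. If $v$ were $R$-incomparable with every $x_j$, then by transitivity of $\sim_R$ from Move~1 all the $x_j$ would sit in a single $\sim_R$-class together with $v$; in particular $x_0\sim_R x_1$, contradicting $x_0\,R\,x_1$. So $v$ is $R$-comparable with some $x_j$, and splicing that step into the appropriate half of the chain gives $u\,R^T v$ (if $x_j\,R\,v$) or $v\,R^T z$ (if $v\,R\,x_j$).

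The main obstacle is Move~2: a generic subrelation of a strict linear order need not have transitive incomparability, so one cannot simply invoke $R^T\subseteq\,<_i$; the chain-splicing argument above essentially uses the stronger conclusion of Move~1 that $\sim_R$ is already an equivalence relation \emph{before} taking the transitive closure. Once Moves~1 and~2 are in place, $\vartriangleleft_{i,k+1}^w=R^T$ has transitive incomparability and is therefore a strict weak order, closing the induction.
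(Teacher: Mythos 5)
Your proof is correct, and it runs the same induction as the paper. Move~1 coincides with the paper's Claim~1 (you state the equality $\sim_R\,=\,\sim_{1,k}^w\cap\,\sim_{2,k}^w$; the paper records only the one-sided containment $\sim_{i,\kappa+1}^w\,\subseteq\,\sim_{1,\kappa}^w\cap\,\sim_{2,\kappa}^w$, but the content is the same). The genuine difference is Move~2: the paper closes its chain argument with a single-step reroute lemma (Claim~2 there), namely that $y\sim_{i,\kappa+1}^w z$ and $v\,R\,z$ imply $v\,R\,y$, which it applies once at the final hop of a witnessing chain $x=v_0\,R\cdots R\,v_{l+1}=z$ to produce $x\vartriangleleft_{i,\kappa+1}^w y$ and a contradiction. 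Verifying that reroute requires matching the edge status between $v$ and $y$ with that between $v$ and $z$ so that the $D_i^E$/$D_i^{\not E}$ membership transfers --- a check the paper leaves implicit (it does hold, since otherwise $y\,R\,v\,R\,z$ would already make $y$ and $z$ comparable at level $\kappa+1$). Your Move~2 avoids the reroute entirely by a global observation on the chain: if $v$ were $R$-incomparable with every $x_j$, Move~1 would force all chain vertices into a single $\sim_R$-class, contradicting $x_0\,R\,x_1$; the single $R$-step to or from some $x_j$ is then spliced into the appropriate half of the chain. This trades the paper's pointwise Claim~2 for a cleaner argument that relies only on $\sim_R$ already being an equivalence. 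Both routes carry the same inductive load and are correct; yours packages the closing step a bit more robustly.
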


\begin{proof}
In order to show that a relation is a strict weak order, we have to prove that it is a strict partial order and that incomparability is transitive. 
Let $k\geq 0$.
As $<_1$ and $<_2$ are irreflexive, it follows from $\vartriangleleft_{1,k}^w\, \subseteq\, <_1$ and 
$\vartriangleleft_{2,k}^w\, \subseteq\,  <_2$ (Observation~\ref{obs:orderscontained})
that $\vartriangleleft_{1,k}^w$ and $\vartriangleleft_{2,k}^w$ are irreflexive as well.
Further, the relations $\vartriangleleft_{1,k}^w$ and $\vartriangleleft_{2,k}^w$ are transitive.
Hence, $\vartriangleleft_{1,k}^w$ and $\vartriangleleft_{2,k}^w$ are strict partial orders.
It remains to show that incomparability is transitive. 
Two vertices $x$ and $y$ that are incomparable with respect to $\vartriangleleft_{i,k}^w$, are denoted by $x\sim_{i,k}^w y$. 
Let us consider $k=0$. 
With respect to  $\vartriangleleft_{1,0}^w$, all elements in $V\setminus\{w\}$ are pairwise incomparable and $w$ is incomparable to itself.
Further, all elements in $V$ are pairwise incomparable with respect to $\vartriangleleft_{2,0}^w$.
Thus, for $\vartriangleleft_{1,0}^w$ and $\vartriangleleft_{2,0}^w$ incomparability is transitive.
To show that incomparability is transive for $k>0$ we need the following claims.
\begin{claim}\label{clm:incompvartriangleleft}
 Let $\kappa\geq 0$, $i\in[2]$ and $x,y\in V\!$. If $x\sim_{i,\kappa+1}^w y$, then 
  $x\sim_{1,\kappa}^w y$ and $x\sim_{2,\kappa}^w y$.
\end{claim}
\begin{proofofclaim} 
 Let $\kappa\geq 0$, $i\in[2]$ and $x,y\in V\!$.
 Without loss of generality, suppose that $i=1$ and that $x\sim_{1,\kappa+1}^w y$.
 For a contradiction let us assume that
 $x$ and $y$ are comparable with respect to $\vartriangleleft_{1,\kappa}^w$ or $\vartriangleleft_{2,\kappa}^w$.
 If $x$ and $y$ are comparable with respect to $\vartriangleleft_{1,\kappa}^w$,
 then it follows directly that $x$ and $y$ are comparable with respect to $\vartriangleleft_{1,\kappa+1}^w$,
 since $\vartriangleleft_{1,\kappa}\,\subseteq\, \vartriangleleft_{1,\kappa+1}^w$, and we have a contradiction.
 Thus, suppose  $x$ and $y$ are comparable with respect to $\vartriangleleft_{2,\kappa}^w$.
 Then $x$ and $y$ are also comparable with respect to $(\vartriangleleft_{1,\kappa}^w)^E$, and therefore also with respect to 
 $((\vartriangleleft_{1,\kappa}^w)^{E})^T\!=\,\vartriangleleft_{1,\kappa+1}^w$, a contradiction.
\end{proofofclaim}
\begin{claim}\label{clm:incompdottriangleleft}
  Let $\kappa\geq 0$, $i\in[2]$ and $y,z\in V\!$. Further, let $\vartriangleleft_{1,\kappa}^w$ and $\vartriangleleft_{2,\kappa}^w$ be strict weak orders, 
  and suppose $y\sim_{i,\kappa+1}^w z$.
  Then for all vertices $v\in V$ the following holds: If $v(\vartriangleleft_{i,\kappa}^w)^E z$, then  $v(\vartriangleleft_{i,\kappa}^w)^E y$.
\end{claim}
\begin{proofofclaim}
 Let $\kappa\geq 0$, $i\in[2]$ and $v,y,z\in V\!$.
 Let $\vartriangleleft_{1,\kappa}^w$ and $\vartriangleleft_{2,\kappa}^w$ be strict weak orders, 
  and suppose that $y\sim_{i,\kappa+1}^w z$ and $v(\vartriangleleft_{i,\kappa}^w)^E z$.
 Without loss of generality, assume $i=1$.
 Relation ${(\vartriangleleft_{1,\kappa}^w)^E}$ contains only pairs that are in $\vartriangleleft_{1,\kappa}^w$, in
 $D_{1,\kappa}^E$ or in $D_{1,\kappa}^{\not E}$.
Therefore, $v(\vartriangleleft_{1,\kappa}^w)^E z$ implies that either $v\vartriangleleft_{1,\kappa}^w z$, 
$z\vartriangleleft_{2,\kappa}^w v$ or $v\vartriangleleft_{2,\kappa}^w z$.
If we have $v\vartriangleleft_{1,\kappa}^w z$, then we also have $v\vartriangleleft_{1,\kappa}^w y$, 
as $y$ and $z$ are incomparable with respect to $\vartriangleleft_{1,\kappa}^w$ by Claim~\ref{clm:incompvartriangleleft}
and $\vartriangleleft_{1,\kappa}^w$ is a strict weak order.
Analogously, $z\vartriangleleft_{2,\kappa}^w v$ and $v\vartriangleleft_{2,\kappa}^w z$ imply
$y\vartriangleleft_{2,\kappa} v$ and $v\vartriangleleft_{2,\kappa}^w y$, respectively.
Hence, in each of the cases we obtain
$v(\vartriangleleft_{1,\kappa}^w)^E y$.
\end{proofofclaim}

\noindent
Now, let us assume there exists a $k>0$ such that incomparability is not transitive for $\vartriangleleft_{1,k}^w$ or $\vartriangleleft_{2,k}^w$,
and suppose $k$ is minimal.
Without loss of generality, assume that incomparability is not transitive for $\vartriangleleft_{1,k}^w$.
Consequently, there exist vertices $x,y,z\in V$ such that $x\sim_{1,k} y$, $y\sim_{1,k} z$ and $x\not \sim_{1,k} z$.
Hence, $x$ and $z$ are comparable, which means $x\vartriangleleft_{1,k}^w z$ or $z\vartriangleleft_{1,k}^w x$. 
Without loss of generality, suppose $x\vartriangleleft_{1,k}^w z$.
Since $\vartriangleleft_{1,k}^w$ is the transitive closure of $(\vartriangleleft_{1,k-1}^w)^E$, 
there exists an $l\geq 0$ and $v_0,v_1,\dots,v_{l+1}$ such that 
\begin{align*}
x&=v_0(\vartriangleleft_{1,k-1}^w)^E \dots(\vartriangleleft_{1,k-1}^w)^E v_l(\vartriangleleft_{1,k-1}^w)^E v_{l+1}=z.\\
\intertext{%
As we know that the relations $\vartriangleleft_{1,k-1}^w$ and $\vartriangleleft_{2,k-1}^w$ are strict weak orders, 
that $y\sim_{1,k} z$, and that ${v_l(\vartriangleleft_{1,k-1}^w)^E z}$,
we can apply Claim~\ref{clm:incompdottriangleleft}. We obtain that $v_l\vartriangleleft_{1,k-1}^E y$.
Thus, we have}
x&=v_0(\vartriangleleft_{1,k-1}^w)^E\dots(\vartriangleleft_{1,k-1}^w)^E v_{l}(\vartriangleleft_{1,k-1}^w)^E y,
\end{align*}
and therefore, $x\vartriangleleft_{1,k}^w y$, a contradiction.
\end{proof}

\begin{corollary}\label{cor:triagleleftstrictweakorder}
Relations $\vartriangleleft_{1}^w$ and $\vartriangleleft_{2}^w$ are strict weak orders.
\end{corollary}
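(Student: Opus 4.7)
The plan is to derive the corollary as an immediate specialization of Lemma~\ref{lem:triagleleftstrictweakorder}. Recall from the construction that $\vartriangleleft_i^w := \vartriangleleft_{i,m}^w$ for the index $m \geq 0$ at which the sequence of pairs $(\vartriangleleft_{1,k}^w,\vartriangleleft_{2,k}^w)_{k\geq 0}$ stabilizes (such an $m$ exists because the relations are subsets of the finite set $V \times V$ and the sequence is monotonically increasing in both coordinates). Thus $\vartriangleleft_1^w$ and $\vartriangleleft_2^w$ are simply two particular members of the families whose members Lemma~\ref{lem:triagleleftstrictweakorder} has just shown to be strict weak orders.

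Accordingly, my proof will consist of a single sentence: instantiate Lemma~\ref{lem:triagleleftstrictweakorder} at $k = m$ to obtain that $\vartriangleleft_{1,m}^w = \vartriangleleft_1^w$ and $\vartriangleleft_{2,m}^w = \vartriangleleft_2^w$ are strict weak orders, which is precisely the statement of the corollary. There is no genuine obstacle here, since all the work (irreflexivity and transitivity of the underlying relations, and the delicate inductive argument showing that incomparability remains transitive after closing under $E$ and taking the transitive closure) has already been carried out in the proof of Lemma~\ref{lem:triagleleftstrictweakorder}. The corollary exists only to record the conclusion for the stable limit of the sequence, which is what subsequent arguments will refer to.
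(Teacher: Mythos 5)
Your proof is correct and matches the paper exactly: the corollary is indeed an immediate specialization of Lemma~\ref{lem:triagleleftstrictweakorder} at $k=m$, where $m$ is the stabilization index defining $\vartriangleleft_i^w := \vartriangleleft_{i,m}^w$, and the paper accordingly states the corollary without a separate proof.
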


\noindent
According to Corollary~\ref{cor:triagleleftstrictweakorder},
incomparability with respect to $\vartriangleleft_{i}^w$ is an equivalence relation.
Each equivalence class of this equivalence relation is of size 1, 
as every equivalence class of size at least 2 would be a module.
As a consequence, we obtain Theorem~\ref{thm:stictlinearorders}.

\begin{theorem}\label{thm:stictlinearorders}
Relations $\vartriangleleft_{1}^w$ and $\vartriangleleft_{2}^w$ are strict linear orders.
\end{theorem}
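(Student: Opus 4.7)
By Corollary~\ref{cor:triagleleftstrictweakorder} each $\vartriangleleft_i^w$ is already a strict weak order, so the task reduces to showing that its incomparability relation $\sim_i^w$ is the identity. The plan is to prove that every equivalence class of $\sim_i^w$ is a module of $G$; since $G$ is prime, every module is a singleton or all of $V$, and the latter will be excluded by the fact that $w$ is comparable to every other vertex---directly in $\vartriangleleft_1^w$ by the definition of $\vartriangleleft_{1,0}^w$, and in $\vartriangleleft_2^w$ via Claim~\ref{clm:incompvartriangleleft}, which at the fixed-point stage $m$ couples the two incomparability relations so that $\sim_1^w$ and $\sim_2^w$ coincide.

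To show that an equivalence class $C$ of $\sim_i^w$ is a module, I will pick $u,u'\in C$ and $v\notin C$ and verify $\{u,v\}\in E\iff\{u',v\}\in E$. Because $v\notin C$ it is comparable with $u$ (and with $u'$) under $\vartriangleleft_i^w$; because $\vartriangleleft_i^w$ is a strict weak order and $u\sim_i^w u'$, the vertex $v$ must lie on the same side of the two, say $v\vartriangleleft_i^w u$ and $v\vartriangleleft_i^w u'$. If the edges were to disagree---say $\{u,v\}\in E$ but $\{u',v\}\notin E$---then the closure-under-$E$ rules, valid at the fixed point because $(\vartriangleleft_1^w,\vartriangleleft_2^w)=((\vartriangleleft_1^w,\vartriangleleft_2^w)^E)^T$, yield $u\vartriangleleft_{3-i}^w v$ and $v\vartriangleleft_{3-i}^w u'$, hence $u\vartriangleleft_{3-i}^w u'$ by transitivity. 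This contradicts Claim~\ref{clm:incompvartriangleleft} (applied with $\kappa=m$), which propagates the $\sim_i^w$-incomparability of $u$ and $u'$ to $\sim_{3-i}^w$-incomparability. The opposite configuration $u,u'\vartriangleleft_i^w v$ is handled identically, so $C$ is indeed a module.

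The one delicate point is the coordinated use of the $E$-closure and the transitive closure at the fixed-point stage: the $E$-closure must be invoked to convert an isolated edge-asymmetry into two comparabilities under the dual relation $\vartriangleleft_{3-i}^w$ (one above $v$, one below), and transitivity must then chain them through $v$ to produce the decisive comparability $u\vartriangleleft_{3-i}^w u'$. The remaining ingredients---the strict-weak-order inference that pulls $v$ to the same side of $u$ and $u'$, and the propagation of incomparability across the pair of relations---are direct applications of, respectively, the preliminaries on strict weak orders and Claim~\ref{clm:incompvartriangleleft}. Combining the module conclusion with primality of $G$ and the observation about $w$ yields that every class of $\sim_i^w$ is a singleton, so $\vartriangleleft_i^w$ is a strict linear order.
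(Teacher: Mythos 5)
Your proof is correct and follows essentially the same strategy as the paper: show that each incomparability class of $\vartriangleleft_i^w$ is a module of $G$, then invoke primality to conclude it is a singleton, ruling out the class being $V$ by using that $w$ is comparable to every other vertex. The only cosmetic difference is in how the module contradiction is closed---the paper applies the $E$-closure rule once more to transfer the derived comparability in $\vartriangleleft_{3-i}^w$ back to a comparability in $\vartriangleleft_i^w$, while you invoke Claim~\ref{clm:incompvartriangleleft} at the fixed-point stage to show $\sim_i^w$-incomparability entails $\sim_{3-i}^w$-incomparability; both reach the same contradiction.
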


\begin{proof}
 Let us assume that $\vartriangleleft_{1}^w$ is not a strict linear order. 
 Since $\vartriangleleft_{1}^w$ is a strict weak order by Corollary~\ref{cor:triagleleftstrictweakorder}, 
 there exist two distinct vertices $u,v$ such that $u\sim_1 v$, i.e., $u$ and $v$ are incomparable regarding $\vartriangleleft_{1}^w$.
 Hence, the equivalence class $u\modout_{\sim_1}$ contains at least two elements. 
 In the following we prove that $u\modout_{\sim_1}$ is a module.
 Let us assume $u\modout_{\sim_1}$ is not a module. Then there exists a vertex $z\not \in u\modout_{\sim_1}$
 and vertices $x,y\in u\modout_{\sim_1}$ such that $z$ and $x$ are adjacent and $z$ and $y$ are not adjacent.
 As $\vartriangleleft_{1}^w$ is a strict weak order, we either have $z\vartriangleleft_{1}^w x$ and $z\vartriangleleft_{1}^w y$, or
 $x\vartriangleleft_{1}^w z$ and $y\vartriangleleft_{1}^w z$.
 Let us assume $z\vartriangleleft_{1}^w x$ and $z\vartriangleleft_{1}^w y$. The other case can be shown analogously.
  Since there is an edge between $z$ and $x$ and no edge between $z$ and $y$, 
  and $(\vartriangleleft_{1}^w,\vartriangleleft_{2}^w)$ is closed under edge relation $E$,
  we have $x\vartriangleleft_{2}^w z$ and $z \vartriangleleft_{2}^w y$.
 Therefore, we must also have $x\vartriangleleft_{2}^w y$, by transitivity of $\vartriangleleft_{2}^w$.
 As $(\vartriangleleft_{1}^w,\vartriangleleft_{2}^w)$ is closed under $E$, we obtain
 that $x\vartriangleleft_{1}^w y$ or $y\vartriangleleft_{1}^w x$.
 Hence, $x$ and $y$ are comparable with respect to $\vartriangleleft_{1}^w$, a contradiction.
 Consequently,   $u\modout_{\sim_1}$ is a module with $|u\modout_{\sim_1}|\geq 2$.
 Clearly, $u\modout_{\sim_1}$ cannot be the vertex set $V$ since we know $w\vartriangleleft_{1}^w v$ for all $v\not=w$, 
 where $w$ is the initial vertex. 
 Thus, $u\modout_{\sim_1}$ is a non-triv\-ial module, a contradiction to $G$ being prime.
 
 Similarly we can prove that $\vartriangleleft_{2}^w$ is a strict linear order. 
 To show that a module $u\modout_{\sim_2}$ with $|u\modout_{\sim_2}|\geq 2$ for $u\in V$ cannot be the vertex set $V\!$,
 we argue as follows:
 Since $w\vartriangleleft_{1}^w v$ for all $v\in V$ with $v\not=w$ and  $(\vartriangleleft_{1}^w,\vartriangleleft_{2}^w)$ is closed under $E$,
 vertex $w$ is comparable to all $v\not=w$ with respect to $\vartriangleleft_{2}^w$.
 Hence, the equivalence relation $\sim_2$ has at least two equivalence classes.
\end{proof}

\begin{corollary}\label{cor:gleichheit}
	We have  $\vartriangleleft_{1}^w\, =\, <_1$ and $\vartriangleleft_{2}^w\, =\,  <_2$.
\end{corollary}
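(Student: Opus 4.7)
The proof of Corollary~\ref{cor:gleichheit} is essentially a routine consequence of what has already been established, so the plan is quite short. My goal is to combine the containment of the constructed relations in the realizer with the fact that the constructed relations are themselves strict linear orders on the same vertex set, and then invoke the general fact that a strict linear order contained in another strict linear order on the same finite set must coincide with it.

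First, I would observe that Observation~\ref{obs:orderscontained} gives $\vartriangleleft_{1,k}^w \,\subseteq\, <_1$ and $\vartriangleleft_{2,k}^w \,\subseteq\, <_2$ for every $k \geq 0$. Since $\vartriangleleft_i^w = \vartriangleleft_{i,m}^w$ for some fixed $m$, this immediately yields $\vartriangleleft_1^w \,\subseteq\, <_1$ and $\vartriangleleft_2^w \,\subseteq\, <_2$.

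Next, I would invoke Theorem~\ref{thm:stictlinearorders}, which tells us that $\vartriangleleft_1^w$ and $\vartriangleleft_2^w$ are themselves strict linear orders on $V$. The conclusion then follows from the elementary observation that if $\vartriangleleft$ and $<$ are two strict linear orders on the same finite set $V$ with $\vartriangleleft\,\subseteq\,<$, then $\vartriangleleft\,=\,<$: indeed, given two distinct vertices $u,v \in V$, exactly one of $u \vartriangleleft v$ or $v \vartriangleleft u$ holds, and by containment this forces exactly the same direction in $<$; conversely, if $u < v$ but not $u \vartriangleleft v$, then $v \vartriangleleft u$, which by containment gives $v < u$, contradicting irreflexivity of $<$. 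Applying this to the pairs $(\vartriangleleft_1^w, <_1)$ and $(\vartriangleleft_2^w, <_2)$ yields the claim.

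There is no real obstacle here; all the substantive work has already been done in Lemma~\ref{lem:triagleleftstrictweakorder} and Theorem~\ref{thm:stictlinearorders}, where the primality of $G$ was used to rule out non-trivial incomparability classes. The present corollary is simply the packaging step that translates ``strict linear order contained in the realizer'' into ``equal to the realizer''.
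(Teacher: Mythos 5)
Your proof is correct and matches the intended argument: the paper states this as a corollary with no explicit proof precisely because it follows immediately from Observation~\ref{obs:orderscontained} (containment) and Theorem~\ref{thm:stictlinearorders} (both relations are strict linear orders), together with the standard fact that one strict linear order contained in another on the same set must equal it. One minor remark: finiteness is not actually needed for that last fact, though invoking it does no harm.
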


\noindent
The relations  $\vartriangleleft_{1}^w$ and $\vartriangleleft_{2}^w$ are definable in 
fixed-point logic, i.e.,
there are
$\FP$-for\-mu\-las $\varphi_{\vartriangleleft_{1}}(x,y,y')$ and $\varphi_{\vartriangleleft_{2}}(x,y,y')$
such that for all prime permutation graphs $G=(V,E)$ and all $w,v,v'\in V$ we have\vspace{-1mm}
\begin{align*}
G\models \varphi_{\vartriangleleft_{i}}[w,v,v'] \iff v\vartriangleleft_{i}^w v'\!.
\end{align*}
In order to define $\varphi_{\vartriangleleft_{i}}$ we use a simultaneous inflationary fixed-point operator. 
Within this fixed-point operator, two binary relational variables $X_1$ and $X_2$ are used
to create the strict linear orders $\vartriangleleft_{1}^w$ and $\vartriangleleft_{2}^w$.
Let $X_1^k$ and $X_2^k$ be the relations that we obtain after the $k$th iteration within the simultaneous fixed-point operator.
We can design the operator such that $X_1^k$ and $X_2^k$ are precisely $\vartriangleleft_{1,k}^w$ and $\vartriangleleft_{2,k}^w$.
Since the transitive closure and the closure under the edge relation are definable in $\FP$, this operator is definable in $\FP$.

As a consequence of Corollary~\ref{cor:gleichheit}, we obtain the following:
\begin{corollary} Let  $\varphi(x,y,y'):=\varphi_{\vartriangleleft_{1}}(x,y,y') \lor y=y'$.
Then the $\FP$-for\-mu\-la $\varphi$ defines orders on the class of prime permutation graphs.
\end{corollary}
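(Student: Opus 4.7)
The plan is to exhibit, for each prime permutation graph $G$, a single vertex $w \in V(G)$ such that the parameter assignment $x \mapsto w$ makes $\varphi[G,w;y,y']$ a (non-strict) linear order on $V(G)$. This requires only that I assemble the pieces established earlier in the section.

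First, I note what $\varphi[G,w;y,y']$ computes: by construction of $\varphi_{\vartriangleleft_1}$, the relation $\varphi_{\vartriangleleft_1}[G,w;y,y']$ is exactly $\vartriangleleft_1^w$, so $\varphi[G,w;y,y'] = \vartriangleleft_1^w \cup \{(v,v) \mid v \in V(G)\}$, i.e., the reflexive closure of $\vartriangleleft_1^w$. Recall that a linear order in the sense of Section~\ref{chp:prelims} is precisely the reflexive companion of a strict linear order. Hence the statement reduces to: I can choose $w \in V(G)$ so that $\vartriangleleft_1^w$ is a strict linear order on $V(G)$.

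To choose $w$, I use the assumption that $G$ is a permutation graph: fix any realizer $(<_1, <_2)$ of $G$ and let $w$ be the $<_1$-minimum vertex. Then the hypotheses of the subsection (that $(<_1,<_2)$ is a realizer and $w$ is the first element of $<_1$) are satisfied, so Corollary~\ref{cor:gleichheit} yields $\vartriangleleft_1^w = \,<_1$. In particular $\vartriangleleft_1^w$ is a strict linear order on $V(G)$, and hence $\varphi[G,w;y,y']$ is a linear order on $V(G)$ as required. Since $G$ was an arbitrary prime permutation graph, this shows $\varphi$ defines orders on the class of prime permutation graphs.

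No real obstacle arises: the entire technical content lives in Corollary~\ref{cor:gleichheit} and Theorem~\ref{thm:stictlinearorders}, which together guarantee that starting the recursive construction at the $<_1$-minimum vertex of any realizer recovers the realizer itself. The only thing to be careful about is matching conventions, namely that \emph{linear order} in this paper means the reflexive, transitive, antisymmetric, total relation, which is exactly what the added disjunct $y = y'$ supplies on top of the strict linear order $\vartriangleleft_1^w$.
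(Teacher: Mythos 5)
Your proof is correct and takes essentially the same approach as the paper, which presents the statement as an immediate consequence of Corollary~\ref{cor:gleichheit}: choose the $<_1$-minimum vertex $w$ of any realizer $(<_1,<_2)$, observe that $\vartriangleleft_1^w = {<_1}$ is a strict linear order, and note that the disjunct $y=y'$ converts it into the (reflexive) linear order required by the paper's definition. You also correctly identified the only delicate point — the convention that ``linear order'' in the paper is the reflexive companion of a strict linear order — and resolved it.
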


\noindent
Thus, the class of prime permutation graphs admits $\FP$-de\-fin\-able orders.
Since the class of permutation graphs is closed under taking induced subgraphs, we can apply Corollary~\ref{cor:moddecthm}.
As a result we obtain the following theorem:
\begin{theorem}
 The class of permutation graphs admits $\FPC$-de\-fin\-able canonization. 
\end{theorem}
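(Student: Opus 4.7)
The plan is to combine the results just established in Section~\ref{sec:defineRealizersinFPC} with the variant of the Modular Decomposition Theorem given in Corollary~\ref{cor:moddecthm}. This is essentially a matter of checking that the hypotheses of that corollary are met by the class of permutation graphs.

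First, I would verify the closure property: the class of permutation graphs is closed under taking induced subgraphs, because if $(<_1,<_2)$ is a realizer of $G$ and $W\subseteq V(G)$, then the restrictions of $<_1$ and $<_2$ to $W$ form a realizer of $G[W]$. Next, I would invoke the work of Section~\ref{sec:defineRealizersinFPC}, which produced an $\FP$-formula $\varphi(x,y,y')$ with the property that for every prime permutation graph $G$ and every vertex $w$ of $G$, the relation $\varphi[G,w;y,y']$ is a strict linear order on $V(G)$ (in fact the first coordinate of a realizer of $G$, by Theorem~\ref{thm:stictlinearorders} together with Corollary~\ref{cor:gleichheit}). To fit the definition of "admits $\FP$-definable orders", I would recall that $\FP\leq \FPC$ and that taking the reflexive closure $\varphi(x,y,y')\lor y=y'$ turns the strict linear order into a linear order; since every prime permutation graph $G$ contains at least one vertex $w$ that may serve as a parameter, for every such $G$ there is a parameter witnessing the definition. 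Thus the class of prime permutation graphs admits $\FPC$-definable orders.

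Having verified both hypotheses, I would apply Corollary~\ref{cor:moddecthm} directly to conclude that the class of permutation graphs admits $\FPC$-definable canonization. No further recursion on modular decomposition is needed here, because the corollary has already packaged the entire inductive construction of Section~\ref{sec:moddecompuntersection} into a black box that takes an $\FPC$-order-definition on the prime members of the class and returns an $\FPC$-canonization of the whole class.

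There is no substantial obstacle at this stage: the core combinatorial and logical effort was already spent in constructing the relations $\vartriangleleft_1^w,\vartriangleleft_2^w$ inside $\FP$ via a simultaneous inflationary fixed-point (alternately closing under $E$ and under transitivity, starting from the trivial $\vartriangleleft_{1,0}^w,\vartriangleleft_{2,0}^w$) and in showing, via the strict-weak-order arguments of Lemma~\ref{lem:triagleleftstrictweakorder} and Theorem~\ref{thm:stictlinearorders}, that primeness of $G$ forces these strict weak orders to be linear. The only minor point to double-check is that $\FPC$-definable canonization of permutation graphs really follows from the corollary applied with $\FP$- (rather than $\FPC$-) definable orders on the prime class; this is immediate from $\FP\leq \FPC$.
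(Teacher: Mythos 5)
Your proposal is correct and follows the same route as the paper: observe that the class of permutation graphs is closed under induced subgraphs, invoke Section~\ref{sec:defineRealizersinFPC} to conclude that the class of prime permutation graphs admits $\FP$-definable (hence $\FPC$-definable) orders, and apply Corollary~\ref{cor:moddecthm}.

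One misstatement is worth flagging, though it does not affect the argument. You write that for \emph{every} vertex $w$ of a prime permutation graph $G$, the relation $\vartriangleleft_1^w$ is a strict linear order (equal to the first coordinate of a realizer). That is not what Theorem~\ref{thm:stictlinearorders} and Corollary~\ref{cor:gleichheit} establish; they are proved under the standing assumption, introduced just after Observation~\ref{obs:orderscontained}, that $w$ is the least element of $<_1$ for some realizer $(<_1,<_2)$ of $G$. For other choices of $w$ the relation $\vartriangleleft_1^w$ need not even be irreflexive --- its transitive closure can contain a directed cycle (one can check this explicitly for $P_5$ with $w$ taken to be the middle vertex, which is not the minimum of any of the four realizer coordinates). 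Since a prime permutation graph with more than four vertices has at most four vertices eligible to be such a minimum, the universal quantifier over $w$ is false in general. Fortunately, the notion of a formula that \emph{defines orders} only requires \emph{existence} of a witnessing tuple of parameters, and you do invoke exactly that weaker statement in the next sentence, so the application of Corollary~\ref{cor:moddecthm} is sound.
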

\begin{corollary}
 $\FPC$ captures $\PTIME$ on the class of permutation graphs.
\end{corollary}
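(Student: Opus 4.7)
The plan is to derive the corollary as an immediate consequence of the preceding theorem together with the general capturing proposition from the preliminaries (Section~\ref{sec:canonization}). Specifically, that proposition asserts that whenever a graph class $\CC$ admits $\FPC$-definable parameterized canonization, then $\FPC$ captures $\PTIME$ on $\CC$. The preceding theorem tells us that the class of permutation graphs admits $\FPC$-definable canonization, which is the special case of parameterized canonization where the parameter tuple is empty, so the hypothesis of the proposition is satisfied.

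Thus my proof would consist of a single appeal: by the preceding theorem, the class of permutation graphs admits $\FPC$-definable canonization, which is in particular an $\FPC$-definable parameterized canonization; applying the proposition from Section~\ref{sec:canonization} then yields that $\FPC$ captures $\PTIME$ on the class of permutation graphs.

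There is no real obstacle here, since all the substantive work has already been carried out: the hard step was establishing $\FP$-definable orders on prime permutation graphs via the relations $\vartriangleleft_1^w$ and $\vartriangleleft_2^w$ (Theorem~\ref{thm:stictlinearorders} and Corollary~\ref{cor:gleichheit}), which via Corollary~\ref{cor:moddecthm} yields $\FPC$-definable canonization of the whole class. The capturing result is then a purely formal consequence of the Immerman--Vardi theorem being pulled back along the canonization transduction, which is exactly what the cited proposition packages for us.
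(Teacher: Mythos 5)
Your proof is correct and matches the paper's approach exactly: the paper leaves the corollary without explicit proof because it follows immediately from the preceding theorem (FPC-definable canonization of permutation graphs) via the proposition in Section~\ref{sec:canonization} stating that FPC-definable parameterized canonization of a graph class implies that FPC captures PTIME on that class. Your observation that plain canonization is the empty-parameter special case of parameterized canonization is exactly the right way to bridge the statement to the proposition.
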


\section{Conclusion}
So far, little is known about logics capturing $\PTIME$ on classes of graphs that are closed under induced subgraphs.
This paper makes a contribution in this direction.
We provide a tool, the Modular Decomposition Theorem, which 
simplifies proving that canonization is definable on such graph classes.
Therefore, it also simplifies proving that $\PTIME$ can be captured on them. 
By means of the Modular Decomposition Theorem, 
we have shown in this paper that
there exists an $\FPC$-can\-on\-iza\-tion of the class of permutation graphs. Thus,
$\FPC$ captures $\PTIME$ on this class of graphs. 
The Modular Decomposition Theorem can also be applied to show that 
the class of chordal comparability graphs admits $\FPC$-de\-fin\-able canonization (see~\cite{diss}).
It follows that $\FPC$ captures $\PTIME$ on the class of chordal comparability graphs 
and that there exists a poly\-no\-mi\-al-time algorithm for chordal comparability graph canonization.
The author is optimistic that the Modular Decomposition Theorem can be used to obtain new results on further classes of graphs.

It would be interesting to find out whether a tool similar to the Modular Decomposition Theorem
can also be obtained for split (or join) decomposition. 
Such a ``Split Decomposition Theorem'' could be used to prove that $\FPC$ captures $\PTIME$ on 
the class of circle graphs, which are a generalization of permutation graphs and well-struc\-tured with respect to split decompositions.

Within this paper, we have also shown that there exists a log\-a\-rith\-mic-space algorithm 
that computes the modular decomposition tree of a graph, and presented a 
variation of the Modular Decomposition Theorem for polynomial time.
In the context of algorithmic graph theory, 
where modular decomposition has been established as a fundamental tool, 
these should find various applications.
As a first application, we directly obtained that 
cograph recognition and cograph canonization is computable in logarithmic space.

\section*{Acknowledgments}
The author would like to thank Frank Fuhlbr\"uck, Martin Grohe, Nicole Schweikardt, Oleg Verbitsky
and the reviewers for helpful comments that contributed to improving the paper.

\bibliographystyle{alpha}
\bibliography{lics-lmcs.bib}
\end{document}